\documentclass[11pt,american]{article}
\usepackage[latin9]{inputenc}
\usepackage[letterpaper]{geometry}
\geometry{verbose,tmargin=2cm,bmargin=2cm,lmargin=2cm,rmargin=2cm}
\setcounter{secnumdepth}{2}
\usepackage{babel}
\usepackage{array}
\usepackage{verbatim}
\usepackage{float}
\usepackage{booktabs}
\usepackage{textcomp}
\usepackage{mathtools}
\usepackage{amsmath}
\usepackage{amsthm}
\usepackage{amssymb}
\usepackage{graphicx}
\usepackage{setspace}
\usepackage[unicode=true,pdfusetitle,
 bookmarks=true,bookmarksnumbered=true,bookmarksopen=true,bookmarksopenlevel=3,
 breaklinks=true,pdfborder={0 0 0},pdfborderstyle={},backref=false,colorlinks=false]
 {hyperref}

\makeatletter

\newcommand*\LyXZeroWidthSpace{\hspace{0pt}}
\providecommand{\tabularnewline}{\\}
\floatstyle{ruled}
\newfloat{algorithm}{tbp}{loa}
\providecommand{\algorithmname}{Algorithm}
\floatname{algorithm}{\protect\algorithmname}

\numberwithin{figure}{section}
\theoremstyle{definition}
\newtheorem{defn}{\protect\definitionname}
\theoremstyle{plain}
\newtheorem{lem}{\protect\lemmaname}
\theoremstyle{plain}
\newtheorem{thm}{\protect\theoremname}
\theoremstyle{plain}
\newtheorem{assumption}{\protect\assumptionname}
\theoremstyle{plain}
\newtheorem{prop}{\protect\propositionname}
\theoremstyle{plain}
\newtheorem{fact}{\protect\factname}
\theoremstyle{plain}
\newtheorem*{cor*}{\protect\corollaryname}

\usepackage{amsmath}
\usepackage{listings}
\usepackage{tikz}
\usetikzlibrary {graphs,arrows,patterns,plotmarks,backgrounds,fit}

\def\gpchecktikzversion#1.#2\relax{%
\ifnum#1<2%
  \errmessage{PGF/TikZ version >= 2.0 is required!}%
\fi}
\expandafter\gpchecktikzversion\pgfversion\relax

\def\gnuplot@output@ps{ps} 

\expandafter\def\csname gnuplot@select@driver@pgfsys-dvi.def\endcsname     {ps}
\expandafter\def\csname gnuplot@select@driver@pgfsys-dvipdfm.def\endcsname {pdf}
\expandafter\def\csname gnuplot@select@driver@pgfsys-dvipdfmx.def\endcsname{pdf}
\expandafter\def\csname gnuplot@select@driver@pgfsys-dvips.def\endcsname   {ps}
\expandafter\def\csname gnuplot@select@driver@pgfsys-pdftex.def\endcsname  {pdf}
\expandafter\def\csname gnuplot@select@driver@pgfsys-luatex.def\endcsname  {pdf}
\expandafter\def\csname gnuplot@select@driver@pgfsys-tex4ht.def\endcsname  {html}
\expandafter\def\csname gnuplot@select@driver@pgfsys-textures.def\endcsname{ps}
\expandafter\def\csname gnuplot@select@driver@pgfsys-vtex.def\endcsname    {ps}
\expandafter\def\csname gnuplot@select@driver@pgfsys-xetex.def\endcsname   {pdf}

\ifcsname gnuplot@select@driver@\pgfsysdriver\endcsname
  \edef\gnuplot@output@mode{\csname gnuplot@select@driver@\pgfsysdriver\endcsname}
\else
  \errmessage{The driver \pgfsysdriver\space is not supported by gnuplot-lua-tikz}%
\fi

\ifcsname PackageWarning\endcsname
  \let\gnuplot@packagewarning\PackageWarning
\else
  \def\gnuplot@packagewarning#1#2{\immediate\write-1{Package #1 Warning: #2}}
\fi


%
%
\def\gp@rawimage@pdf#1#2#3#4#5#6{%
  \def\gp@tempa{cmyk}%
  \def\gp@tempb{#1}%
  \ifx\gp@tempa\gp@tempb%
    \def\gp@temp{/CMYK}%
  \else%
    \def\gp@temp{/RGB}%
  \fi%
  \pgf@sys@bp{#4}\pgfsysprotocol@literalbuffered{0 0}\pgf@sys@bp{#5}%
  \pgfsysprotocol@literalbuffered{0 0 cm}%
  \pgfsysprotocol@literalbuffered{BI /W #2 /H #3 /CS \gp@temp}%
  \pgfsysprotocol@literalbuffered{/BPC 8 /F /AHx ID}%
  \pgfsysprotocol@literal{#6 > EI}%
}
\def\gp@rawimage@ps#1#2#3#4#5#6{%
  \def\gp@tempa{cmyk}%
  \def\gp@tempb{#1}%
  \ifx\gp@tempa\gp@tempb%
    \def\gp@temp{4}%
  \else%
    \def\gp@temp{3}%
  \fi%
  \pgfsysprotocol@literalbuffered{0 0 translate}%
  \pgf@sys@bp{#4}\pgf@sys@bp{#5}\pgfsysprotocol@literalbuffered{scale}%
  \pgfsysprotocol@literalbuffered{#2 #3 8 [#2 0 0 -#3 0 #3]}%
  \pgfsysprotocol@literalbuffered{currentfile /ASCIIHexDecode filter}%
  \pgfsysprotocol@literalbuffered{false \gp@temp\space colorimage}%
  \pgfsysprotocol@literal{#6 >}%
}
\def\gp@rawimage@html#1#2#3#4#5#6{%
  \gnuplot@packagewarning{gnuplot-lua-tikz}{%
    \noexpand\gp@rawimage is not implemented for \gnuplot@output@mode\space output.%
  }%
}

\expandafter\let\expandafter\gp@rawimage\csname gp@rawimage@\gnuplot@output@mode\endcsname

\def\gploadimage#1#2#3#4#5{%
  \pgftext[left,bottom,x=#1cm,y=#2cm] {\pgfimage[interpolate=false,width=#3cm,height=#4cm]{#5}};%
}

\def\gp@set@size#1{%
  \def\gp@image@size{#1}%
}

\def\gp@rawimage@#1#2#3#4#5#6#7#8{%
  \tikz@scan@one@point\gp@set@size(#6,#7)\relax%
  \tikz@scan@one@point\pgftransformshift(#2,#3)\relax%
  \pgftext{%
    \pgfsys@beginpurepicture%
    \gp@image@size
    \gp@rawimage{#1}{#4}{#5}{\pgf@x}{\pgf@y}{#8}%
    \pgfsys@endpurepicture%
  }%
}

\def\gprawimage#1#2#3#4#5#6#7#8#9{%
  \ifx&#9&%
    \gp@rawimage@{#1}{#2}{#3}{#4}{#5}{#6}{#7}{#8}%
  \else
    \ifx\gnuplot@output@mode\gnuplot@output@ps
      \gp@rawimage@{#1}{#2}{#3}{#4}{#5}{#6}{#7}{#8}%
    \else
      \gploadimage{#2}{#3}{#6}{#7}{#9}%
    \fi
  \fi
}

%
%

\def\gnuplottexextension@lua{\string tex}
\def\gnuplottexextension@tikz{\string tex}

%
%

\def\gpsetvar#1#2{%
  \expandafter\xdef\csname gp@var@#1\endcsname{#2}
}

\def\gpgetvar#1{%
  \csname gp@var@#1\endcsname %
}

%
%


\def\gpsetlinewidth#1{\pgfsetlinewidth{#1\gpbaselw}}

\def\gpsetlinetype#1{\tikzset{gp path/.style={#1,#1 add}}}

\def\gpsetdashtype#1{\tikzset{gp path/.append style={#1}}}

\def\gpsetpointsize#1{\tikzset{gp point/.style={mark size=#1\gpbasems}}}

\def\gpcolor#1{\tikzset{global #1}}
\tikzset{rgb color/.code={\pgfutil@definecolor{.}{rgb}{#1}\tikzset{color=.}}}
\tikzset{global rgb color/.code={\pgfutil@definecolor{.}{rgb}{#1}\pgfutil@color{.}}}
\tikzset{global color/.code={\pgfutil@color{#1}}}

%
\newif\ifgpscalepoints
\tikzset{gp shift only/.style={%
  \ifgpscalepoints\else shift only\fi%
}}

\def\gp3point#1#2#3{%
  \path[solid#2] plot[only marks,gp point,mark options={gp shift only},#1] coordinates {#3};%
}

%
%
%
%
%
%
%
\newdimen\gpvcharsize
\newdimen\gphcharsize
\def\gpcalccharsize{%
  \pgfinterruptboundingbox%
  \pgfsys@begininvisible%
  \node at (0,0) {%
    \global\gphcharsize=1.05\fontcharwd\font`0%
    \global\gpvcharsize=1.05\fontcharht\font`0%
    \global\advance\gpvcharsize by 1.05\fontchardp\font`g%
  };%
  \pgfsys@endinvisible%
  \endpgfinterruptboundingbox%
}

%
%
%
\def\gpdefrectangularnode#1#2#3{%
  \expandafter\gdef\csname pgf@sh@ns@#1\endcsname{rectangle}%
  \expandafter\gdef\csname pgf@sh@np@#1\endcsname{%
    \def\southwest{#2}%
    \def\northeast{#3}%
  }%
  \pgfgettransform\pgf@temp%
  \expandafter\xdef\csname pgf@sh@nt@#1\endcsname{\pgf@temp}%
  \expandafter\xdef\csname pgf@sh@pi@#1\endcsname{\pgfpictureid}%
}


%
%
\tikzset{gnuplot/.style={%
  >=stealth',%
  line cap=round,%
  line join=round,%
}}

\tikzset{gp node left/.style={anchor=mid west,yshift=-.12ex,line width=0pt}}
\tikzset{gp node center/.style={anchor=mid,yshift=-.12ex,line width=0pt}}
\tikzset{gp node right/.style={anchor=mid east,yshift=-.12ex,line width=0pt}}

\newdimen\gpbasems
\gpbasems=.4pt

\newdimen\gpbaselw
\gpbaselw=.4pt

\colorlet{gpbgfillcolor}{white}

\def\gpsetbgcolor#1{%
  \pgfutil@definecolor{gpbgfillcolor}{rgb}{#1}%
  \tikzset{tight background,background rectangle/.style={fill=gpbgfillcolor},show background rectangle}%
}

\tikzset{gp refnode/.style={coordinate,yshift=.12ex}}

%
%



\tikzset{gp arrow 1/.style={>=latex}}
\tikzset{gp arrow 2/.style={>=angle 90}}
\tikzset{gp arrow 3/.style={>=angle 60}}
\tikzset{gp arrow 4/.style={>=angle 45}}
\tikzset{gp arrow 5/.style={>=o}}
\tikzset{gp arrow 6/.style={>=*}}
\tikzset{gp arrow 7/.style={>=diamond}}
\tikzset{gp arrow 8/.style={>=open diamond}}
\tikzset{gp arrow 9/.style={>={]}}}
\tikzset{gp arrow 10/.style={>={[}}}
\tikzset{gp arrow 11/.style={>=)}}
\tikzset{gp arrow 12/.style={>=(}}

\tikzset{gp mark 0/.style={mark size=.5\pgflinewidth,mark=*}}
\tikzset{gp mark 1/.style={mark=+}}
\tikzset{gp mark 2/.style={mark=x}}
\tikzset{gp mark 3/.style={mark=star}}
\tikzset{gp mark 4/.style={mark=square}}
\tikzset{gp mark 5/.style={mark=square*}}
\tikzset{gp mark 6/.style={mark=o}}
\tikzset{gp mark 7/.style={mark=*}}
\tikzset{gp mark 8/.style={mark=triangle}}
\tikzset{gp mark 9/.style={mark=triangle*}}
\tikzset{gp mark 10/.style={mark=triangle,every mark/.append style={rotate=180}}}
\tikzset{gp mark 11/.style={mark=triangle*,every mark/.append style={rotate=180}}}
\tikzset{gp mark 12/.style={mark=diamond}}
\tikzset{gp mark 13/.style={mark=diamond*}}
\tikzset{gp mark 14/.style={mark=otimes}}
\tikzset{gp mark 15/.style={mark=oplus}}

\tikzset{gp pattern 0/.style={white}}
\tikzset{gp pattern 1/.style={pattern=north east lines}}
\tikzset{gp pattern 2/.style={pattern=north west lines}}
\tikzset{gp pattern 3/.style={pattern=crosshatch}}
\tikzset{gp pattern 4/.style={pattern=grid}}
\tikzset{gp pattern 5/.style={pattern=vertical lines}}
\tikzset{gp pattern 6/.style={pattern=horizontal lines}}
\tikzset{gp pattern 7/.style={pattern=dots}}
\tikzset{gp pattern 8/.style={pattern=crosshatch dots}}
\tikzset{gp pattern 9/.style={pattern=fivepointed stars}}
\tikzset{gp pattern 10/.style={pattern=sixpointed stars}}
\tikzset{gp pattern 11/.style={pattern=bricks}}

\tikzset{gp plot axes/.style=}
\tikzset{gp plot border/.style=}
\tikzset{gp plot 0/.style=smooth}
\tikzset{gp plot 1/.style=smooth}
\tikzset{gp plot 2/.style=smooth}
\tikzset{gp plot 3/.style=smooth}
\tikzset{gp plot 4/.style=smooth}
\tikzset{gp plot 5/.style=smooth}
\tikzset{gp plot 6/.style=smooth}
\tikzset{gp plot 7/.style=smooth}

\tikzset{gp lt axes/.style=dotted}
\tikzset{gp lt border/.style=solid}

\tikzset{gp lt axes add/.style={}}
\tikzset{gp lt border add/.style={}}
\tikzset{gp lt plot 0 add/.style={}}
\tikzset{gp lt plot 1 add/.style={}}
\tikzset{gp lt plot 2 add/.style={}}
\tikzset{gp lt plot 3 add/.style={}}
\tikzset{gp lt plot 4 add/.style={}}
\tikzset{gp lt plot 5 add/.style={}}
\tikzset{gp lt plot 6 add/.style={}}
\tikzset{gp lt plot 7 add/.style={}}
\tikzset{gp lt plot 0/.style={}}
\tikzset{gp lt plot 1/.style={}}
\tikzset{gp lt plot 2/.style={}}
\tikzset{gp lt plot 3/.style={}}
\tikzset{gp lt plot 4/.style={}}
\tikzset{gp lt plot 5/.style={}}
\tikzset{gp lt plot 6/.style={}}
\tikzset{gp lt plot 7/.style={}}

\colorlet{gp lt color axes}{black!30}
\colorlet{gp lt color border}{black}

\def\gpdashlength{\pgflinewidth}
\tikzset{gp dt 0/.style={solid}}
\tikzset{gp dt 1/.style={solid}}
\tikzset{gp dt 2/.style={dash pattern=on 7.5*\gpdashlength off 7.5*\gpdashlength}}
\tikzset{gp dt 3/.style={dash pattern=on 3.75*\gpdashlength off 5.625*\gpdashlength}}
\tikzset{gp dt 4/.style={dash pattern=on 1*\gpdashlength off 2.8125*\gpdashlength}}
\tikzset{gp dt 5/.style={dash pattern=on 11.25*\gpdashlength off 3.75*\gpdashlength on 1*\gpdashlength off 3.75*\gpdashlength}}
\tikzset{gp dt 6/.style={dash pattern=on 5.625*\gpdashlength off 5.625*\gpdashlength on 1*\gpdashlength off 5.625*\gpdashlength}}
\tikzset{gp dt 7/.style={dash pattern=on 3.75*\gpdashlength off 3.75*\gpdashlength on 3.75*\gpdashlength off 11.25*\gpdashlength}}
\tikzset{gp dt 8/.style={dash pattern=on 1*\gpdashlength off 3.75*\gpdashlength on 11.25*\gpdashlength off 3.75*\gpdashlength on 1*\gpdashlength off 3.75*\gpdashlength}}
\tikzset{gp dt solid/.style={solid}}
\tikzset{gp dt axes/.style={dotted}}

\def\gpcoloredlines{%
  \colorlet{gp lt color 0}{red}%
  \colorlet{gp lt color 1}{green}%
  \colorlet{gp lt color 2}{blue}%
  \colorlet{gp lt color 3}{magenta}%
  \colorlet{gp lt color 4}{cyan}%
  \colorlet{gp lt color 5}{yellow}%
  \colorlet{gp lt color 6}{orange}%
  \colorlet{gp lt color 7}{purple}%
}


%
%
\gpcoloredlines
\gpsetpointsize{4}
\gpsetlinetype{gp lt solid}
\gpscalepointsfalse

\usepackage{graphicx}

\usepackage{fancyhdr}
\usepackage{array}
\lstdefinelanguage{algpseudocode}{
keywordstyle=[1]{\keywordstyle},   
keywordstyle=[2]{\operatorstyle},   
keywordstyle=[3]{\typestyle},   
keywordstyle=[4]{\functionstyle},   
identifierstyle={\identifierstyle},   
keywords=[1]{begin,end,output,input,program,procedure,function,subroutine,while,do,next,repeat,until,loop,continue,endwhile,endfor,endloop,   
if,then,else,endif,return},   
literate=
{-}{$-$}1
{^}{$^\wedge$}1
{>}{{$>$\ }}1 
{<}{{$<$\ }}1 
{>=}{{$\geqslant$\ }}1
{<=}{{$\leqslant$\ }}1
{:=}{{$\gets$\ }}1 
{!=}{{$\ne$\ }}1 
{<>}{{$\ne$\ }}1
{->}{{$\;\to\;$}}1 
{&&}{{\keywordstyle and\ }}4 
{{||}}{{\keywordstyle or\ }}3
{;}{\hspace{0.2em};}2 
{,}{\hspace{0.2em},}2, 
}

\newcommand\keywordstyle{\rmfamily\bfseries\upshape} 
\newcommand\operatorstyle{\rmfamily\mdseries\upshape} 
\newcommand\typestyle{\rmfamily\mdseries\upshape} 
\newcommand\functionstyle{\rmfamily\mdseries\scshape}
\newcommand\identifierstyle{\rmfamily\mdseries\itshape}
\newcommand\addkeywords[1]{  
\lstset{morekeywords=[1]{#1}}}

\newcommand\addfunctions[1]{
  \lstset{morekeywords=[4]{#1}}}

\AtBeginDocument{
\addtolength{\abovedisplayskip}{-1ex}
\addtolength{\abovedisplayshortskip}{-1ex}
\addtolength{\belowdisplayskip}{-1ex}
\addtolength{\belowdisplayshortskip}{-1ex}
}

\AtBeginDocument{
  
}

\makeatother

\usepackage[natbib=true,bibstyle=jura2,citestyle=authoryear,backend=bibtex]{biblatex}
\providecommand{\assumptionname}{Assumption}
\providecommand{\corollaryname}{Corollary}
\providecommand{\definitionname}{Definition}
\providecommand{\factname}{Fact}
\providecommand{\lemmaname}{Lemma}
\providecommand{\propositionname}{Proposition}
\providecommand{\theoremname}{Theorem}

\addbibresource{OptimalParticipationReferences.bib}

\begin{document}
\title{A Mechanism for Optimizing Media Recommender Systems}
\author{Brian McFadden\thanks{Adjunct Scholar, Economics Department, University of Miami Herbert
Business School, email: bdm.econ@gmail.com, X: @BMcFadden\_econ}}
\date{June 22, 2024}
\maketitle
\begin{abstract}
\noindent A mechanism is described that addresses the fundamental
trade off between media producers who want to increase reach and consumers
who provide attention based on the rate of utility received, and where
overreach negatively impacts that rate. An optimal solution can be
achieved when the media source considers the impact of overreach in
a cost function used in determining the optimal distribution of content
to maximize individual consumer utility and participation. The result
is a Nash equilibrium between producer and consumer that is also Pareto
efficient. Comparison with the literature on Recommender systems highlights
the advantages of the mechanism. A practical algorithm to generate
the optimal distribution for each consumer is provided.
\end{abstract}

\section{Overview}

In this analysis a media source or service\footnote{Conceptually the media source includes news outlets, broadcast media,
social networks, advertising, online publishers, newsletters, print
publications, etc.} is where content\footnote{The media is not limited to one type, and could be textual, video,
or audio.} flows from producer to consumer in a many to many way or a one to
many way. In a social network, for example, there are potentially
many users producing content and many consuming it, and those who
produce and consume may overlap. Alternatively, a publisher producing
news or specialized content, for example, may consist of a single
producer with potentially many consumers.%

The media source uses a specialized system to control the distribution
of media from producer to consumer, and seeks to optimize the participation
of producer and consumer on this specialized system. Understanding
how to achieve improved participation requires understanding the dynamics
of an utility maximizing consumer as well as the long and short term
benefits and costs to the producer.%
{} This paper provides a new view of those dynamics and interactions,
and lays out an algorithmic procedure for the media source to improve
sustainable participation%
. 

In this section the particulars of the producers and consumers are
further described. In the next section the dynamics of utility maximizing
consumers and the control options of the media source are discussed
and analyzed. A theorem is postulated for the media source to optimize
utility and participation of the consumer. Further in the second section
the objectives of the producer and the media source are analyzed,
and a class of cost functions are derived. A mechanism using the cost
functions is shown to assure an effective Nash equilibrium where the
media source cannot further improve value to the consumer or the producer,
and the equilibrium is further shown to be Pareto efficient%
.%

The third section discusses the implications of the model and algorithm.
The applicability in different scenarios is discussed as well as counter
examples where the mechanism might not work. Applications and comparisons
with relevant Computer Science literature on Recommender systems is
also discussed. 

The final section concludes with a summary of results and additional
thoughts. One outcome of this analysis is that the media source may
need to accept limits on what is distributed to an individual consumer
to achieve optimal long term participation. The results of this analysis
provide a normative step for the media source to identify and implement
those limits. 

\subsection{\label{subsec:Consumers}Consumers }

It is well known from cognitive science research that the ability
for humans to process images and audio is capacity limited and hence
rate limited, \citet{marois2005capacity}. It is also known that visual
and audio rate limits are relatively independent,\footnote{\citet{marois2005capacity}}
and the rate limits vary by individual.\footnote{\citet{marois2005capacity} and \citet{todd2005posterior}}
The complexity of images affects the rate that they can be processed.\footnote{\citet{marois2005capacity}}
However, nothing in the neural or cognitive science literature suggests
the ability to act on different stimuli simultaneously.\footnote{``processing multiple information streams is...a challenge for human
cognition'' \citet{ophir2009cognitive}, and ``...we can hardly
perform two tasks at once'' Marois and Ivanoff (2005)\nocite{marois2005capacity}.} Hence, true multitasking is not possible.\footnote{Although, individuals differ in their ability to task switch \citet{ophir2009cognitive}}

The fact that individuals differ in their rate limits and that rate
limits differ between visual and audio implicitly suggests how different
individuals might be capable of consuming media at different rates
or preferring different media types. This is consistent with studies
by \citet{pewresearchcenter2016themodern} suggesting different individuals
will prefer different media types for the same content. For example,
some individuals will prefer to watch the news, while others will
prefer to read the news.

Thus, following \citet{mcfadden2019atheory}, the decision to consume
additional media from a media source depends on the rate of utility
or value that the consumer expects to receive from that source. Let
this rate be denoted as $q$. Value to the consumer and utility will
be interchangeable in the rest of this paper.

The consumers have a perception of what the rate will be before they
consume additional media. This perception results from past experience
and current knowledge or signals they receive. The past experience
can be characterized by historical observations of different rates.
An implicit distribution of past rates provides the consumer with
a basis for the expected rate. The signals they may receive represent
an advanced notice from the media source or other events that allow
the consumers to alter the anticipated rate from the rate that might
otherwise be expected.

Much of the prior media consumption related economic research\footnote{See for example, \citet{anderson2005marketprovision,ambrus2016eitheror}}
has focused on the accumulation of value or the total utility. The
total and rate are only equivalent in special cases where the media
consumption interval for each media source is fixed.%
{} Here there is no reason for the the choices to have the same consumption
interval, and especially so for more spontaneous real--time decisions.\footnote{\citet{mcfadden2019atheory}}

The quantity of content is measured in units. The relation between
consumer value for a content unit, $c$, and rate, $q$, for a specific
consumer is $q=\frac{c}{\rho}$ where $\rho$ is the time required
for the consumer to consume a unit of content.\footnote{McFadden (2019)}
The value of $\rho$ varies by consumer and media source%
. Here we are focused on an optimization particular for a consumer
and media source. Thus, in this presentation $\rho$ becomes a scale
parameter and can be set to one. Setting $\rho=1$ implies that the
unit of content will correspond with a unit of time. Any analysis
with multiple media sources or consumers will likely require a specific
$\rho$ for each combination of media source and consumer to account
for the differences.

It is assumed that the values the consumer would assign to units or
groups of units can be assessed with reasonable accuracy by the media
source. For example, utilizing machine learning, statistical methods,
deep learning, or knowledge based systems\footnote{\citet{li2019asurvey}, \citet{wu2023personalized}, and \citet{chen2023whenlarge}
provide further details.} and other indicators of preference.%
{} These values can be normalized or transformed as needed. Negative
values are plausible and could indicate consumer disinterest, and
a value of zero would indicate indifference. %

\subsection{Producers}

For each content unit and consumer there is a value, $p$, that goes
to the producer of the content when the content unit is consumed by
the consumer. Negative values are plausible and indicate that the
producer would prefer that the consumer does not consume the unit.
A zero value would mean that the producer is indifferent or simply
gets no value from that consumer.%
{} Producer values can be normalized and transformed as needed. Further
details are described below.

\section{Media Source Dynamics}

Let $D$ be a discrete frequency distribution of content units over
a contiguous set of $p$ and $c$ combinations. Frequency distributions
could be derived from historical rates for combinations of $p$ and
$c$ and refined by other factors like day of week, time of year,
weather, etc. The distribution could also be limited to a certain
range of $p$ and $c$.

The distribution D can be defined by a region, $R$, bounding the
set of $p$ and $c$ combinations, a normalized density function within
the region, $\Phi(R)$, and a scalar, $N$. The scalar, $N$, also
referenced as the distribution volume, is essentially the multiplier
between the frequency distribution and the normalized density. So,
the distribution, $D_{R}$, represents a tuple $\{R,\Phi,N\}$. Distributions
may be combined with $D'=D_{1}+D_{2}$ corresponding to the additive
combination of frequencies of the distributions $D_{1}$ and $D_{2}$,
thus $R'=R_{1}\cup R_{2}$ and $N'=N_{1}+N_{2}$ and $\Phi'$ is the
normalization of the resulting frequency distribution. Much of the
math used to prove the theorems in this paper revolve around the combination
of frequency distributions. Additional details are in the appendix.

Let $D_{all}$ be the distribution over all possible combinations
of $p$ and $c$. The distribution $D_{all}$ can be viewed as a prediction
of all the content units potentially available to a consumer over
a period of time, or in some situations the candidate content available
to the consumer.

Let $Q$ be a function operating on region $R$ with distribution
$D_{R}$ to produce an anticipated $q$ for $D_{R}$,

\[
q_{\,_{R}}=Q(D_{R})
\]

There may be alternative options for $Q,$ but overall the most logical
and simplest is the expected value function. For example, let $r$
be a point in $R$ so $Q(D_{R})=E(c|D_{R})=\sum_{\forall r\in R}\phi_{r}c_{r}$
where $c_{r}$ is the value of $c$ and $\phi_{r}$ is the density
at point $r$.%

As in McFadden (2019) consumers implicitly rank available media sources
by their anticipated $q$ and consume from the top ranked media source
first. Media is consumed until the additional time needed to consume
more media would generate a higher rate of utility acquisition from
a non-media activity.\footnote{As in \citet{evans1972onthe} it is assumed that utility is derived
from activities (be they physical or mental) and not explicitly from
the properties or characteristics of goods. This contrasts with the
intermediate goods in \citealt{becker1965atheory} or \citealt{deserpa1971atheory}
that are produced and consumed. See \citealt{jara-diaz2003onthe}
for additional details. }%
{} The threshold for the individual's total media consumption per period
varies with fluctuations in available media and availability of other
activities. Logically, a larger $q$ for a media source means the
consumer will consume more because the rate of utility acquisition
from the media source would more likely exceed the rates of other
media sources and alternative activities. Thus, the potential participation
a consumer has for consuming from a media source increases with $q$.

Specifically, let $M(q)$ be the predicted potential participation
in terms of units consumed for a period. $M$ is monotonically increasing
and specific to each consumer. It could be estimated from past observations
and specific to external variables like day of week, time of year,
weather, etc. The potential participation for the distribution $D$
is thus $M\left(Q(D)\right)$ or for notational simplicity $M(D$). 

Consider a process using distribution $D$ to generate the content
made available to the consumer. The value $M(D)$ then reflects the
potential participation of the consumer who is anticipating a process
with distribution $D$.%

Let $N(D)$ be the function that returns the volume, or simply the
number of content units represented by $D$. Further let $W(D)$ be
defined as:

\begin{equation}
W(D)=\begin{cases}
N(D) & if\,\,M(D)>N(D)\\
M(D) & if\,\,M(D)<N(D)
\end{cases}\label{eq:ActualParticipation}
\end{equation}

While $M(D)$ indicates the units of potential participation, $W(D)$
will be the actual participation. If the region for $D$ is small,
but has points with large $c$, then $W$ is limited by $N$, the
number of units. The consumer would potentially consume more, but
is limited by the quantity available, $N$. On the other hand, if
the region is large, and there are lots of points with low $c$ that
push $q$ and $M$ lower, $W$ for the region is limited by $M$.
There are plenty of content units, but the consumer is only motivated
to consume $M$ units. %
{} 

Let $\Theta=\{D_{0},D_{1}\cdots D_{j}\cdots\}$ be a sequence of distributions
where $D_{0}$ is the starting distribution. If the sequence is selected
so that $N(D_{j})$ increases with $j$ and $Q(D_{j})$ decreases
with $j$ (implying that $M$ is also decreasing) going up to the
distribution $D_{all}$, we will have a visualization of equation
\ref{eq:ActualParticipation} as in Figure \ref{fig:Participation}.
Let $D^{*}$ be the distribution where $N=M$.%
\begin{figure}[h]
\begin{centering}
\includegraphics[scale=0.5]{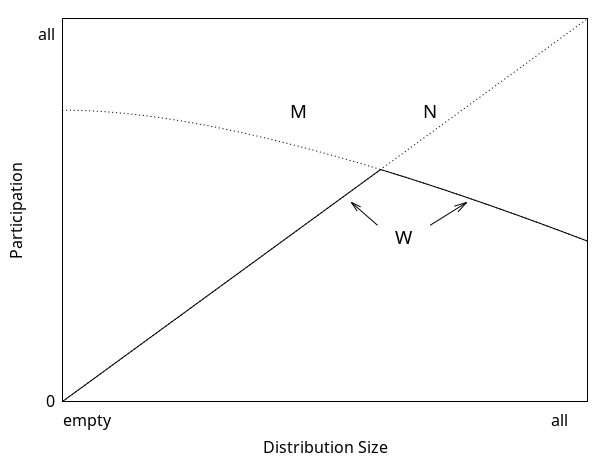}
\par\end{centering}
\caption{\label{fig:Participation}}
\end{figure}

One scenario to describe what might occur over multiple periods is
the case where $D_{all}$ and $\Theta$ are in a steady state with
no fluctuations. For this situation, $D^{*}$ for a particular consumer
would be the same for each time period in the steady state. Assuming
the consumer knows or has learned that the underlying process generating
the content is in a steady state with distribution $D^{*}$, the consumer
would expect the same number of content units $N$, and the same anticipated
$q$, per period. Thus the same participation $M$ for the anticipated
$q$. 

The alternative to the steady state is the case where $D_{all}$ and
$\Theta$ fluctuate. In this scenario the consumer does not know the
values of $N$ or the anticipated $q$. However, if the consumer believes
that $N=M$ regardless of the fluctuations, they can safely consume
the full $N(D)$ without having to speculate or approximate the anticipated
$q$.

As long as $N(D)\leq M(D)$ the consumer's realized $q$ for the period
will be at or above the threshold for consuming the content generated
by $D$. The consumer will not feel that they consumed more than they
should have, nor would they have any anticipation that they should
consume less in future periods. 

On the other hand if $N(D)>M(D)$ the consumer's realized $q$ for
the period will be below the threshold to consume $N(D)$ and they
will be more hesitant to consume the full $N(D)$ in the future. How
the consumer is able to react or adapt to receiving too much content
along with structure imposed by the media source will determine the
net benefit to the consumer and the media source. If the consumer
is able to fully adapt to extra content they will be able to limit
consumption to $M(D)$ otherwise they will react in future periods
by consuming less. Additional details are below and in the appendix. 

\subsection{Consumer Optimal Consumption}
\begin{defn}[Monotonically Decreasing]
\label{def:A-monotonic-sequence-}A sequence of distributions $\Theta=\{D_{0},D_{1}\cdots D_{j}\cdots D_{all}\}$
is monotonically decreasing when $N(D_{j})$ is increasing in $j$,
and $M(D_{j})$ is not increasing in $j$, specifically $N(D_{j-1})<N(D_{j})$
and $M(D_{j-1})\geq M(D_{j})\forall j>0$. 
\end{defn}
\begin{defn}[Incremental Distribution]
\label{def:An-incremental-distribution,}An incremental distribution,
$D'$, is a distribution derived from distribution $D$ with regions
$R'$ and $R$ and scalars $N'$ and $N$ respectively and where $r$
represents a point such that: 

(i) the incremental distribution is expansive when $R\subset R'$,
$N\leq N'$, and $\forall\,\,r\in R,\:N'\phi'_{r}=N\phi_{r}$ , and 

(ii) the incremental distribution is reductive when $R\supset R'$,
$N\geq N'$, and $\forall\:r\in R',\:N'\phi'_{r}=N\phi_{r}$ .
\end{defn}
An expansive incremental distribution expands the scalar $N$ and
the size of the region relative to the original distribution. While
a reductive incremental distribution reduces the scalar N and the
size of the region relative to the original distribution. The sequence
of distributions $\Theta$ will be expansive in the rest of this analysis
unless otherwise indicated.
\begin{defn}[Generally Decreasing]
\label{def:A-generally-decreasing-sequence}A sequence of distributions
$\Theta=\{D_{0},D_{1}\cdots D_{j}\cdots D_{all}\}$ is generally decreasing
in $j$ if $Q(D_{0})\geq Q(D_{j})\forall j$ and $M(Q(D_{0}))>N(D_{0})$
when the sequence is expansive, and $Q(D_{0})\leq Q(D_{j})\forall j$
and $M(Q(D_{0}))<N(D_{0})$ when reductive.
\end{defn}
The restriction that $M$ is generally decreasing in $j$ allows for
the possibility that $M_{j}<M_{j+1}$ provided that on average $M$
is decreasing in $j$. 
\begin{lem}
\label{lem:For-a-DistributionSequence-MathMax}For a monotonically
decreasing sequence of distributions $\Theta=\{D_{0},D_{1}\cdots D_{j}\cdots D_{all}\}$
where $M(D_{0})>N(D_{0})$, and $M(D_{all})<N(D_{all})$, the distribution
$D^{*}$ where $M(D^{*})=N(D^{*})$ will be the distribution that
maximizes $W(D)=min(M(D),N(D))$.
\end{lem}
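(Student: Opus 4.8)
The plan is to exploit the structure $W(D)=\min\bigl(M(D),N(D)\bigr)$ as the pointwise minimum of one strictly increasing and one non-increasing function along the index $j$ of the sequence $\Theta$. Such a minimum is unimodal in $j$, and its peak sits exactly at the index where the two functions cross, which is the distribution $D^{*}$ with $M(D^{*})=N(D^{*})$. So the whole argument reduces to locating that crossing and bounding $W$ on either side of it.

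First I would introduce the gap $g(j)=M(D_{j})-N(D_{j})$. By the definition of a monotonically decreasing sequence we have $N(D_{j-1})<N(D_{j})$ and $M(D_{j-1})\geq M(D_{j})$ for all $j>0$, so $g(j)-g(j-1)=\bigl(M(D_{j})-M(D_{j-1})\bigr)-\bigl(N(D_{j})-N(D_{j-1})\bigr)<0$; that is, $g$ is strictly decreasing in $j$. The hypotheses $M(D_{0})>N(D_{0})$ and $M(D_{all})<N(D_{all})$ give $g(0)>0$ and $g(\mathrm{all})<0$. Since $g$ is strictly monotone and changes sign, there is a single index $j^{*}$ (corresponding to $D^{*}$) at which $g$ passes through $0$, so $M(D^{*})=N(D^{*})$ and hence $W(D^{*})=M(D^{*})=N(D^{*})$.

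Next I would split $\Theta$ at $D^{*}$ and bound $W$ on each branch. For $j\leq j^{*}$ we have $g(j)\geq 0$, i.e. $M(D_{j})\geq N(D_{j})$, so $W(D_{j})=N(D_{j})$; because $N$ is increasing in $j$, $W(D_{j})=N(D_{j})\leq N(D^{*})=W(D^{*})$. For $j\geq j^{*}$ we have $g(j)\leq 0$, i.e. $M(D_{j})\leq N(D_{j})$, so $W(D_{j})=M(D_{j})$; because $M$ is non-increasing in $j$, $W(D_{j})=M(D_{j})\leq M(D^{*})=W(D^{*})$. Combining the two cases gives $W(D_{j})\leq W(D^{*})$ for every $j$, which is precisely the claim that $D^{*}$ maximizes $W$ over the sequence.

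The main obstacle I anticipate is not the monotonicity bookkeeping but the existence of an exact crossing: because $\Theta$ is discrete, $g$ might step from a positive value at $j^{*}-1$ to a negative value at $j^{*}$ with no member satisfying $g(j)=0$. To make $D^{*}$ well defined I would appeal to the incremental-distribution construction, which allows $N$ to be expanded in arbitrarily small steps with $Q$, and hence $M$, varying accordingly; refining $\Theta$ in this way interpolates a distribution with $N=M$. Once that existence point is secured, the unimodality argument above closes the proof. A secondary point worth a brief remark is the boundary convention at equality: the piecewise definition in \eqref{eq:ActualParticipation} specifies only the cases $M>N$ and $M<N$, but at $D^{*}$ the value $\min(M,N)=M=N$ is unambiguous and agrees with both adjoining branches.
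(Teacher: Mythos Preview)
Your argument is correct and follows essentially the same route as the paper: split the sequence at the crossing $M=N$, observe that $W=N$ is increasing on the left branch and $W=M$ is non-increasing on the right branch, so both sides are maximized at $D^{*}$. The paper's own proof is a single sentence to this effect; your version is simply a more careful unpacking, and your remark about the discrete crossing anticipates the paper's later comment that in practice one takes $M(D^{*})\approxeq N(D^{*})$.
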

Lemma \ref{lem:For-a-DistributionSequence-MathMax} confirms distribution
$D^{*}$ maximizes $W(D)$ for the monotonically decreasing sequence
of distributions $\Theta$.%
{} 
\begin{thm}[Optimal Distribution]
\label{thm:optimal-consumer-utility}For a generally decreasing sequence
of distributions $\Theta=\{D_{0},D_{1}\cdots$ $D_{j}\cdots D_{all}\}$,
the distribution $D^{*}$ where $M(D^{*})=N(D^{*})$ will be the distribution
from the sequence $\Theta$ that optimizes consumer utility with respect
to the consumer's participation provided that the sub sequence of
distributions starting just before $D^{*}$ is a monotonically decreasing
sequence of distributions.%
\end{thm}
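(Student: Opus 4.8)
The plan is to reduce the statement to Lemma~\ref{lem:For-a-DistributionSequence-MathMax} by splitting $\Theta$ at the crossing point $D^{*}=D_{k}$ (the term with $M(D_{k})=N(D_{k})$) and handling the part before $D^{*}$ and the part from $D^{*}$ onward by two complementary arguments. I will read ``optimizes consumer utility with respect to participation'' as maximizing the actual participation $W(D)=\min(M(D),N(D))$ over $\Theta$, matching the conclusion of the Lemma. Two structural facts are used throughout: first, $W(D_{j})=\min(M(D_{j}),N(D_{j}))\le N(D_{j})$ for every $j$, directly from the definition of $W$ in~\eqref{eq:ActualParticipation}; and second, $N(D_{j})$ is non-decreasing in $j$, which holds because $\Theta$ is expansive, so Definition~\ref{def:An-incremental-distribution,} gives $N\le N'$ at each step, and because $\Theta$ is chosen so that $N$ increases with $j$.

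For the indices preceding the crossing the argument is short and, crucially, needs no monotonicity of $M$. For any $j$ with $D_{j}$ earlier than $D^{*}$ we have $N(D_{j})\le N(D_{k})$, hence $W(D_{j})\le N(D_{j})\le N(D_{k})=M(D_{k})=W(D^{*})$. This bound survives the fluctuations permitted by a \emph{generally} decreasing sequence: even if $M(D_{j})$ dips temporarily below $N(D_{j})$, we still get $W(D_{j})=\min(M(D_{j}),N(D_{j}))\le N(D_{j})\le N(D^{*})$. So no distribution strictly before $D^{*}$ can beat $D^{*}$.

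For the indices from $D^{*}$ onward I invoke the proviso: the subsequence starting just before $D^{*}$, namely $\Theta^{*}=\{D_{k-1},D_{k},D_{k+1},\dots,D_{all}\}$, is monotonically decreasing in the sense of Definition~\ref{def:A-monotonic-sequence-}. I first verify the Lemma's hypotheses on $\Theta^{*}$. At the initial term, monotonicity gives $N(D_{k-1})<N(D_{k})$ and $M(D_{k-1})\ge M(D_{k})=N(D_{k})$, so $M(D_{k-1})\ge N(D_{k})>N(D_{k-1})$, i.e.\ $M(D_{k-1})>N(D_{k-1})$; at the final term $M(D_{all})<N(D_{all})$, which is exactly what makes $D^{*}$ an interior crossing. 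Lemma~\ref{lem:For-a-DistributionSequence-MathMax} then applies to $\Theta^{*}$, whose internal crossing is $D_{k}=D^{*}$, and yields $W(D_{j})\le W(D^{*})$ for every $j\ge k$. Combining the two ranges gives $W(D_{j})\le W(D^{*})$ for all $j$, so $D^{*}$ maximizes actual participation over $\Theta$; since $M(D^{*})=N(D^{*})$, at $D^{*}$ the realized rate sits exactly at the consumer's threshold while the available volume is as large as possible, which is the claimed optimum.

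The main obstacle is the region \emph{after} the crossing, and it is precisely what the proviso controls. Before $D^{*}$ the inequality $W\le N$ combined with monotonicity of $N$ absorbs any fluctuation of $M$ for free. After $D^{*}$, a rebound of the generally decreasing $M$ back above $N$ would restore $W=N$ but now at a strictly larger $N$, creating a second participation peak exceeding $W(D^{*})$ and falsifying the statement. Demanding that $\Theta$ be monotonically decreasing from just before $D^{*}$ forbids exactly this rebound; it is the one place where the weaker ``generally decreasing'' hypothesis alone would not suffice, and everything else reduces to the two inequalities above and the appeal to the Lemma.
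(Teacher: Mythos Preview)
Your proof is correct but takes a different route from the paper. The paper argues by deviations: it considers $D^{-}$ (with $N<M$) and $D^{+}$ (with $N>M$) separately and analyzes each under both a steady-state and a fluctuating-$\Theta$ scenario, tracking how the consumer's anticipated rate $q_{x}$ and resulting participation respond; only at the very end does it invoke Lemma~\ref{lem:For-a-DistributionSequence-MathMax}. You instead commit up front to reading the claim as maximization of $W(D)=\min(M(D),N(D))$ and give a clean two-part bound: before $D^{*}$ use $W\le N\le N(D^{*})$, after $D^{*}$ hand the monotone tail directly to the Lemma. Your argument is tighter and makes explicit why the merely \emph{generally} decreasing hypothesis suffices on the early segment (the paper's proof is somewhat informal there), and your closing paragraph correctly identifies the one place the proviso is doing real work. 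What you give up is the paper's economic justification for the interpretive step itself: its steady-state/fluctuating analysis is precisely the argument that maximizing $W$ \emph{is} what ``optimizes consumer utility with respect to participation,'' a step you simply stipulate. One small point: your claim that $M(D_{all})<N(D_{all})$ is not an extra assumption but follows from the monotone tail, since $M(D_{all})\le M(D^{*})=N(D^{*})<N(D_{all})$; saying so would be cleaner than calling it ``what makes $D^{*}$ an interior crossing.''
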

\begin{proof}
Let distribution $D^{-}$, where $N(D^{-})<M(D^{-})$, and distribution
$D^{+}$, where $N(D^{+})>M(D^{+})$, be deviations from the distribution
$D^{*}$. 

In any period with deviation $D^{-}$ there is a suboptimal loss of
participation because the consumer would optimally consume more at
$D^{*}$. Thus in either a steady state or in a fluctuating scenario
$D^{-}$ deviations are suboptimal. 

In the steady state case, a shift from $D^{*}$ to $D^{+}$ results
in a corresponding shift from $M(D^{*})$ to $M(D^{+})$ and since
$M(D^{*})\geq M(D^{+})$ this results in a decrease in participation
and utility as they are consuming less at a lower $q$ because $Q(D^{*})>Q(D^{+})$. 

In the fluctuating case, consider that both $\Theta$ and $D_{all}$
fluctuate between periods, thus the deviations make the realized distribution
random. The deviations for each period from $\Theta$ for that period
could be $D^{-}$ or $D^{+}$ relative to $D^{*}$ for that $\Theta$.
If the deviation is a $D^{-}$ deviation, it is inferior for reasons
already stated. Let $q_{x}$ be the expected $q$ from the observed
distributions. If the $D^{+}$ distributions are shifted to increase
the average $N$, the $q$ for those distributions will decrease and
$q_{x}$ will decrease because $N(D_{j})$ increases and $M(Q(D_{j}))$
decreases as $j$ increases. Thus there will be less participation.
If the $D^{-}$ distributions are shifted as well to decrease the
average $N$ and to balance out the average with the $D^{+}$ distributions
and maintain the original $q_{x}$ there will be an additional loss
in participation on average. If increasing the average $N$ for the
$D^{+}$ distributions decreases $q_{x}$, it follows that decreasing
the average $N$ for the $D^{+}$ distributions increases $q_{x}$
and increases participation and utility. Similarly increasing the
$N$ for the $D^{-}$ distributions increases optimal consumption
from the additional media units. Thus a convergence of $D^{-}$ and
$D^{+}$ to $D^{*}$ would be optimal. 

Alternatively, consider a variation of the fluctuating case where
the deviations are not confined to being in $\Theta$. In this variation,
if $q_{x}$ is the basis for the anticipated $q$ then the potential
participation would be $M(q_{x})$. Assuming distributions where $M(D)=N(D)$
exist, $M(q_{x})$ would be different from the $M$ for those other
distributions where $M(D)=N(D)$. Thus, if the fluctuations generate
distributions where $M(D)\neq N(D)$, the participation would be suboptimal. 

Thus, from Lemma \ref{lem:For-a-DistributionSequence-MathMax}, for
$W(D)=min(M(D),N(D))$, the $max(W)$ occurs when $M(D^{*})=N(D^{*})$
and from the perspective of the consumer this is the optimal participation. 
\end{proof}
Theorem \ref{thm:optimal-consumer-utility} confirms that $D^{*}$
is optimal for the consumer. The proof considers deviations from $D^{*}$
in the steady state and in situations where both $D_{all}$ and $\Theta$
fluctuate. In the fluctuating case, deviations that are consistent
with $\Theta$ and those that are completely random are considered
in the proof.

While, the theorem makes it clear that deviations from $D^{*}$ are
inferior from the perspective of the consumer,%
{} the possibility that the media source might benefit from deviations
in the positive direction has not been eliminated or quantified. The
question of what is optimal for the media source will be addressed
in the next sections where the objectives of the media source are
considered in determining the sequence of distributions $\Theta$.
{} %
{} Also in the next sections, the case where the sequence is generally
decreasing over the entire sequence is analyzed, and the effect of
the $D^{-}$ and $D^{+}$ deviations on the consumer is quantified. 

Further, it should be noted that while $M(D^{*})=N(D^{*})$ is the
optimal, in practice given the discrete nature of the content units,
the practical criteria may be the closest to equal, denoted $M(D^{*})\approxeq N(D^{*})$.

\subsection{Conditions for Determining $\Theta$ and The Media Source Value Function }

The following additional definitions will aid in the analysis and
determination of the best $\Theta$ for the media source.

Let $\Lambda$ represent a mapping defined for points at least in
the region for the distribution $D$, so that $\Lambda(r,N_{r},D)\rightarrow\lambda$
where $r=\{c,p\}$ is a point, $N_{r}$ is the volume or size of
the point, and $\lambda$ is a single value. 
\begin{defn}[Preferred Incremental Distribution]
\label{def:A-Preferred-incremental-distribution}An incremental distribution
$D'$ relative to distribution $D$ is a preferred incremental distribution
for a preference mapping $\Lambda$ when: 

(i) $\forall\,\,r\in(R'-R)$, $\Lambda(r,N_{r},D_{R})\geq\Lambda(r_{1},N_{r_{1}},D_{R})\forall\,\,r_{1}\notin R'$
and $\Lambda(r,N_{r},D_{R})\leq\Lambda(r_{2},N_{r_{2}},D_{R})\forall\,\,r_{2}\in R$
for an expansive incremental distribution, and 

(ii) $\forall\,\,r\in(R-R'),$ $\Lambda(r,N_{r},D_{R})\leq\Lambda(r_{1},N_{r_{1}},D_{R})\forall\,\,r_{1}\in R'$
and $\Lambda(r,N_{r},D_{R})\geq\Lambda(r_{2},N_{r_{2}},D_{R})\forall\,\,r_{2}\notin R$
for a reductive incremental distribution.
\end{defn}
In words, when expansive, every point in $R'-R$ is preferred to every
point not in $R'$, and any point in $R$ is preferred to every point
in $R'-R$. When reductive, every point in $R'$ is preferred to any
point in $R-R'$ and every point in $R-R'$ is preferred to every
point not in $R$.
\begin{defn}[Preferred Distribution Sequence]
\label{def:A-preferred-distribution-sequence}A preferred distribution
sequence is a sequence of distributions $\Theta=\{D_{0},D_{1}\cdots D_{j}\cdots\}$
where $D_{0}$ is the starting distribution and each $D_{j+1}$ is
a preferred incremental distribution relative to $D_{j}$ with a mapping
$\Lambda$. 
\end{defn}
Letting $M$ be generally decreasing in $j$ as in Definition \ref{def:A-generally-decreasing-sequence}
allows for the possibility that $M_{j}<M_{j+1}$ provided that on
average $M$ is decreasing in $j$. This is needed for the value
functions derived in the next section.

Let $T$ be a transformation function over $p$ that maps the producer
priorities to a suitable scale. The average per unit value of distribution
$D$ to a sole producer that produced all the content generated by
$D$ would be $E(T|D)$. %
Since producer value is dependent on actual consumer participation,
the anticipated actual value to the sole producer from a consumer
with distribution $D$ would be:

\begin{equation}
S(D)=E(T|D)min(M(D),N(D))\label{eq:SingleProducerValueFunction}
\end{equation}

Alternatively, from the perspective of the media source, the value
of a distribution $D$ when $M(D)>N(D)$ should consider the potential
value of $M(D)$ that could be realized by distribution $D$, and
when $M(D)<N(D)$ the value is limited by $M(D)$. Thus, this suggests
a value function for the media source for $D$ as:

\begin{equation}
V(D)=E(T|D)M(D)\label{eq:MediaSourceValueFunction}
\end{equation}

The equation indicates that the potential value of a distribution
$D$ to the media source is the expected producer value of the distribution
multiplied by the potential participation of the consumer. It should
be noted that the value function for the sole producer and the media
source overlap when $M(D)<N(D)$.%

Using equation \ref{eq:MediaSourceValueFunction} the change in potential
value for an incremental distribution change resulting from an incremental
shift from distribution $D$ to $D'$ is:

\begin{equation}
\Delta_{V}=V'-V=E(T|D)N(D)\left(\frac{M(D')}{N(D')}-\frac{M(D)}{N(D)}\right)+T(p)\left(N(D')-N(D)\right)\frac{M(D')}{N(D')}\label{eq:MediaSourceIncrementalValueMapping}
\end{equation}

The equation shows the change in value from a possible loss of participation
in the first term relative to the possible gain in producer value
from the incremental addition to the distributions.%
{} 

The function $\Delta_{V}$ maps $(r,N_{r},D)$ to a single value and
indicates the cost of expanding to distribution $D'$ when negative,
and the benefit of doing so when positive. Thus, $\Delta_{V}$ can
be used as the media source preference mapping, $\Lambda$, and cost
function for selecting a preferred incremental distribution sequence. 

Additional Details and alternative value functions, that can be used
as a preference mapping, are provided in the distribution difference
math section of the appendix. 

\subsection{\label{subsec:Analyzing-Nash-Equilibrium}Analyzing Nash Equilibrium
and Pareto Efficient Outcomes}

In this model, while the media source is making all the decisions
regarding the determination of $D^{*}$, it is assumed that the consumer
and sole producer could exert influence on the media source. This
influence could take the form of non participation or monetary incentives.\footnote{For example, consumers moving to another media source or producers
making special payments to the media source.} So the potential gains or losses to the producer and consumer are
considered as if they were influencing the decisions of the media
source. The distribution $D^{*}$ could be considered equivalent to
a Nash equilibrium when neither the sole producer or the consumer
would prefer a shift from $D^{*}$ to a distribution before or after
$D^{*}$ in the sequence $\Theta$, and the media source has no incentive
to deviate from $D^{*}$ either. 

From equation \ref{eq:SingleProducerValueFunction}, let $\Delta_{S}=S(D')-S(D)$
be the change in value for the sole producer going from distribution
$D$ to $D'$. So $\Delta_{S}>0$ indicates an improvement and $\Delta_{S}<0$
indicates a decline in $S$ for the sole producer. The formula for
$\Delta_{S}$ has two forms. When $N\leq M$ for both $D$ and $D'$,
$\Delta_{S}=S(D')-S(D)=E(T|D')N'-E(T|D)N$ and when $N\geq M$ for
both $D$ and $D'$, $\Delta_{S}=\Delta_{V}$.
\begin{lem}
\label{lem:for-DeltaS-Below-D*}For an expansive sequence where $D'=D+Y$
and $N(D')\leq M(D')$, $\Delta_{S}>0\iff$ $E(T|Y)>0$.
\end{lem}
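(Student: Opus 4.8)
The plan is to collapse $\Delta_S$ into the producer value contributed by the increment $Y$ alone and then read off its sign. First I would determine which branch of the $\min$ in $S(D)=E(T|D)\min(M(D),N(D))$ is active at each endpoint. By hypothesis $N(D')\le M(D')$, so $\min(M(D'),N(D'))=N(D')$ and $S(D')=E(T|D')N(D')$. Because the increment is expansive we have $N(D)\le N(D')$, and because $M$ is non-increasing under expansion along $\Theta$ we have $M(D)\ge M(D')$. Chaining these gives $N(D)\le N(D')\le M(D')\le M(D)$, so $D$ also sits at or before the crossover $D^{*}$ and satisfies $N(D)\le M(D)$; hence $S(D)=E(T|D)N(D)$. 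This is precisely the first of the two forms already recorded for $\Delta_S$, namely $\Delta_S=E(T|D')N(D')-E(T|D)N(D)$.

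Next I would exploit linearity. Writing $\phi_r$ for the density at a point $r$, the frequency (unnormalized count) at $r$ is $N\phi_r$, so the total producer mass is $E(T|D)N(D)=\sum_r N(D)\phi_r^{(D)}T(p_r)=\sum_r f_r^{(D)}T(p_r)$, a linear functional of the frequencies $f_r^{(D)}$. Since $D'=D+Y$ is the additive combination of frequencies, $f_r^{(D')}=f_r^{(D)}+f_r^{(Y)}$ for every $r$, and therefore
\[
E(T|D')N(D')=\sum_r\bigl(f_r^{(D)}+f_r^{(Y)}\bigr)T(p_r)=E(T|D)N(D)+E(T|Y)N(Y).
\]
Substituting into the expression for $\Delta_S$ cancels the $D$-terms and leaves $\Delta_S=E(T|Y)N(Y)$.

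Finally I would read off the equivalence. For a genuine expansive increment $N(Y)=N(D')-N(D)>0$, so $N(Y)$ is a strictly positive scalar multiplying $E(T|Y)$; the sign of $\Delta_S$ therefore agrees with that of $E(T|Y)$, giving $\Delta_S>0\iff E(T|Y)>0$. I expect the only real obstacle to be the regime-identification step: the algebra of the second paragraph is immediate once both endpoints lie in the $N\le M$ branch, but that requires justifying that $D$, and not merely $D'$, lies at or before $D^{*}$, which is exactly where expansiveness ($N$ increasing) together with $M$ being non-increasing along $\Theta$ must be invoked. The degenerate case $N(Y)=0$, for which $E(T|Y)$ is not meaningfully defined, is excluded by taking the increment to be non-trivial.
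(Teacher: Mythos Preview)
Your proposal is correct and follows essentially the same route as the paper: the appendix derives $\Delta_S = H(T|Y)(N'-N)$ with $H(T|Y)=E(T|Y)$ by exactly the additivity you describe, and then reads off the sign from $N'-N>0$. Your regime-identification step (showing $D$ also lies in the $N\le M$ branch) is a welcome addition that the paper leaves implicit; just note that the inequality $M(D')\le M(D)$ you invoke holds for monotonically decreasing sequences but can fail locally in a generally decreasing one, though the conclusion $N(D)\le M(D)$ is in any case what the paper takes for granted in the phrase ``distributions in the sequence $\Theta$ before $D^{*}$.''
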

Lemma \ref{lem:for-DeltaS-Below-D*} indicates that for distributions
in the sequence $\Theta$ before $D^{*}$ a shift from $D$ to $D'$
is beneficial to the sole producer provided that the incremental addition
is positive. A similar lemma could also be stated for the reductive
case. 
\begin{assumption}
\label{assu:The-sole-producer-Sees}The sole producer only sees the
end result, $D^{*}$, or any possible final distribution other than
$D^{*}$, and does not see intermediate steps. 
\end{assumption}
This assumption implies that the sole producer is only able to influence
the media source's final distribution choice. Which should be reasonable
since only the end result matters to the sole producer.%

\begin{lem}
\label{lem:LimitsOnTp-M'-afterD*}For a preferred distribution sequence
let $r_{2}$ be a point with distribution $I_{r_{2}}$ such that $D'=D^{*}+I_{r_{2}}$,
then $\frac{T(p_{r_{2}})}{E(T|D^{*})}>1$ $\iff$$\frac{M(D')}{M(D^{*})}<1$,
and $\frac{M(D')}{M(D^{*})}>1$ $\iff$ $\frac{T(p_{r_{2}})}{E(T|D^{*})}<1$,
or equivalently $\frac{E(T|D')}{E(T|D^{*})}>1$ $\iff$$\frac{M(D')}{M(D^{*})}<1$,
and $\frac{M(D')}{M(D^{*})}>1$ $\iff$ $\frac{E(T|D')}{E(T|D^{*})}<1$.
\end{lem}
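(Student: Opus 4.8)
The plan is to peel off the two easy equivalences with elementary mixing arguments, reduce the lemma to a single sign statement about the marginal point $r_{2}$, and then read that sign statement off the incremental value map \eqref{eq:MediaSourceIncrementalValueMapping}.

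First I would record that $D'=D^{*}+I_{r_{2}}$ is a volume mixture of $D^{*}$ with a point mass of volume $N_{r_{2}}$ at $r_{2}=\{c_{r_{2}},p_{r_{2}}\}$, so with $N(D')=N(D^{*})+N_{r_{2}}$ both conditional expectations are convex combinations,
\[
Q(D')=\frac{N(D^{*})\,Q(D^{*})+N_{r_{2}}c_{r_{2}}}{N(D^{*})+N_{r_{2}}},\qquad E(T\mid D')=\frac{N(D^{*})\,E(T\mid D^{*})+N_{r_{2}}T(p_{r_{2}})}{N(D^{*})+N_{r_{2}}}.
\]
The second identity gives $E(T\mid D')\gtrless E(T\mid D^{*})\iff T(p_{r_{2}})\gtrless E(T\mid D^{*})$, which (using $E(T\mid D^{*})>0$) is exactly what lets me pass between the ``$T(p_{r_{2}})$'' form and the ``$E(T\mid D')$'' form of the statement. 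The first identity, together with the monotonicity of $M$ in $Q$ and $M(D)=M(Q(D))$, gives $M(D')\gtrless M(D^{*})\iff Q(D')\gtrless Q(D^{*})\iff c_{r_{2}}\gtrless E(c\mid D^{*})$. After these reductions the whole lemma collapses to the single anti-correlation $T(p_{r_{2}})>E(T\mid D^{*})\iff c_{r_{2}}<E(c\mid D^{*})$ for the point the preferred sequence appends at $D^{*}$.

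To obtain that anti-correlation I would use that the sequence is preferred with respect to $\Lambda=\Delta_{V}$ (Definitions \ref{def:A-Preferred-incremental-distribution} and \ref{def:A-preferred-distribution-sequence}) and exploit the shape of $V(D')=E(T\mid D')\,M(D')$. Since $M(D')$ is independent of $T(p_{r_{2}})$ while $E(T\mid D')$ is strictly increasing in it, $\Delta_{V}$ is strictly increasing in $T(p_{r_{2}})$; symmetrically, since $E(T\mid D')$ is independent of $c_{r_{2}}$ while $M(D')$ is increasing in it, $\Delta_{V}$ is strictly increasing in $c_{r_{2}}$ provided $E(T\mid D')>0$. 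A direct computation of $V(D')$ from \eqref{eq:MediaSourceValueFunction} at $c_{r_{2}}=E(c\mid D^{*})$, $T(p_{r_{2}})=E(T\mid D^{*})$ shows $Q(D')=Q(D^{*})$, hence $M(D')=M(D^{*})$, and $E(T\mid D')=E(T\mid D^{*})$, so $V(D')=V(D^{*})$ and $\Delta_{V}=0$: the ``average'' point sits on the media source's indifference locus at $D^{*}$. Because $\Delta_{V}$ is monotone in both coordinates, every point above average in both $c_{r_{2}}$ and $T(p_{r_{2}})$ has $\Delta_{V}>0$ and is appended strictly earlier in the preferred sequence, while every point below average in both has $\Delta_{V}<0$ and is appended strictly later; the only points that can be marginal at the $M=N$ boundary $D^{*}$ are therefore the off-diagonal trade-off points, on which $T(p_{r_{2}})>E(T\mid D^{*})$ forces $c_{r_{2}}<E(c\mid D^{*})$ and conversely. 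Reading the displayed equivalences and their contrapositive twins off this anti-correlation, via the two mixing facts above, finishes the argument.

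The main obstacle is reconciling the ordering induced by $\Delta_{V}$ with the location of $D^{*}$, which is defined by $M=N$ rather than by $\Delta_{V}=0$: I must argue that no ``above-average-in-both'' point survives unappended as far as the $M=N$ crossing. Intuitively such points carry both high producer value and high rate, keep $M$ large, and so are consumed while $M>N$, strictly before $D^{*}$; the cleanest rigorous route is to show the preferred sequence makes $V(D)=E(T\mid D)M(D)$ stationary at $D^{*}$, i.e. the appended point satisfies $\Delta_{V}=0$. Granting $V(D')=V(D^{*})$, the identity $E(T\mid D')/E(T\mid D^{*})=M(D^{*})/M(D')$ is immediate and the sign equivalence follows at once, so the real content of the lemma is this stationarity of the media source's value at the $M=N$ distribution. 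Finally I would note, without belaboring it, that the discreteness of content units (the $\approxeq$ caveat after Theorem \ref{thm:optimal-consumer-utility}) means the equivalence holds only up to the granularity of a single unit.
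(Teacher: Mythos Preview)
Your reduction to the sign anti-correlation at the marginal point is exactly right and matches the paper's framing, but the resolution you propose for the ``main obstacle'' does not work. The claim that the preferred sequence makes $V(D)=E(T\mid D)M(D)$ stationary at $D^{*}$ --- equivalently that the appended point has $\Delta_{V}=0$ --- is false in general: $D^{*}$ is pinned down by $M=N$, not by any first-order condition on $V$, and the entire subsequent analysis of the paper (Theorems~\ref{thm:kappa-lt-0}--\ref{thm:kappa-gt-1}) is precisely a case split on whether $\Delta_{V}$ at $D^{*}$ is positive or negative. So ``granting $V(D')=V(D^{*})$'' is granting something untrue, and the neat identity $E(T\mid D')/E(T\mid D^{*})=M(D^{*})/M(D')$ you extract from it is not available.

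The paper sidesteps this by never arguing at $D^{*}$. It steps back to the predecessor $D_{a}$ with $D^{*}=D_{a}+I_{r_{1}}$ and compares $r_{2}$ against $r_{1}$ under the preference map $\Lambda$ evaluated at $D_{a}$ (equation~\eqref{eq:OrderCond-r1-r2} and the corollary to Proposition~\ref{prop:Preferred-When-Nr1>Nr2}). Under the B3 normalization $c_{r_{1}}=Q(D_{a})$, $T(p_{r_{1}})=E(T\mid D_{a})$, a point $r_{2}$ strictly above both averages at $D_{a}$ would have been ranked ahead of $r_{1}$ --- regardless of $N_{r_{1}}$ --- so it cannot be viable after $D^{*}$. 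In other words, what replaces your false stationarity is the viability constraint (Definition~\ref{def:Viable-Y-At-D}): an above-average-in-both $r_{2}$ fails it because the preferred sequence would already have consumed it before reaching $D^{*}$. Your monotonicity observations about $\Delta_{V}$ are the right ingredients, but they need to be applied at $D_{a}$ against the actual last-added point $r_{1}$, not at $D^{*}$ against a hypothetical average point.
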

The implication being any point added after $D^{*}$ from a \textit{preferred
distribution sequence} cannot have both a $E(T|D')>E(T|D^{*})$ and
a $M(D')>M(D^{*})$. In other terms, Lemma \ref{lem:LimitsOnTp-M'-afterD*}
says for any point $r$ added after $D^{*}$ it is not possible that
both $T(p_{r}$) and $c_{r}$ to be above the average for $D^{*}$.
The implications for this are significant as it allows for the analysis
to be limited to focus on either an improvement in $M$ or $T(p)$
for points added after $D^{*}.$%
{} The proof and additional details are in the appendix. 
\begin{defn}[Viable Distribution Increment]
\label{def:Viable-Y-At-D}A distribution Increment $Y$, where $D'=D+Y$,
is viable in the preferred distribution sequence after $D$, if and
only if the distribution $Y$ cannot optimally be added to the sequence
prior to $D$. 
\end{defn}
For the rest of this section, let $r_{1}$ indicate the last point
added in the sequence prior to $D^{*}$ such that $D^{*}=D_{a}+I_{r_{1}}$,
and let $N_{r_{1}}$ be the size of that point. Also, unless otherwise
indicated the values of $N$ and $M$ are normalized so that $N^{*}=N(D^{*})=1$,
and $D'=D^{*}+I_{r_{2}}$. Additional details are in the appendix. 

\subsubsection{When $M(D')<M(D^{*})$ or $\frac{E(T|D')}{E(T|D^{*})}>1$}
\begin{lem}
\label{lem:For+TpNoPosDeltaS-afterD*}For a preferred distribution
sequence, let $r_{2}$ be a point with distribution $I_{r_{2}}$ such
that $D'=D^{*}+I_{r_{2}}$ where $\frac{T(p_{r_{2}})}{E(T|D^{*})}>1$,
then for at least a $N_{r_{2}}$ and $N_{r_{1}}$ where $0<N_{r_{2}}<1$
and $0<N_{r_{1}}<1$ and further limited by this condition,
\begin{equation}
\frac{\left(1+\frac{c_{r_{2}}}{Q(D_{a})}N_{r_{2}}\right)\left(1-N_{r_{2}}\right)}{1-N_{r_{2}}\left(1-\frac{c_{r_{1}}}{Q(D_{a})}\right)}>N_{r_{1}}\label{eq:restriction-for-Nr1-Nr2}
\end{equation}

there is no value for $N_{r_{2}}$ where $I_{r_{2}}$ is viable and
$\Delta_{S}>0$ or $\Delta_{V}>0$. Further, if both $T(p_{r_{2}})<E(T|D^{*})$
and a $M(D')<M(D^{*})$ then there is also no value for $N_{r_{2}}>0$
where $r_{2}$ would be viable and either $\Delta_{S}>0$ or $\Delta_{V}>0$.
\end{lem}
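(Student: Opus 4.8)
The plan is to reduce the two value criteria to one, fix the regime via Lemma~\ref{lem:LimitsOnTp-M'-afterD*}, and then exhibit an algebraic conflict between viability and value improvement.

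First I would adopt the normalization $N(D^{*})=M(D^{*})=1$ (legitimate because $M(D^{*})=N(D^{*})$ holds at $D^{*}$) and record the volumes $N(D_{a})=1-N_{r_{1}}$ and $N(D')=1+N_{r_{2}}$. Writing each $Q$ and each $E(T\mid\cdot)$ as a volume-weighted average of the parent distribution and the single added point gives, for example, $Q(D^{*})/Q(D_{a})=1-N_{r_{1}}\big(1-c_{r_{1}}/Q(D_{a})\big)$ and $E(T\mid D')/E(T\mid D^{*})=(1+\tau N_{r_{2}})/(1+N_{r_{2}})$ with $\tau=T(p_{r_{2}})/E(T\mid D^{*})$, together with the analogous ratio for $Q(D')$. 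Applying Lemma~\ref{lem:LimitsOnTp-M'-afterD*} to the hypothesis $\tau>1$ yields $M(D')<M(D^{*})$ and $E(T\mid D')>E(T\mid D^{*})$, so appending $r_{2}$ trades a gain in average producer value against a loss in participation. Since $N(D')=1+N_{r_{2}}>1=M(D^{*})>M(D')$, the pair $D^{*},D'$ lies in the regime $N\geq M$ where $\Delta_{S}=\Delta_{V}$; the disjunct ``$\Delta_{S}>0$ or $\Delta_{V}>0$'' therefore collapses to $\Delta_{V}>0$, and it suffices to show that viability forces $\Delta_{V}\leq 0$.

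Next I would make the $\Delta_{V}>0$ condition explicit. Using $V(D)=E(T\mid D)M(D)$ from equation~\ref{eq:MediaSourceValueFunction}, I would write $V(D')/V(D^{*})=\big(E(T\mid D')/E(T\mid D^{*})\big)\big(M(D')/M(D^{*})\big)$, substitute the weighted-average ratios from the previous step, and clear denominators to obtain a degree-two inequality in $N_{r_{2}}$ whose solution set is exactly the range of sizes for which the producer-value gain overcomes the participation loss. This is the routine half. The decisive step is the viability constraint: by Definition~\ref{def:Viable-Y-At-D}, $I_{r_{2}}$ is viable after $D^{*}$ precisely when it could not optimally have been inserted before $D^{*}$, and since $r_{1}$ is the last point admitted into $D_{a}$ under the preference mapping $\Lambda=\Delta_{V}$, this means $r_{2}$ is not preferred to $r_{1}$ at the $D_{a}$ frontier. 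Expanding that comparison with the same weighted-average machinery, now measured against $Q(D_{a})$ and involving $c_{r_{1}}$, $N_{r_{1}}$, and $c_{r_{2}}$, I expect to recover exactly inequality~\ref{eq:restriction-for-Nr1-Nr2} as the viability boundary.

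The crux is then to show that the region carved out by inequality~\ref{eq:restriction-for-Nr1-Nr2} is disjoint from the $\Delta_{V}>0$ region of the second paragraph: for every admissible $N_{r_{2}}\in(0,1)$ (with $0<N_{r_{1}}<1$) meeting the constraint, the degree-two inequality must fail, so $\Delta_{V}\leq 0$. I anticipate this holds because the preferred-sequence ordering makes the marginal value of admitted points non-increasing along $\Theta$, so a point ranked below the threshold point $r_{1}$ and forced by Lemma~\ref{lem:LimitsOnTp-M'-afterD*} to sit below the current rate ($c_{r_{2}}<Q(D^{*})$) cannot raise $V$ once the sequence has reached $D^{*}$; but the substance is the algebra, not the intuition. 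The ``Further'' clause is by contrast the dominated-point case: if $T(p_{r_{2}})<E(T\mid D^{*})$ while still $M(D')<M(D^{*})$, then both factors of $V=E(T\mid\cdot)M(\cdot)$ strictly decrease upon appending $r_{2}$, so $\Delta_{V}<0$ for every $N_{r_{2}}>0$ with no appeal to viability. The main obstacle is thus the middle step above: correctly deriving inequality~\ref{eq:restriction-for-Nr1-Nr2} as the viability boundary while tracking $r_{1}$, $N_{r_{1}}$, and the reference rate $Q(D_{a})$, and then verifying algebraically that this boundary and the $\Delta_{V}>0$ set never overlap on $0<N_{r_{1}},N_{r_{2}}<1$.
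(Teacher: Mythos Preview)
Your overall strategy---show that viability and $\Delta_V>0$ are mutually exclusive---matches the paper, and your treatment of the ``Further'' clause (both factors of $V$ drop, so $\Delta_V<0$ outright) is correct. But you misread the role of inequality~\eqref{eq:restriction-for-Nr1-Nr2}. It is \emph{not} the viability boundary. The viability condition ``$r_2$ is not preferred to $r_1$ at the $D_a$ frontier'' is the ordering inequality from the preference mapping $\Lambda=\Delta_V$, and it depends on the producer values $T(p_1),\,T(p_2)$ and on the participations $M_{ar_1},\,M_{ar_2}$. Inequality~\eqref{eq:restriction-for-Nr1-Nr2} contains no producer values at all---only $c_{r_1},\,c_{r_2},\,N_{r_1},\,N_{r_2}$---so it cannot be recovered as the viability constraint, and your planned derivation would not close.

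What the paper actually does is express \emph{both} conditions as thresholds on the single variable $T(p_{r_2})/E(T\mid D^{*})$: a lower bound $F_{\mathrm{low}}(N_{r_2})$ for $\Delta_S>0$ (from rearranging the $\Delta_V>0$ condition) and an upper bound $F_{\mathrm{up}}(N_{r_2})$ for viability (from the ordering inequality for $r_1$ before $r_2$). The lemma follows once $F_{\mathrm{low}}>F_{\mathrm{up}}$. Proving that inequality requires comparing $M_{ar_2}$ (participation if $r_2$ were appended to $D_a$) with $M_{r_2}$ (participation if appended to $D^{*}$), which the paper does via the approximation $M(q)=\zeta q^{\alpha}$. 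When $c_{r_2}>Q(D_a)$ one has $M_{ar_2}>M_{r_2}$ and the comparison is immediate; when $c_{r_2}<Q(D_a)$ the ratio $M_{ar_2}/M_{r_2}<1$ and an extra algebraic step is needed---\emph{that} is where inequality~\eqref{eq:restriction-for-Nr1-Nr2} appears, as a sufficient condition making $F_{\mathrm{low}}>F_{\mathrm{up}}$ go through. So the condition in the lemma is a hypothesis enabling the threshold comparison, not the viability region itself. Your plan is missing both the $F_{\mathrm{low}}/F_{\mathrm{up}}$ parameterization and the $M_{ar_2}$ versus $M_{r_2}$ case split that drives it.
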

This Lemma provides a $S.C.$ that makes it clear that in the case
of $\frac{T(p_{r_{2}})}{E(T|D^{*})}>1$, $D^{*}$ is as good as it
gets for the sole producer and media source. For intuition, substitute
the extremes $\frac{c_{r_{2}}}{Q(D_{a})}\rightarrow0$ and $\frac{c_{r_{1}}}{Q(D_{a})}\rightarrow1$
into inequality \ref{eq:restriction-for-Nr1-Nr2} to yield $1-N_{r_{2}}>N_{r_{1}}$.%
{} Further details and proof are in the appendix.%

\begin{thm}[Equilibrium with $M(D')<M(D^{*})$]
\label{thm:kappa-lt-0}For a preferred distribution sequence including
$D'$ and a point $r_{2}=\{c_{2},p_{2}\}$ with distribution $I_{r_{2}}$
such that $D'=D^{*}+I_{r_{2}}$, and where $M(D')<M(D^{*})$, with
reasonable limits on $N_{r_{2}}$ when $E(T|D^{*})<E(T|D')$, both
consumer and sole producer prefer $D^{*}$ and the media source has
no incentive to shift to $D'$, \textup{thus }$D^{*}$\textup{ is
a Nash equilibrium.}
\end{thm}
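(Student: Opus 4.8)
The plan is to show that none of the three parties---consumer, sole producer, and media source---strictly gains from a unilateral shift away from $D^{*}$, so that $D^{*}$ meets the Nash criterion laid out in Section~\ref{subsec:Analyzing-Nash-Equilibrium}. The consumer side is immediate: Theorem~\ref{thm:optimal-consumer-utility} already establishes that $D^{*}$, where $M(D^{*})=N(D^{*})$, optimizes the consumer's participation and utility, and that any deviation---whether to a distribution before or after $D^{*}$ in $\Theta$---is suboptimal. So the real work reduces to ruling out a profitable shift for the sole producer ($\Delta_{S}>0$) and for the media source ($\Delta_{V}>0$) when moving from $D^{*}$ to $D'=D^{*}+I_{r_{2}}$.

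First I would pin down the case. By hypothesis $M(D')<M(D^{*})$, i.e.\ $\frac{M(D')}{M(D^{*})}<1$, and Lemma~\ref{lem:LimitsOnTp-M'-afterD*} converts this equivalence directly into $\frac{E(T|D')}{E(T|D^{*})}>1$, equivalently $\frac{T(p_{r_{2}})}{E(T|D^{*})}>1$. This places us squarely in the regime handled by Lemma~\ref{lem:For+TpNoPosDeltaS-afterD*}, and I would identify the theorem's phrase ``reasonable limits on $N_{r_{2}}$'' with the admissible region cut out by inequality~\ref{eq:restriction-for-Nr1-Nr2}.

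Then I would apply Lemma~\ref{lem:For+TpNoPosDeltaS-afterD*} as a black box: under that restriction there is no value of $N_{r_{2}}$ for which the increment $I_{r_{2}}$ is simultaneously viable in the sense of Definition~\ref{def:Viable-Y-At-D} and yields $\Delta_{S}>0$ or $\Delta_{V}>0$. Hence for every admissible forward shift $D'=D^{*}+I_{r_{2}}$ we obtain $\Delta_{S}\le 0$ and $\Delta_{V}\le 0$, so neither the sole producer (who by Assumption~\ref{assu:The-sole-producer-Sees} compares only the end results $D^{*}$ and $D'$) nor the media source strictly benefits. For the reverse, reductive direction I would invoke Lemma~\ref{lem:for-DeltaS-Below-D*}: so long as the preferred increments that built $\Theta$ up to $D^{*}$ carry $E(T|Y)>0$, each forward step has $\Delta_{S}>0$, so unwinding any of them lowers $S$; and the monotone construction of the preferred sequence under the mapping $\Lambda=\Delta_{V}$ shows a backward step cannot raise $V$. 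Assembling all three parties then yields the Nash equilibrium.

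The main obstacle is not any new computation---the quantitative heart is already carried by Lemma~\ref{lem:For+TpNoPosDeltaS-afterD*}---but the bookkeeping of aligning hypotheses. I would need to verify carefully that ``$M(D')<M(D^{*})$'' really does force the $\frac{T(p_{r_{2}})}{E(T|D^{*})}>1$ branch through Lemma~\ref{lem:LimitsOnTp-M'-afterD*}, that the theorem's ``reasonable limits'' are exactly condition~\ref{eq:restriction-for-Nr1-Nr2}, and---most delicately---that the statement's focus on the single forward increment $I_{r_{2}}$ still delivers the full both-directions Nash criterion once the backward direction is supplied by Theorem~\ref{thm:optimal-consumer-utility} and Lemma~\ref{lem:for-DeltaS-Below-D*}.
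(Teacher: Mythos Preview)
Your proposal is correct and follows essentially the same route as the paper: invoke Lemma~\ref{lem:LimitsOnTp-M'-afterD*} and Lemma~\ref{lem:For+TpNoPosDeltaS-afterD*} to rule out $\Delta_{S}>0$ and $\Delta_{V}>0$ for the forward shift, and appeal to Theorem~\ref{thm:optimal-consumer-utility} for the consumer. The paper's proof is slightly terser---it does not separately argue the backward (reductive) direction via Lemma~\ref{lem:for-DeltaS-Below-D*}, since the theorem as stated compares only $D^{*}$ against the specific forward deviation $D'$.
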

\begin{proof}
The proof follows from Lemma \ref{lem:LimitsOnTp-M'-afterD*} which
indicates that a distribution $D'$ after $D^{*}$ in the preferred
incremental sequence cannot have both a $E(T|D')>E(T|D^{*})$ and
a $M(D')>M(D^{*})$, and Lemma \ref{lem:For+TpNoPosDeltaS-afterD*}
that confirms that there are no viable incremental additions to $D^{*}$
where $\Delta_{S}=\Delta_{V}>0$ when $E(T|D')>E(T|D^{*})$ or when
$M(D')<M(D^{*})$. Further since $\Delta_{S}<0$ there is no gain
for the producer and from Theorem \ref{thm:optimal-consumer-utility}
there is no gain for the consumer, in part because $M(D')<M(D^{*})$
implies $Q(D')<Q(D^{*})$.
\end{proof}
The theorem makes it clear that when $M(D')<M(D^{*})$ neither the
media source, sole producer, or consumer would gain from a distribution
beyond $D^{*}$ in the sequence, even if there was an increase in
the average per unit content value to the sole producer. Thus the
only possibility for an optimal beyond $D^{*}$ would be if $M(D')>M(D^{*})$
.

\subsubsection{When $M(D')>M(D^{*})$}

Given the nature of the proposed media source value function $\Delta_{V}$
the possibility of a distribution $D_{j+1}$ from the preferred distribution
sequence where $M(D_{j+1})>M(D_{j})$ cannot be excluded. However,
considering the assumption that $M$ is generally decreasing in the
preferred distribution sequence from Definition \ref{def:A-preferred-distribution-sequence},
and that $M$ is monotonically increasing in $q$ while $Q(D_{j})$
generally decreases as $N(D_{j})$ increases with $j$ in the sequence,
the scenario where $M(D_{j+1})>M(D_{j})$ should be relatively rare,
and within a sequence even less likely to occur exactly just after
$D^{*}$, but possible. Thus, the case where $M(D')>M(D^{*})$ is
further evaluated. 

\[
\text{Let, }\kappa_{r_{2}}=\frac{M(D')-M(D^{*})}{N(D')-N(D^{*})}\text{ where \ensuremath{D'=D^{*}+I_{r_{2}}} and let, \ensuremath{\kappa_{ar_{2}}=\frac{M(D_{ar_{2}})-M(D_{a})}{N(D_{ar_{2}})-N(D_{a})}\text{ where \ensuremath{D_{ar_{2}}=D_{a}+I_{r_{2}}}.}}}
\]
 In the case where $M(D')>M(D^{*})$, by definition $\kappa>0$.

\subsubsection{When $0<\kappa_{r}<1$}

When $M(D^{*})<M(D')<N(D')$ how the consumer is able to react or
adapt to receiving too much content needs to be considered. If the
consumer is fully adaptive, the consumer will consume exactly $M(D')$.
However the units not consumed will not be from $I_{r_{2}}$ since
it is assumed that the fully adaptive will know at least what points
have $c<Q(D^{*})$. On the other hand, if the consumer only realizes
$M'$ after consuming $N'$, they will need to be reactive to protect
against overreach.\footnote{The degree of adaptability likely depends more on the media type and
the structure imposed by the media source, than on the individual
consumer.}

Let $X_{l\kappa}$ be the threshold on $\kappa_{r}$ for $\Delta_{S}=\Delta_{V}>0$
when $0<\kappa_{r}<1$.%
{} The thresholds differ in the reactive and adaptive case%
,

\begin{singlespace}
\noindent 
\begin{align}
\text{adaptive: \ensuremath{} } & X_{l\kappa}=\frac{1-\frac{T(p_{2})}{E(T|D^{*})}}{1+\frac{T(p_{2})}{E(T|D^{*})}N_{r_{2}}}<\kappa_{r_{2}}\label{eq:Xlkappa-Threshold}\\
\text{reactive: \ensuremath{} } & \ensuremath{X_{l\kappa}=1-}\frac{T(p_{r_{2}})}{E(T|D^{*})}\left(1+N_{r_{2}}\right)<\kappa_{r_{2}}\label{eq:Xlkappa-Reactive}
\end{align}

\end{singlespace}

Let $X_{u\kappa}$ be the threshold for viability, 

\begin{equation}
X_{u\kappa}=\text{\ensuremath{\frac{1-\frac{T(p_{2})}{E(T|D^{*})}}{1-N_{r_{1}}+\frac{T(p_{2})}{E(T|D^{*})}N_{r_{2}}}}}\label{eq:Xukappa-threshold}
\end{equation}

The derivation of the thresholds are in the appendix. The adaptive
case is more restrictive (higher threshold) because the reactive consumer,
expecting $M=N$, consumes more, which is advantageous to the sole
producer. 
\begin{lem}
\label{lem:For-0-lt-kappa-lt-1}For a preferred distribution sequence
and point $r_{2}=\{c_{2},p_{2}\}$ with distribution $I_{r_{2}}$
such that $D'=D^{*}+I_{r_{2}}$, and where $0<\kappa_{r_{2}}<1$,
then $\Delta_{S}=\Delta_{V}>0$ when $X_{l\kappa}<\kappa_{r_{2}}<1$
\textup{, and the point will be viable when $\kappa_{ar_{2}}<X_{u\kappa}$.
}Thus for $D'$ to be viable and beneficial to the sole producer,
this inequality must hold $X_{l\kappa}<\kappa_{r_{2}}<\kappa_{ar_{2}}<X_{u\kappa}$. 
\end{lem}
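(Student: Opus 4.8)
The plan is to treat the two asserted conditions separately and then splice them into the stated chain. First I would fix the regime: since $0<\kappa_{r_2}<1$ forces $M(D^*)<M(D')<N(D')$ while $M(D^*)=N(D^*)$, both $D^*$ and $D'$ satisfy $N\geq M$, so by the second branch of the $\Delta_S$ formula the sole-producer and media-source increments coincide, $\Delta_S=\Delta_V$. This reduces the benefit assertion to an analysis of $\Delta_V$ alone.

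For the benefit condition I would substitute the additive-combination identities into equation \ref{eq:MediaSourceIncrementalValueMapping}. Using the normalization $N(D^*)=M(D^*)=1$ and $D'=D^*+I_{r_2}$, we have $N(D')=1+N_{r_2}$, $M(D')=1+\kappa_{r_2}N_{r_2}$ directly from the definition of $\kappa_{r_2}$, and $E(T|D')=(E(T|D^*)+T(p_2)N_{r_2})/(1+N_{r_2})$ as a volume-weighted average. Plugging these in, factoring out the common positive factor $N_{r_2}/(1+N_{r_2})$, and dividing by $E(T|D^*)$ collapses $\Delta_V>0$ to a linear inequality in $\kappa_{r_2}$ that rearranges to exactly $\kappa_{r_2}>X_{l\kappa}$ with $X_{l\kappa}$ as in \ref{eq:Xlkappa-Threshold}; together with the case hypothesis $\kappa_{r_2}<1$ this gives the first assertion. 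The reactive variant \ref{eq:Xlkappa-Reactive} follows from the identical computation but with actual participation taken as $N(D')$ rather than $M(D')$, which is the derivation I would defer to the appendix.

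For viability I would invoke Definition \ref{def:Viable-Y-At-D}: $I_{r_2}$ is viable after $D^*$ exactly when it could not optimally have been inserted before $r_1$, i.e. when adding $I_{r_2}$ at the base $D_a$ (where $D^*=D_a+I_{r_1}$) is less preferred under the mapping $\Delta_V$ than adding $I_{r_1}$ there. Running the same substitution at $D_a$ produces the marginal slope $\kappa_{ar_2}$ and, after rearrangement, the threshold $X_{u\kappa}$ of \ref{eq:Xukappa-threshold}, yielding viability $\iff \kappa_{ar_2}<X_{u\kappa}$. Since $N_{r_1}>0$ the denominator of $X_{u\kappa}$ is smaller than that of the adaptive $X_{l\kappa}$, so whenever $T(p_2)<E(T|D^*)$ — which is forced here by Lemma \ref{lem:LimitsOnTp-M'-afterD*} because $M(D')>M(D^*)$ — we automatically get $X_{l\kappa}<X_{u\kappa}$, confirming the window is nonempty.

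Finally, to fuse the two bounds into $X_{l\kappa}<\kappa_{r_2}<\kappa_{ar_2}<X_{u\kappa}$ I need the middle inequality $\kappa_{r_2}<\kappa_{ar_2}$. I would argue this from diminishing marginal participation: $\kappa_{r_2}$ and $\kappa_{ar_2}$ are the per-unit changes in $M$ produced by the \emph{same} increment $I_{r_2}$, the former applied to the larger, lower-$Q$ distribution $D^*=D_a+I_{r_1}$ and the latter to the smaller $D_a$; because the sequence is generally decreasing (Definition \ref{def:A-generally-decreasing-sequence}), the same content contributes less marginal reach once the base volume has already grown and $Q$ has already fallen, so its slope at $D^*$ lies strictly below its slope at $D_a$. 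I expect this middle step to be the main obstacle, since it is not a formal rearrangement but a monotonicity argument that must be made precise by comparing $Q(D')-Q(D^*)$ with $Q(D_{ar_2})-Q(D_a)$ and tracking how $M$ responds — the one place where the generally-decreasing hypothesis does real work rather than the algebra of the value function.
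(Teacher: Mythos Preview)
Your derivations of $X_{l\kappa}$ and $X_{u\kappa}$ by direct substitution into the value function and the order condition at $D_a$ are essentially what the paper does (via equations \ref{eq:Xl-condition}, \ref{eq:Delta-S-Inequality-with-kappa}, \ref{eq:Xu-condition}, \ref{eq:limit-on-kappa-ar2} in the appendix, packaged as Facts \ref{fact:UnderB3-Xl-lt-Xu}--\ref{fact:SC-4-kappa-r2-lt-1}), and your argument that $X_{l\kappa}<X_{u\kappa}$ by denominator comparison matches Fact \ref{fact:UnderB3-Xl-lt-Xu}.

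The gap is the middle inequality $\kappa_{r_2}<\kappa_{ar_2}$. You attribute it to the ``generally decreasing'' hypothesis (Definition \ref{def:A-generally-decreasing-sequence}), but that definition only bounds $Q(D_j)$ by $Q(D_0)$ globally and says nothing about marginal slopes. More concretely, in the paper's working scenario B3 one has $c_{r_1}=Q(D_a)$, so $Q(D^*)=Q(D_a)$ exactly---there is no ``$Q$ has already fallen'' to invoke, and your heuristic collapses. The paper instead obtains $M_{ar_2}>M_{r_2}$ (hence $\kappa_{ar_2}>\kappa_{r_2}$, since both slopes share the base value $N^*=M_a$) from Proposition \ref{prop:The-functions-Mr2-Mar2}, which is a direct comparison of the two weighted averages $Q(D_a+I_{r_2})$ and $Q(D^*+I_{r_2})$: because $N^*>N_a$ the weight on $c_{r_2}$ is strictly smaller when the increment is added to $D^*$, and since $c_{r_2}>Q(D_a)$ (forced by $\kappa_{r_2}>0$ under B3) this dilution pulls $Q(D')$ below $Q(D_{ar_2})$. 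So the inequality comes from the additive-$Q$ structure (Assumption \ref{assu:Q-For-2-non-overlapping}) and the volume asymmetry $N^*>N_a$, not from any global monotonicity of the sequence. Your instinct that ``the same content contributes less marginal reach once the base volume has already grown'' is the right picture, but the mechanism is the weighted-average formula; replace the appeal to Definition \ref{def:A-generally-decreasing-sequence} with that computation.
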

Lemma \ref{lem:For-0-lt-kappa-lt-1} provides thresholds for a viable
$I_{r_{2}}$ and positive $\Delta_{S}=\Delta_{V}$.The proof and
additional details are in the appendix. 
\begin{lem}
\label{lem:Impact-Tp1-on-viability}For a preferred distribution sequence
as in Lemma \ref{lem:For-0-lt-kappa-lt-1}, let $r_{1}=\{c_{1},p_{1}\}$
be the last point added prior to $D^{*}$ with distribution increment
$I_{r_{1}}$ such that $D^{*}=D_{a}+I_{r_{1}}$, and where $\frac{T(p_{1})}{E(T|D^{*})}<1$
, then if $M(D')>M(D^{*})$ and $0<\kappa_{r_{2}}$, there will be
an alternative more restrictive viability limit as

\textup{
\begin{equation}
X_{u\kappa}=\frac{\left(1-N_{r_{1}}+N_{r_{2}}\right)\left(\frac{T(p_{1})}{E(T|D_{a})}-1\right)\frac{N_{r_{1}}}{N_{r_{2}}}+\left(1-\frac{T(p_{2})}{E(T|D^{*})}\right)}{1-N_{r_{1}}+\frac{T(p_{2})}{E(T|D^{*})}N_{r_{2}}}\label{eq:ViabilityLimit-alternative}
\end{equation}
}

and a range of $N_{r_{2}}$ values starting at $N_{r_{2}}=0$ where
$X_{u\kappa}<X_{l\kappa}$ and the range will include all $N_{r_{2}}>0$
when:

\begin{equation}
\frac{T(p_{1})}{E(T|D^{*})}<\frac{\left(1-N_{r_{1}}\right)+\left(2+N_{r_{2}}-N_{r_{1}}\right)\frac{T(p_{2})}{E(T|D^{*})}N_{r_{2}}}{\left(1-N_{r_{1}}+N_{r_{2}}\right)\left(1+\frac{T(p_{2})}{E(T|D^{*})}N_{r_{2}}\right)}<1\label{eq:Tp1-threshold}
\end{equation}
\end{lem}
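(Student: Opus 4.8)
The plan is to turn viability into a single value comparison at the common base $D_a$ and then read the limit off the $\Delta_V$ preference map. By Definition \ref{def:A-preferred-distribution-sequence} the sequence inserts increments in order of decreasing $\Delta_V$-preference, and $r_1$ is the last one inserted before $D^*$, so $D^*=D_a+I_{r_1}$. Thus at the $D_a$ step the increment $r_1$ was at least as preferred as $r_2$, and by Definition \ref{def:Viable-Y-At-D} this is exactly the condition making $I_{r_2}$ viable after $D^*$. Since the preference $\Delta_V$ of adding a point to a base $D$ equals $V(D+I_r)-V(D)$ by equation \ref{eq:MediaSourceIncrementalValueMapping}, the shared term $-V(D_a)$ cancels in the comparison and viability reduces to the clean inequality $V(D_{ar_2})\leq V(D^*)$, that is $M(D_{ar_2})\,E(T|D_{ar_2})\leq E(T|D^*)$ under the normalization $M(D^*)=N(D^*)=1$. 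First I would solve the boundary $M(D_{ar_2})E(T|D_{ar_2})=E(T|D^*)$ for the slope $\kappa_{ar_2}=(M(D_{ar_2})-M(D_a))/N_{r_2}$, expanding $E(T|D_{ar_2})$ and $E(T|D^*)$ by the producer-value combination rule and resolving the intermediate participation level $M(D_a)$ through the balance $M(D^*)=1$ together with the content-rate combination rule $Q(D_{ar_2})=\frac{(1-N_{r_1})Q(D_a)+N_{r_2}c_{r_2}}{1-N_{r_1}+N_{r_2}}$ that underlies equation \ref{eq:restriction-for-Nr1-Nr2}. Carried through, this delivers equation \ref{eq:ViabilityLimit-alternative}, in which the factor $\left(\tfrac{T(p_1)}{E(T|D_a)}-1\right)$ records the depression of $r_1$'s preference caused by its below-average producer value and $\tfrac{N_{r_1}}{N_{r_2}}$ rescales the two increments onto a common slope basis.

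A consistency check steers the algebra: setting $T(p_1)=E(T|D_a)$ annihilates the extra numerator term and collapses equation \ref{eq:ViabilityLimit-alternative} to equation \ref{eq:Xukappa-threshold}, so the earlier limit is precisely the average-producer-value specialization of the new one. The standing hypothesis $\tfrac{T(p_1)}{E(T|D^*)}<1$, combined with the identity $E(T|D^*)=(1-N_{r_1})E(T|D_a)+N_{r_1}T(p_1)$, forces $\tfrac{T(p_1)}{E(T|D_a)}<1$; hence the extra numerator term is negative while the denominator is unchanged, and the new $X_{u\kappa}$ is strictly smaller, which is the claimed more restrictive limit.

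Next I would compare this $X_{u\kappa}$ with the benefit threshold $X_{l\kappa}$ of equation \ref{eq:Xlkappa-Threshold}. As $N_{r_2}\to0^{+}$ the term $\left(\tfrac{T(p_1)}{E(T|D_a)}-1\right)\tfrac{N_{r_1}}{N_{r_2}}$ sends the numerator to $-\infty$ while the denominator $1-N_{r_1}+\tfrac{T(p_2)}{E(T|D^*)}N_{r_2}$ stays positive (since $N_{r_1}<1$), so $X_{u\kappa}\to-\infty<X_{l\kappa}$; by continuity of this rational function there is a nonempty interval of small $N_{r_2}$ on which $X_{u\kappa}<X_{l\kappa}$, which is the first claim. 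Throughout that interval Lemma \ref{lem:For-0-lt-kappa-lt-1} closes the window, because a viable and beneficial $I_{r_2}$ would require $X_{l\kappa}<\kappa_{r_2}<\kappa_{ar_2}<X_{u\kappa}$, an impossibility once $X_{u\kappa}<X_{l\kappa}$.

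For the global statement I would impose $X_{u\kappa}<X_{l\kappa}$ for every $N_{r_2}>0$. Both denominators are positive, so cross-multiplying reduces the inequality, for each $N_{r_2}$, to the equivalent upper bound $\tfrac{T(p_1)}{E(T|D^*)}<\Xi(N_{r_2})$, where $\Xi$ is the middle expression of equation \ref{eq:Tp1-threshold} after the $E(T|D_a)$ factor is converted to $E(T|D^*)$ by the same identity; verifying $\Xi(N_{r_2})<1$ confirms consistency with the hypothesis, and the interval grows to all of $N_{r_2}>0$ exactly when this bound holds uniformly in $N_{r_2}$. I expect the principal obstacle to be the bookkeeping in the first step: running the producer-value and content-rate combination rules simultaneously and eliminating $M(D_a)$ so that the content-rate contributions collapse into the producer-value ratios and increment sizes of equation \ref{eq:ViabilityLimit-alternative}, with the $E(T|D_a)$-to-$E(T|D^*)$ conversion the likeliest place for a sign or normalization slip. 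I would also keep checking that both denominators stay positive, so that the final cross-multiplication preserves the direction of the inequality.
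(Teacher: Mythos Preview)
Your approach is essentially the paper's: both derive the alternative $X_{u\kappa}$ from the order condition at the common base $D_a$ under Sequence Scenario B4 (the paper via equation \eqref{eq:OrderCond-r1-r2} with $M_{ar_1}=N^*=1$), parametrize $M_{ar_2}=1+\kappa_{ar_2}N_{r_2}$, rearrange, and then compare with $X_{l\kappa}$ algebraically to extract the threshold $\tau$ of equation \eqref{eq:Tp1-threshold} and confirm $\tau<1$ (the paper records this as Fact~\ref{fact:Xl-gt-Xu-B4}). One caution: the paper never actually ``resolves $M(D_a)$'' through the content-rate combination rule---it simply leaves $M_{ar_2}$ parametrized by $\kappa_{ar_2}$ with base $1=N^*$, so the $Q$-combination machinery you mention is unnecessary and, if pursued literally, would inject $c$-dependence that the target formula \eqref{eq:ViabilityLimit-alternative} does not have.
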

Lemma \ref{lem:Impact-Tp1-on-viability} provides a more restrictive
alternative upper limit on viability by allowing the comparison point
$r_{1}$ to have a lower $p_{1}$ relative to the average $p$ for
the relevant distribution. This is shown visually in figure \ref{fig:Xu-Xl-comparison-T(p2)}
as the alt $X_{u\kappa}$. The proof and additional details are in
the appendix.

\begin{figure}[H]
\caption{\label{fig:Xu-Xl-comparison-T(p2)}}

\input{kappa-Nr2.tikz}
\end{figure}

The graph in \ref{fig:Xu-Xl-comparison-T(p2)} shows 3 bands of thresholds
(top, middle and bottom) with $X_{l\kappa}\text{, }X_{u\kappa}$,
and alternative $X_{u\kappa}$(as in Lemma \ref{lem:Impact-Tp1-on-viability}).
Also, associated with each band are possible overlapping $\kappa_{r_{2}}\text{ and }\kappa_{ar_{2}}$
lines. 
\begin{lem}
\label{lem:limits on Xu - Xl }The set of points that would fall between
$X_{l\kappa}\text{ and }X_{u\kappa}$ is limited by $\kappa_{ar_{2}}-\kappa_{r_{2}}$,
and the range of $\kappa_{ar_{2}}-\kappa_{r_{2}}$ increases as $\kappa_{ar_{2}}$
increases or similarly as $c_{2}$ increases%
. While, the value $X_{u\kappa}-X_{l\kappa}$ is increasing as $T(p_{2})$
decreases, and further as $T(p_{2})$ decreases both $X_{l\kappa}\text{ and }X_{u\kappa}$
increase. Thus, as $\kappa_{r_{2}}\rightarrow0$ , it must be the
case that $T(p_{2})\rightarrow1$ so that $\kappa_{ar_{2}}\text{ and }\kappa_{r_{2}}$
are in the range of the thresholds for $X_{l\kappa}\text{ and }X_{u\kappa}$.
\end{lem}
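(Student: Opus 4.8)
The plan is to treat the lemma as a bundle of comparative-statics facts about the window $(X_{l\kappa},X_{u\kappa})$ and the interval $[\kappa_{r_{2}},\kappa_{ar_{2}}]$, all organized around the sandwich $X_{l\kappa}<\kappa_{r_{2}}<\kappa_{ar_{2}}<X_{u\kappa}$ established in Lemma~\ref{lem:For-0-lt-kappa-lt-1}. That sandwich immediately yields the first assertion: any viable, beneficial point must have its interval $[\kappa_{r_{2}},\kappa_{ar_{2}}]$ nested inside the window, so a necessary condition for admissibility is $\kappa_{ar_{2}}-\kappa_{r_{2}}<X_{u\kappa}-X_{l\kappa}$; hence the admissible set is ``limited by'' the gap $\kappa_{ar_{2}}-\kappa_{r_{2}}$ fitting within the window. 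Write $\beta=\frac{T(p_{2})}{E(T|D^{*})}$ throughout.

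For the threshold side I would subtract the closed forms in the adaptive threshold (\ref{eq:Xlkappa-Threshold}) and (\ref{eq:Xukappa-threshold}) to obtain
\[
X_{u\kappa}-X_{l\kappa}=\frac{(1-\beta)N_{r_{1}}}{\left(1-N_{r_{1}}+\beta N_{r_{2}}\right)\left(1+\beta N_{r_{2}}\right)}.
\]
Reading off the sign of $\partial/\partial\beta$ is routine: the numerator $(1-\beta)N_{r_{1}}$ is decreasing in $\beta$ while both factors of the denominator are increasing in $\beta$, so the whole expression is strictly decreasing in $\beta$, i.e.\ strictly increasing as $T(p_{2})$ decreases, which is the stated monotonicity of $X_{u\kappa}-X_{l\kappa}$. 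The same inspection applied separately to $X_{l\kappa}=\frac{1-\beta}{1+\beta N_{r_{2}}}$ and $X_{u\kappa}=\frac{1-\beta}{1-N_{r_{1}}+\beta N_{r_{2}}}$ shows each is strictly decreasing in $\beta$ (numerator down, denominator up), so both thresholds rise as $T(p_{2})$ falls. I would also record the two boundary facts that drive the conclusion: both thresholds vanish exactly at $\beta=1$ and turn negative for $\beta>1$. The reactive threshold (\ref{eq:Xlkappa-Reactive}) is handled by the same one-line sign analysis and yields a parallel statement.

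The $\kappa$ side is where the real work lies. Using the weighted-average update $Q(D+I_{r_{2}})=\frac{N(D)Q(D)+N_{r_{2}}c_{2}}{N(D)+N_{r_{2}}}$ together with the identification $M(D)=M(Q(D))$ and the monotonicity of $M$ in $q$, both $\kappa_{r_{2}}$ and $\kappa_{ar_{2}}$ are increasing in $c_{2}$, since a larger $c_{2}$ raises the post-increment average and hence $M$. The difference $\kappa_{ar_{2}}-\kappa_{r_{2}}$ compares adding the \emph{same} increment $I_{r_{2}}$ to the smaller base $D_{a}$ (volume $1-N_{r_{1}}$) against the larger base $D^{*}$ (volume $1$); as $c_{2}$ grows the two base-averages are pulled apart by different amounts, and I would show the gap widens with $c_{2}$, equivalently with $\kappa_{ar_{2}}$. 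I expect \textbf{this monotonicity of the gap to be the main obstacle}, because it is not purely algebraic: it depends on the shape (concavity) of $M$, and I would discharge it by invoking the generally-decreasing structure of $\Theta$ and the explicit form of $M$ fixed in the appendix rather than by a generic argument.

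Finally I would combine the two sides to get the limiting conclusion. As $\kappa_{r_{2}}\to0^{+}$ (the content value $c_{2}$ descends toward the current average $Q(D^{*})$, so adding $I_{r_{2}}$ barely moves $M$), the left edge of the sandwich forces $X_{l\kappa}\le\kappa_{r_{2}}\to0$. Since $X_{l\kappa}=\frac{1-\beta}{1+\beta N_{r_{2}}}$ is strictly decreasing in $\beta$ and equals $0$ precisely at $\beta=1$, the only way to push $X_{l\kappa}$ down to $0$ is $\beta\to1$, i.e.\ $T(p_{2})\to E(T|D^{*})$. Consistency is automatic: at $\beta=1$ the window width computed above collapses to $0$, and by the $c_{2}$-monotonicity of the gap the interval $[\kappa_{r_{2}},\kappa_{ar_{2}}]$ also shrinks toward $0$, so both $\kappa$-values can still be accommodated in the vanishing window. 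This establishes that near $\kappa_{r_{2}}=0$ viability-with-benefit demands $T(p_{2})\to1$, completing the lemma.
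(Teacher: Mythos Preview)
Your proposal is correct and tracks the paper's own argument closely: the paper too invokes the sandwich from Lemma~\ref{lem:For-0-lt-kappa-lt-1} (packaged as Fact~\ref{fact:KappaDif-gt-Xuk-Xlk}), derives the identical closed form $X_{u\kappa}-X_{l\kappa}=\frac{(1-\beta)N_{r_{1}}}{(1-N_{r_{1}}+\beta N_{r_{2}})(1+\beta N_{r_{2}})}$ and differentiates it (Fact~\ref{fact:Xuk-Xlk-properties}), and the $\kappa_{r_{2}}\to0\Rightarrow\beta\to1$ conclusion follows exactly as you argue from $X_{l\kappa}\le\kappa_{r_{2}}$.

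The one place where the paper is sharper than your sketch is precisely the step you flag as the main obstacle. Rather than appealing directly to concavity of $M$, the paper introduces the ratio $m=M_{ar_{2}}/M_{r_{2}}$ and obtains the purely algebraic identity
\[
\kappa_{ar_{2}}-\kappa_{r_{2}}=\Bigl(1-\tfrac{1}{m}\Bigr)\Bigl(\kappa_{ar_{2}}+\tfrac{1}{N_{r_{2}}}\Bigr),
\]
which is just $M_{ar_{2}}-M_{r_{2}}=M_{ar_{2}}(1-1/m)$ rewritten via $M_{ar_{2}}=1+\kappa_{ar_{2}}N_{r_{2}}$. Once Proposition~\ref{prop:The-functions-Mr2-Mar2} gives $m>1$, the gap is visibly increasing in $\kappa_{ar_{2}}$; the power-law form of $M$ is then used only to express $m$ explicitly and confirm it grows with $c_{2}$. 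This buys a cleaner separation between the algebraic structure and the specific functional assumption on $M$ than the direct route you outline.
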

The proof and additional details are in the appendix. Lemma \ref{lem:limits on Xu - Xl }
shows that the necessary inequalities hold only when $\kappa_{ar_{2}}\text{ and }\kappa_{r_{2}}$
are in the same range as the thresholds from Lemma \ref{lem:For-0-lt-kappa-lt-1}.
A viable $D'$ with $\Delta_{S}=\Delta_{V}>0$ is unlikely, as values
for $\kappa$ closer to $1$ require $\frac{T(p_{2})}{E(T|D^{*})}$
to be closer to $0$ to be viable. Also, as $\frac{T(p_{2})}{E(T|D^{*})}$
approaches $1$, the chance of a viable option with $\Delta_{V}>0$
diminishes.

For further confirmation see the graph in figure \ref{fig:Xu-Xl-comparison-T(p2)}.%
{} This shows the narrowing of the distance between $X_{l\kappa}\text{ and }X_{u\kappa}$
as $\ensuremath{\frac{T(p_{2})}{E(T|D^{*})}}\ensuremath{\rightarrow1}$
and also between the $\kappa_{ar_{2}}\text{ and }\kappa_{r_{2}}$
that could be limited by the thresholds. Thus visually confirming
Lemma \ref{lem:limits on Xu - Xl }. For values of $\kappa$ closer
to $1$, $\frac{T(p_{2})}{E(T|D^{*})}$ needs to be closer to $0$
to be viable, and as $\frac{T(p_{2})}{E(T|D^{*})}$ approaches $1$,
the chance of a viable option with $\Delta_{V}>0$ diminishes.

Let $U(D)$ be the utility the consumer receives from consuming distribution
D, and let $\Delta_{U}=U(D')-U(D)$ the change in utility between
$D$ and $D'$. When $0<\kappa<1$ and the consumer is fully adaptive
the change in utility between $D^{*}$ and $D'$ is $\Delta_{U}=Q(D')M(D')-Q(D^{*})N(D^{*})$.
Since $Q(D')>Q(D^{*})$ and $M(D')>N(D^{*})$, it must be that $\Delta U>0$.
Conversely, $\kappa<0$ $\implies$ $\Delta U<0$. 

For the fully reactive consumer a 2 period analysis is used to approximate
the lifetime impact of when the consumer assumes $M=N$ and it is
not.%
{} Also, need to account for the opportunity cost for misappropriated
time when the consumer is over consuming at $N'>M'$. This would be
$-\left(N'-M'\right)\iota Q(D^{*})$, where $\iota>0$ would be the
gain in utility relative to $Q(D^{*})$ if the consumer is able to
fully adapt and optimally allocate time to limit consumption from
the media source at $M'$. Thus, the comparison for the reactive consumer
not to lose is,

\[
\left[1-\left(N'-M'\right)\iota\right]Q(D^{*})+N_{r_{2}}c_{r_{2}}+\frac{M'}{N'}Q(D^{*})>2Q(D^{*})
\]

The term $\frac{M'}{N'}Q(D^{*})$ is not an actual 2nd period outcome,
but an estimate of the lifetime impact from the overreach. Substituting
$N'=1+N_{r_{2}}$, $M(D^{*})=1$, $M(D')=1+\kappa N_{r_{2}}$, and
$Q(D')=\frac{Q(D^{*})+N_{r_{2}}c_{r_{2}}}{1+N_{r_{2}}}$ yields the
additional threshold,

\begin{equation}
X_{c\kappa}=\frac{1-\frac{c_{r_{2}}}{Q(D^{*})}\left(1+N_{r_{2}}\right)+\iota\left(1+N_{r_{2}}\right)}{1+\iota\left(1+N_{r_{2}}\right)}<\kappa_{r}\label{eq:Consumer-Kappa-Threshold}
\end{equation}

Combining \ref{eq:Consumer-Kappa-Threshold} and \ref{eq:Xlkappa-Reactive}
shows that $X_{c\kappa}>X_{l\kappa}$ when, %

\begin{equation}
\frac{T(p_{r_{2}})}{E(T|D^{*})}\left[1+\iota\left(1+N_{r_{2}}\right)\right]>\frac{c_{r_{2}}}{Q(D^{*})}\label{eq:Cond-for-ckappa-gt-lkappa}
\end{equation}

\begin{thm}[Equilibrium with $0<\kappa<1$]
\label{thm:0-lt-kappa-lt-1} When $M(D')>M(D^{*})$ and $0<\kappa<1$
the media source utilizing a preferred distribution sequence, the
media source will encounter 4 possible scenarios:

(i) where $\Delta_{S}=\Delta_{V}>0$ and $X_{c\kappa}<\kappa$ the
media source, sole producer, and consumer prefer $D'$,

(ii) where $X_{c\kappa}<\kappa<X_{l\kappa}$ the consumer prefers
$D'$ while the sole producer $D^{*}$,

(iii) where $X_{l\kappa}<\kappa<X_{c\kappa}$ the sole producer prefers
$D'$ while the consumer $D^{*}$,

(iv) where $\Delta_{S}=\Delta_{V}<0$ and $\kappa<X_{c\kappa}$ all
prefer to stay at $D^{*}$ and not move to $D'$.

{\setlength{\parindent}{0pt}{Further,}}

(a) Only scenarios (i) and (iv) are Nash equilibrium,

(b) The media source choice in scenarios (ii) and (iii) requires additional
assumptions,

(c) The more likely scenarios are (ii) and (iv),

(d) Scenarios (iii) and (i) are less likely to occur. %
{} 
\end{thm}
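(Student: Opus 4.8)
The plan is to reduce the three-agent comparison to a two-sided one. The key observation is that throughout this regime the sole producer and the media source share a single decision criterion: because $0<\kappa<1$ forces $N(D')>M(D')$ while $N(D^{*})=M(D^{*})$, the condition $N\geq M$ holds for both $D^{*}$ and $D'$, so $\Delta_{S}=\Delta_{V}$. Hence ``producer prefers $D'$'' and ``media source prefers $D'$'' collapse to the one statement $\Delta_{S}=\Delta_{V}>0$, which by Lemma \ref{lem:For-0-lt-kappa-lt-1} in its reactive form (equation \ref{eq:Xlkappa-Reactive}) is exactly $\kappa>X_{l\kappa}$. The consumer side is governed separately: for the reactive consumer (the binding case, since a fully adaptive consumer has $\Delta_{U}>0$ whenever $\kappa>0$), $D'$ is weakly preferred precisely when $\kappa>X_{c\kappa}$ by equation \ref{eq:Consumer-Kappa-Threshold}. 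I would then enumerate the four sign patterns of $(\kappa-X_{l\kappa},\,\kappa-X_{c\kappa})$ and verify they reproduce scenarios (i)--(iv), using equation \ref{eq:Cond-for-ckappa-gt-lkappa} to record that the relative order of $X_{l\kappa}$ and $X_{c\kappa}$ is what separates the two middle cases (ii) and (iii).

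Part (a) I would prove as a no-conflict/conflict dichotomy. In (iv) we have $\Delta_{V}<0$ and $\kappa<X_{c\kappa}$, so none of producer, media source, or consumer gains from shifting to $D'$; combined with Theorem \ref{thm:optimal-consumer-utility} and the earlier sub-$D^{*}$ analysis (which exclude profitable shifts to distributions before $D^{*}$), no party wishes to deviate and $D^{*}$ is a Nash equilibrium. In (i) all three strictly prefer $D'$, so the agreed stable point is $D'$: once the sequence advances one preferred increment, no party wishes to revert, again giving a Nash equilibrium. For (ii) and (iii) I would show the disagreement is genuine — in (ii) the consumer wants $D'$ while producer and media source (sharing $\Delta_{V}<0$) want $D^{*}$, and in (iii) the reverse — so at whichever of $D^{*}$ or $D'$ the media source settles, at least one agent strictly prefers to move, no single distribution is stable for all, and these are therefore not Nash equilibria.

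Part (b) then falls out of the structure just exposed: in (ii) and (iii) the media source's own criterion $\Delta_{V}$ points one way while the consumer's leverage (reduced participation or a side payment, as permitted in Section \ref{subsec:Analyzing-Nash-Equilibrium}) points the other, so resolving the choice demands an extra assumption on the relative weight or influence the consumer can exert — precisely the claim in (b). For (c) and (d) I would invoke Lemma \ref{lem:limits on Xu - Xl } together with Theorem \ref{thm:kappa-lt-0} and the viability-window discussion: a positive, viable $\Delta_{S}=\Delta_{V}$ after $D^{*}$ requires $T(p_{2})/E(T|D^{*})$ near zero for $\kappa$ near one and becomes vanishingly likely as $T(p_{2})/E(T|D^{*})\to 1$. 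Since (i) and (iii) both require $\kappa>X_{l\kappa}$, i.e.\ $\Delta_{V}>0$, they lie in this unlikely region, whereas (ii) and (iv) require only $\Delta_{V}<0$, the generic situation given that $M$ is generally decreasing and $D^{*}$ is the balance point; this gives (c) and (d).

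The main obstacle is two-fold. The sharper difficulty is making the Nash-equilibrium claim in (a) precise for scenario (i), where the equilibrium sits at $D'$ rather than at $D^{*}$: I must argue that advancing one preferred increment does not merely relabel the candidate point but lands on a distribution from which no party wishes to deviate, which requires checking that the same $\Delta_{S}=\Delta_{V}$ and consumer-threshold logic, re-applied at $D'$, does not immediately recommend a further shift. The softer difficulty is that (c) and (d) are inherently likelihood statements rather than strict implications; the honest route is to tie them entirely to the monotone behaviour of $X_{l\kappa}$, $X_{u\kappa}$ and the $\kappa$ curves established in Lemma \ref{lem:limits on Xu - Xl } and visualised in Figure \ref{fig:Xu-Xl-comparison-T(p2)}, and to present them as qualitative consequences of the shrinking viability window rather than as probabilistic theorems.
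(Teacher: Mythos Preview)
Your proposal is correct and follows essentially the same route as the paper: derive the four scenarios from the two thresholds $X_{l\kappa}$ (Lemma \ref{lem:For-0-lt-kappa-lt-1}) and $X_{c\kappa}$ (equation \ref{eq:Consumer-Kappa-Threshold}), establish (a) and (b) from the agree/disagree split, and invoke Lemma \ref{lem:limits on Xu - Xl } for the likelihood claims in (c) and (d). The one substantive difference is that the paper separates its argument for (iii) from that for (i): rather than your unified observation that both require $\kappa>X_{l\kappa}$, it argues directly that the ordering $X_{c\kappa}>X_{l\kappa}$---which (iii) alone requires---is unlikely, using equation \ref{eq:Cond-for-ckappa-gt-lkappa} and a reasonable upper bound on $\iota$; your argument also works, and the paper dispatches your ``main obstacle'' about the equilibrium at $D'$ in scenario (i) simply as obvious rather than re-running the threshold analysis there.
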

\begin{proof}
The distribution $D'$ is viable by definition \ref{def:A-preferred-distribution-sequence}
for a preferred incremental sequence. Scenarios (i) and (iv) result
from Lemma \ref{lem:For-0-lt-kappa-lt-1} and derivation of equation
\ref{eq:Consumer-Kappa-Threshold}. Scenarios (ii) and (iii) follow
from equations \ref{eq:Consumer-Kappa-Threshold} and \ref{eq:Xlkappa-Threshold}
and Lemma \ref{lem:For-0-lt-kappa-lt-1}. 

From equation \ref{eq:Cond-for-ckappa-gt-lkappa} when $\frac{T(p_{r_{2}})}{E(T|D^{*})}\left(1+N_{r_{2}}\right)<1$
it is clear that as $c_{r}$ increases $\iota$ must increase by a
larger multiple and from Lemma \ref{lem:limits on Xu - Xl } $\frac{T(p_{r_{2}})}{E(T|D^{*})}$
must decrease as $c_{r}\rightarrow1$ to maintain $\Delta_{S}=\Delta_{V}>0$,
and further given the reasonable upper limit on $\iota$, $X_{c\kappa}>X_{l\kappa}$
is much less likely to occur. Thus less likely for scenario (iii)
to occur. Scenario (i) is less likely to occur simply because $\kappa>X_{l\kappa}$
is less likely to occur per Lemmas \ref{lem:For-0-lt-kappa-lt-1}
- \ref{lem:limits on Xu - Xl }. Thus with scenarios (i) and (iii)
less likely to occur, (ii) and (iv) are the most likely to occur when
$M(D')>M(D^{*})$. 

The Nash equilibrium for (i) and (iv) is obvious given that neither
the sole producer or consumer have any incentive to alter choice of
the media source between $D^{*}$ and $D'$, and likewise lacking
in scenarios (ii) and (iii).
\end{proof}
Since scenarios (ii) and (iii) don't provide a definite outcome, it
would seem that the media source would need an alternative mechanism
to handle the scenarios, and especially since together they are the
most likely outcome when $M(D')>M(D^{*})$.

One possibility would be for the media source to adjust the final
distribution to $D^{*+}=D^{*}-Y+I_{r_{2}}$, where $Y$ is the distribution
removed from $D^{*}$, so that $M(D^{*+})=N(D^{*+})$ and make the
adjustments so that the sole producer was no worse off than with $D^{*}$
or $D'$. This would provide an optimal outcome for the consumer and
the media source and could be at least Pareto efficient for the sole
producer. It would also adhere to the objective of having $M(D)=N(D)$,
and move to a higher volume distribution. The following Lemma provides
a $S.C.$ for $Y.$
\begin{lem}
\label{lem:CriteriaForD*+}When $0<\kappa_{r_{2}}<1$ There is a distribution
$Y$ where $Y\subset D^{*}$ , $D'^{+}=D^{*}-Y+I_{r_{2}}$, $N(Y)=N_{Y}\leq N(D')-M(D')<N_{r_{2}}$,
$\frac{dN_{Y}}{dQ(Y)}>0$, and $M(D'^{+})=N(D'^{+})$ if this inequality
holds,

\begin{equation}
\frac{N_{Y}}{N_{r_{2}}}\leq\frac{c_{r_{2}}}{Q(Y)}\label{eq:ExistenceYforM-eq-N}
\end{equation}

And it will also be Pareto efficient between the consumer and sole
producer when this inequality also holds, 
\end{lem}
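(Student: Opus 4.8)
The plan is to treat the reductive modification as a one-parameter family and pin down the balancing point by an intermediate value argument, since $M$ is only assumed monotone in $Q$ and carries no closed form. First I would fix the normalization $N(D^{*})=M(D^{*})=1$ and record the consequences of $0<\kappa_{r_{2}}<1$: writing $D'=D^{*}+I_{r_{2}}$ gives $N(D')=1+N_{r_{2}}$ and $M(D')=1+\kappa_{r_{2}}N_{r_{2}}$, so the overreach is $N(D')-M(D')=N_{r_{2}}(1-\kappa_{r_{2}})>0$. For a candidate $Y\subset D^{*}$ of volume $N_{Y}$ and mean content $Q(Y)$, I would note that ${D'}^{+}=D'-Y$ has volume $N({D'}^{+})=1-N_{Y}+N_{r_{2}}$ and aggregate content value $N({D'}^{+})Q({D'}^{+})=Q(D^{*})+N_{r_{2}}c_{r_{2}}-N_{Y}Q(Y)$, so that inequality (\ref{eq:ExistenceYforM-eq-N}), rewritten as $N_{Y}Q(Y)\le N_{r_{2}}c_{r_{2}}$, is exactly the statement that the aggregate content value does not fall below that of $D^{*}$.

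Next I would establish existence of the balancing $Y$ by the intermediate value theorem. Define $h(N_{Y})=M({D'}^{+})-N({D'}^{+})$ along the family obtained by stripping the lowest-valued content of total mass $N_{Y}$ from $D'$. At $N_{Y}=0$ this is $M(D')-N(D')=-N_{r_{2}}(1-\kappa_{r_{2}})<0$. At the upper end $N_{Y}=N_{r_{2}}(1-\kappa_{r_{2}})$ the volume has fallen to $N({D'}^{+})=M(D')$, while removing below-average content only raises $Q$ and hence raises $M$, so $M({D'}^{+})\ge M(D')=N({D'}^{+})$ and $h\ge 0$. Monotonicity of $M$ in $Q$ makes $h$ continuous and increasing in $N_{Y}$, so a root $N_{Y}^{*}\in(0,N_{r_{2}}(1-\kappa_{r_{2}})]$ exists; this simultaneously yields $M({D'}^{+})=N({D'}^{+})$ and the bound $N_{Y}^{*}\le N(D')-M(D')<N_{r_{2}}$ claimed in the statement. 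Inequality (\ref{eq:ExistenceYforM-eq-N}) is the hinge here: it keeps $N({D'}^{+})Q({D'}^{+})\ge Q(D^{*})$, which together with $N({D'}^{+})>1$ forces $Q({D'}^{+})>Q(D^{*})$ and therefore drives $M$ upward fast enough to meet the reduced volume within the admissible window.

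With the equilibrium locus in hand I would obtain $\frac{dN_{Y}}{dQ(Y)}>0$ by implicit differentiation of $M({D'}^{+})=N({D'}^{+})$: raising $Q(Y)$ (removing more valuable content) lowers $Q({D'}^{+})$ and hence $M({D'}^{+})$, so $N_{Y}$ must increase to bring $N({D'}^{+})$ back down to the intersection, the sign following from $M'(Q)>0$ without any need for its magnitude. For the Pareto claim I would use $S(D)=E(T|D)\min(M,N)$ from (\ref{eq:SingleProducerValueFunction}) together with the equality $M({D'}^{+})=N({D'}^{+})$ to write $S({D'}^{+})=E(T|{D'}^{+})N({D'}^{+})=E(T|D^{*})+N_{r_{2}}T(p_{r_{2}})-N_{Y}E(T|Y)$, recalling $S(D^{*})=E(T|D^{*})$ under the normalization. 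The consumer is already strictly better off (higher rate $Q({D'}^{+})>Q(D^{*})$ and higher participation $N({D'}^{+})>1$), so ${D'}^{+}$ is Pareto improving, and efficient, precisely when the producer is not made worse off, i.e.\ when $N_{Y}E(T|Y)\le N_{r_{2}}T(p_{r_{2}})$ --- the producer-value analogue of (\ref{eq:ExistenceYforM-eq-N}) that I expect to be the second displayed inequality.

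The step I expect to be the main obstacle is reconciling all the constraints on the single object $Y$: the same $Y$ must balance $M$ against $N$, keep $N_{Y}$ below the overreach $N_{r_{2}}(1-\kappa_{r_{2}})$, satisfy the content inequality (\ref{eq:ExistenceYforM-eq-N}), and leave the sole producer no worse off. Because $M$ is only monotone and has no closed form, $N_{Y}^{*}$ cannot be solved explicitly, so the substance of the argument is showing the feasible region in the $(N_{Y},Q(Y))$ plane cut out by these conditions is nonempty. This reduces to a statement about $D^{*}$ itself: its removable content must be low enough in $c$ for the content-value inequality and low enough in $T(p)$ for the producer inequality to hold at the balancing mass $N_{Y}^{*}$, and exhibiting content that is jointly low on both scales --- or arguing such content must exist given that $D^{*}$ was assembled as the terminal element of a preferred sequence --- is the delicate point.
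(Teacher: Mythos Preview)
Your proposal is correct and follows essentially the same route as the paper: the same normalization $N(D^{*})=M(D^{*})=1$, the same rewriting of inequality~(\ref{eq:ExistenceYforM-eq-N}) as the aggregate-content condition $N_{Y}Q(Y)\le N_{r_{2}}c_{r_{2}}$, the same monotonicity argument for $\frac{dN_{Y}}{dQ(Y)}>0$, and the same producer bound $N_{Y}E(T|Y)\le N_{r_{2}}T(p_{r_{2}})$ for the Pareto part. Your intermediate-value framing for the existence of the balancing $Y$ is a clean formalization of the paper's iterative argument (building $Y$ point by point until $\bar N_{Y}\approxeq N_{Y}$), and your final worry about the joint feasibility of all constraints is exactly what the paper delegates to Algorithm~\ref{alg:GeneratingCarveout} rather than resolving inside the lemma.
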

\begin{equation}
\frac{N_{Y}}{N_{r_{2}}}\leq\frac{T(p_{r_{2}})}{E(T|Y)}\label{eq:CondForParetoY}
\end{equation}

Proof is in the appendix. The algorithm for generating the carveout
$Y$ %
is listed below and discussed further in section \ref{sec:Discussion}. 

\subsubsection{When $1\protect\leq\kappa_{r}$}
\begin{lem}
\label{lem:For-kappa-geq-1}For a preferred distribution sequence,
let $r_{2}=\{c_{2},p_{2}\}$ be a point with distribution $I_{r_{2}}$
such that $D'=D^{*}+I_{r_{2}}$ and where $1<\kappa_{r_{2}}$, then:

(i)\textup{ $\Delta_{S}=N_{r_{2}}T(p_{2})$, } 

(ii) the condition for $\Delta_{V}>0$ is, $X_{l\kappa}=\frac{1-\frac{T(p_{2})}{E(T|D^{*})}}{1+\frac{T(p_{2})}{E(T|D^{*})}N_{r_{2}}}<\kappa_{r_{2}}\text{, and}$

(iii)$X_{l\kappa}>1\iff\frac{T(p_{2})}{E(T|D^{*})}<0$ and $X_{l\kappa}=1\iff\frac{T(p_{2})}{E(T|D^{*})}=0$,
thus 

(iv) $X_{l\kappa}>1\implies\text{\ensuremath{\Delta_{S}<0}}$ and
$X_{l\kappa}=1\text{ and \ensuremath{\kappa_{r_{2}}=1}}\implies\text{\ensuremath{\Delta_{S}=0}}$. 
\end{lem}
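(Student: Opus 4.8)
The plan is to first pin down the regime forced by the hypothesis $1\leq\kappa_{r_2}$, since this is what makes every part fall out cleanly. With the stated normalization $N(D^*)=M(D^*)=1$ and $N(D')=1+N_{r_2}$, the defining ratio $\kappa_{r_2}=(M(D')-M(D^*))/(N(D')-N(D^*))$ gives $M(D')=1+\kappa_{r_2}N_{r_2}$. Because $\kappa_{r_2}\geq 1$, this yields $M(D')=1+\kappa_{r_2}N_{r_2}\geq 1+N_{r_2}=N(D')$, so $M(D')\geq N(D')$ while $M(D^*)=N(D^*)$. Hence $\min(M,N)=N$ at both distributions, and $S(D)=E(T|D)N(D)$ throughout this range. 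Establishing this is the structural heart of the argument, because it fixes which branch of the two-part formula for $\Delta_S$ applies.

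For (i) I would use additivity of the total producer value across the increment. Since $E(T|D)N(D)$ is simply the $T$-weighted unit count of $D$, and $D'=D^*+I_{r_2}$ appends $N_{r_2}$ units all located at the single point $r_2=\{c_2,p_2\}$, the additive combination of frequencies gives $E(T|D')N(D')=E(T|D^*)N(D^*)+N_{r_2}T(p_2)$. Subtracting and using $S(D)=E(T|D)N(D)$ from the regime above gives $\Delta_S=S(D')-S(D^*)=N_{r_2}T(p_2)$, which is (i).

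For (ii) I would substitute $N(D^*)=M(D^*)=1$, $N(D')=1+N_{r_2}$, and $M(D')=1+\kappa_{r_2}N_{r_2}$ into the incremental value expression in equation \ref{eq:MediaSourceIncrementalValueMapping}. The first bracketed difference collapses to $(\kappa_{r_2}-1)N_{r_2}/(1+N_{r_2})$; factoring out the common positive factor $N_{r_2}/(1+N_{r_2})$ reduces the condition $\Delta_V>0$ to $E(T|D^*)(\kappa_{r_2}-1)+T(p_2)(1+\kappa_{r_2}N_{r_2})>0$. Dividing by $E(T|D^*)>0$ and collecting the $\kappa_{r_2}$ terms isolates $\kappa_{r_2}>X_{l\kappa}$ with $X_{l\kappa}$ exactly as stated, proving (ii).

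Parts (iii) and (iv) are then bookkeeping on the closed form. Writing $\tau=T(p_2)/E(T|D^*)$, so that $X_{l\kappa}=(1-\tau)/(1+\tau N_{r_2})$, the equation $X_{l\kappa}=1$ rearranges to $-\tau(1+N_{r_2})=0$, i.e.\ $\tau=0$; and, provided the denominator $1+\tau N_{r_2}$ is positive, $X_{l\kappa}>1$ cross-multiplies to $-\tau(1+N_{r_2})>0$, i.e.\ $\tau<0$. I expect the one genuinely delicate point to be exactly this denominator sign: a sufficiently negative $\tau$ (namely $\tau<-1/N_{r_2}$) reverses it and breaks the clean equivalence, so I would state that positivity as the natural operating range and dispatch the extreme case separately. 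Finally (iv) follows by composing (i) and (iii): since $\Delta_S=N_{r_2}T(p_2)$ with $N_{r_2}>0$ and $E(T|D^*)>0$, the sign of $\Delta_S$ matches the sign of $\tau$, so $X_{l\kappa}>1\iff\tau<0\iff\Delta_S<0$; and at the boundary $\kappa_{r_2}=1$, where $M(D')=N(D')$ and hence $\Delta_S=\Delta_V$, the case $X_{l\kappa}=1$ forces $\tau=0$ and therefore $\Delta_S=0$.
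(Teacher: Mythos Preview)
Your proposal is correct and follows essentially the same route as the paper: identify that $\kappa_{r_2}\geq 1$ forces $M(D')\geq N(D')$, use additivity of $E(T|D)N(D)$ to get $\Delta_S=N_{r_2}T(p_2)$, substitute the normalized quantities into the $\Delta_V$ formula and algebraically isolate $\kappa_{r_2}>X_{l\kappa}$, then read off the sign equivalences. If anything, you are slightly more careful than the paper in flagging the sign of the denominator $1+\tau N_{r_2}$, which the paper's derivation tacitly assumes positive.
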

This lemma indicates that when $1\leq X_{l\kappa}<\kappa_{r_{2}}$
and point $r_{2}$ is viable, it will be optimal for the media source
and the consumer to go beyond $D^{*}$, but not for the producer,
unless $T(p_{2})=0$, see Lemma \ref{lem:for-DeltaS-Below-D*}. Further
details and the proof are in the appendix.%

\begin{defn}[Additional $D^{*}$]
 Let $D^{n*}$ be an additional distribution after the initial $D^{*}$
where $M(D^{n*})\approxeq N(D^{n*})$, and where $n$ indicates the
$nth$ occurrence.
\end{defn}
\begin{lem}
\label{lem:DeltaV-between-2-local-optimal}For a preferred distribution
sequence a $D^{2*}$ can only exist if $\kappa\geq1$ , and if $D^{2*}$
exists in the sequence, then $\Delta_{V}=0$ and $\Delta_{S}=0$ between
$D^{*}$ and $D^{2*}.$ 
\end{lem}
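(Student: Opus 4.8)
The plan is to collapse the entire stretch between the two successive crossings into a single aggregate increment $Y$ with $D^{2*}=D^{*}+Y$, normalize $N(D^{*})=M(D^{*})=1$, and reason with the gap function $g(D)=N(D)-M(D)$ together with the incremental value formula \ref{eq:MediaSourceIncrementalValueMapping}.

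For the necessity of $\kappa\geq1$, I would track $g$ along the sequence. At a crossing $g=0$, and for an incremental step the change is $\Delta g=\Delta N\,(1-\kappa_{r})$, so $g$ strictly increases when $\kappa_{r}<1$, stays fixed when $\kappa_{r}=1$, and decreases only when $\kappa_{r}>1$. Starting from $g=0$ at $D^{*}$, any portion of the sequence traversed with $0<\kappa_{r}<1$ (and, a fortiori, with $\kappa_{r}\le0$, which lowers $M$) strictly widens the overreach — this is exactly the regime in which $M(D')<N(D')$ analysed before \ref{eq:Xlkappa-Threshold}. Hence $g$ can return to $0$, producing a second crossing $D^{2*}$, only if the sequence enters the regime $\kappa_{r}\geq1$; equivalently no $D^{2*}$ can exist while $\kappa<1$ throughout. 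This gives the first assertion, and in fact pins the aggregate rate: since both endpoints satisfy $M=N$, we have $M(D^{2*})-M(D^{*})=N(D^{2*})-N(D^{*})$, so $\kappa_{Y}=1$ exactly.

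For the value changes I would first prove $\Delta_{V}=\Delta_{S}$ and only then the vanishing. At any crossing $\min(M,N)=M=N$, so $S(D)=E(T|D)\min(M,N)=E(T|D)M(D)=V(D)$; evaluating at $D^{*}$ and $D^{2*}$ gives $\Delta_{V}=V(D^{2*})-V(D^{*})=S(D^{2*})-S(D^{*})=\Delta_{S}$. Substituting $\kappa_{Y}=1$, i.e. $M(D^{2*})/N(D^{2*})=M(D^{*})/N(D^{*})=1$, into \ref{eq:MediaSourceIncrementalValueMapping} kills the participation (first) term and leaves $\Delta_{V}=\Delta_{S}=E(T|Y)\,N_{Y}$ with $N_{Y}=N(D^{2*})-N(D^{*})>0$. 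The whole statement therefore reduces to the single claim $E(T|Y)=0$.

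The hard part will be establishing that last equality, since the computation above only shows that $\Delta_{V}=\Delta_{S}$ carries the sign of $E(T|Y)$. My approach would be to feed the aggregated increment $Y$, whose rate is pinned at $\kappa_{Y}=1$, into Lemma \ref{lem:For-kappa-geq-1}: by parts (iii)--(iv) the sign-change threshold satisfies $X_{l\kappa}=1\iff E(T|Y)=0$, with $X_{l\kappa}<1\iff E(T|Y)>0$ and $X_{l\kappa}>1\iff E(T|Y)<0$, so the system lies on the knife-edge $X_{l\kappa}=\kappa_{Y}=1$ precisely when the producer value of the bridging content is null. To force it there I would squeeze: Lemma \ref{lem:LimitsOnTp-M'-afterD*} applied across $Y$ (the content that lifts $M$ back up to $N$ necessarily raises the average, $Q(D^{2*})>Q(D^{*})$, hence carries above-average $c$ and therefore below-average $T(p)$) bounds $E(T|Y)$ from above, while the equilibrium optimality of $D^{*}$ from Theorems \ref{thm:kappa-lt-0} and \ref{thm:0-lt-kappa-lt-1} rules out a strictly value-reducing traversal and bounds $E(T|Y)$ from below. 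I expect the delicate point to be closing this squeeze to exact equality — explaining why the net producer value of everything inserted between two optima is exactly zero rather than merely of one sign — and would settle it directly from the threshold identities of Lemma \ref{lem:For-kappa-geq-1} evaluated at $\kappa_{Y}=1$.
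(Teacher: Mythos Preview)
Your reduction is essentially the paper's own: normalize $N(D^{*})=M(D^{*})=1$, observe that $M=N$ at both endpoints forces the aggregate $\kappa_{Y}=1$, substitute into \eqref{eq:MediaSourceIncrementalValueMapping} so the participation term vanishes, and arrive at $\Delta_{V}=\Delta_{S}=H(T|Y)\,N_{Y}$. Your $g$-tracking argument for the first clause (necessity of $\kappa\geq1$) is in fact more explicit than anything the paper writes; its proof addresses only the $\Delta_{V}$ computation and takes the first clause as understood.

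Where you diverge is the last step, $H(T|Y)=0$. The paper does not argue it: it literally writes ``noting $T(p)=0$'' and substitutes. That assertion is inherited from the discussion surrounding Lemma~\ref{lem:For-kappa-geq-1}, where the operative viable regime for $\kappa\geq1$ is $1\leq X_{l\kappa}$, which by part~(iii) means $T(p_{2})\leq0$; the boundary $T(p)=0$ is then taken as the governing case for the bridging content. The paper treats this as context, not as something to be re-derived inside the proof of Lemma~\ref{lem:DeltaV-between-2-local-optimal}.

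Your squeeze, by contrast, does not close with the tools you cite. Lemma~\ref{lem:LimitsOnTp-M'-afterD*} delivers only $E(T|Y)<E(T|D^{*})$---``below-average producer value''---which is far weaker than $E(T|Y)\leq0$. Theorems~\ref{thm:kappa-lt-0} and~\ref{thm:0-lt-kappa-lt-1} characterize Nash structure at $D^{*}$ among the three stakeholders; they do not bound the media source's \emph{aggregate} $\Delta_{V}$ across a multi-step traversal from below, since the preferred sequence maximizes step-by-step increments, not end-to-end value. And invoking Lemma~\ref{lem:For-kappa-geq-1}(iii)--(iv) at $\kappa_{Y}=1$ is circular: those identities merely restate $\Delta_{V}=0\iff T(p)=0$ rather than forcing either side. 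What the paper actually does---import $T(p)=0$ directly from the Lemma~\ref{lem:For-kappa-geq-1} setting---is the step you should plug in, rather than attempt the squeeze.
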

See appendix for the proof, and note the Lemma generalizes to $D^{n*}$
and $D^{n+1*}$. When there is a $\kappa\geq1$ it effectively returns
to the state where $M(D)>N(D)$ and repeats the steps from before
the first $D^{*}$. 

\begin{algorithm}[H]
\caption{\label{alg:DeterminingD*}Determining $D^{*}$}

\lstset{
  language={algpseudocode},
  columns=fullflexible, mathescape=true,
  sensitive=false,basicstyle=\small,
  tabsize=4,
  numbers=left,
  numberstyle=\scriptsize, }
\addfunctions{BestIncrement,BestNextInSequence}
\addkeywords{exit,when,loop}

\begin{lstlisting}[firstnumber=1,xleftmargin=12pt,framextopmargin=-1pt,numbersep=5pt]
Input:
	$D_{all}$ := distribution with all content possible or predicted to be possible for the consumer
	RatioThreshold := threshold for $\frac{M}{N}$ to indicate close enough to 1 // eg. 1.05 or 0.95
Output:	$D^{*}$ and $N^{*}$
Procedure:
	D := $D_{0}$ + BestIncrement($D_{0}$,$D_{all}$) // Initial distribution
repeat
	if ( $\frac{M(D)}{N(D)}$ $\leq$ RatioThreshold )
		$D'$ :=  BestNextInSequence(D,$D_{all}$)
		$\kappa$ := $\frac{(M(D')-M(D))}{(N(D')-N(D))}$
		when ( $\kappa \leq 0$ ) exit loop
		when ( $\kappa \geq 1$ ) D := $D'$
		when ( $0 < \kappa < 1$ ) D :=  $D' - $ getCarveout(D,$D'$)  // See Algorithm 2
		continue loop
	endif
	D := D + BestIncrement(D,$D_{all}$)
until exit
$D^{*}$ := D
$N^{*}$ := N$(D^{*})$
function BestIncrement(D,$D_{all}$) 
	points := getPointsToEvalute(D,$D_{all}$) // gets subset of points in $D_{all} - $ D
	use mapping $\Lambda(r,D)$ to rank points and get TopPoint // could get multiple points 
	return TopPoint 
end function
function BestNextInSequence(D,$D_{all}$) // 
	$D'$ := D
	repeat
		$D'$ := $D'$ + BestIncrement($D'$,$D_{all}$)
		$\kappa$ := $\frac{(M(D')-M(D))}{(N(D')-N(D))}$
		when ( $\kappa$ $\geq$ 1 || $\kappa$ $\leq$ 0) return $D'$
		when ($\kappa$ is increasing ) continue loop
		return $D'$
	until exit
end function

\end{lstlisting}
\end{algorithm}

\begin{thm}[Equilibrium with $1<\kappa_{r_{2}}$]
\label{thm:kappa-gt-1}For a preferred distribution sequence and
a point $r_{2}=\{c_{2},p_{2}\}$ with distribution $I_{r_{2}}$ such
that $D'=D^{*}+I_{r_{2}}$ and where $1<\kappa_{r_{2}}$, if $D'$
is viable and $\Delta_{V}>0$, the media source and the consumer would
prefer $D'$ over $D^{*}$and the media source would want to continue
through the sequence of distributions after $D'$ until $D^{2*}$
\textup{was reached, and $D^{2*}$ would be a Nash Equilibrium unless
there was a $D^{3*}$. }
\end{thm}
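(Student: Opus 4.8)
The plan is to exploit the self-similarity of the balance condition $M=N$: once I show that a $\kappa_{r_{2}}>1$ increment pushes the sequence back into the underserved regime $M(D)>N(D)$, the situation at $D^{2*}$ becomes a verbatim copy of the situation at $D^{*}$, and the conclusion follows by re-applying the earlier case analysis recursively.

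First I would record the elementary consequence of $\kappa_{r_{2}}>1$. By the definition $\kappa_{r_{2}}=\bigl(M(D')-M(D^{*})\bigr)/\bigl(N(D')-N(D^{*})\bigr)$ together with the balance $M(D^{*})=N(D^{*})$ and $N(D')>N(D^{*})$, the inequality $\kappa_{r_{2}}>1$ gives $M(D')-N(D')>M(D^{*})-N(D^{*})=0$, so $M(D')>N(D')$. Hence $D'$ lies in the underserved regime where $W(D')=\min(M(D'),N(D'))=N(D')>N(D^{*})=W(D^{*})$. Moreover $\kappa_{r_{2}}>0$ forces $M(D')>M(D^{*})$, and since $M$ is monotone increasing in $q$ this yields $Q(D')>Q(D^{*})$. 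The consumer therefore consumes the full $N(D')$ at the strictly higher rate $Q(D')$ --- no overreach arises because $N(D')\le M(D')$, so the reactive/adaptive distinction is irrelevant here --- giving $U(D')=Q(D')N(D')>Q(D^{*})N(D^{*})=U(D^{*})$. The media source's preference is immediate from the hypothesis $\Delta_{V}>0$. This settles the first claim that both media source and consumer prefer $D'$.

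Next I would argue that the media source drives the sequence forward to $D^{2*}$. Because $M(D')>N(D')$, the distribution $D'$ and its successors reproduce exactly the pre-$D^{*}$ situation of Lemma \ref{lem:For-a-DistributionSequence-MathMax} and Theorem \ref{thm:optimal-consumer-utility}: actual participation $W=\min(M,N)=N$ is strictly increasing under the expansive sequence while $M$ stays above $N$, and the value $V=E(T\mid D)M(D)$ continues to improve under each viable preferred increment. Invoking Lemma \ref{lem:for-DeltaS-Below-D*} for the producer side and the monotone structure of the sequence, continued expansion remains beneficial until the next balance point $M(D)=N(D)$ is reached, which is by definition $D^{2*}$. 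This is the step I expect to require the most care, since one must confirm that no earlier stopping point intervenes and that the trichotomy on $\kappa$ is applied consistently as the normalization $N^{*}=1$ and the comparison point $r_{1}$ shift along the sequence.

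Finally I would establish the equilibrium characterization at $D^{2*}$. Since $M(D^{2*})=N(D^{2*})$, the distribution $D^{2*}$ is structurally indistinguishable from $D^{*}$, so the earlier analysis applies to increments taken from it. For backward deviations, Lemma \ref{lem:DeltaV-between-2-local-optimal} gives $\Delta_{V}=\Delta_{S}=0$ between $D^{*}$ and $D^{2*}$, while retreating to any distribution strictly before $D^{2*}$ lowers $W=N$ for the consumer and, by Lemma \ref{lem:for-DeltaS-Below-D*}, lowers $S$ for the producer; hence no party gains by moving back. For forward deviations, the increment out of $D^{2*}$ again falls into one of the three cases: if its $\kappa\le 0$ or $0<\kappa<1$, Theorems \ref{thm:kappa-lt-0} and \ref{thm:0-lt-kappa-lt-1} show $D^{2*}$ is a Nash equilibrium (possibly after the carveout adjustment), whereas the media source strictly wants to advance only when $\kappa\ge 1$, which by the generalization of Lemma \ref{lem:DeltaV-between-2-local-optimal} to $D^{n*}$ and $D^{n+1*}$ is exactly the condition under which a $D^{3*}$ exists. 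Thus $D^{2*}$ is a Nash equilibrium unless a $D^{3*}$ is present, as claimed.
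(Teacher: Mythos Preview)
Your argument is correct and tracks the paper's reasoning, though you expand considerably on what the paper compresses into two sentences. The paper's proof simply cites Lemma~\ref{lem:For-kappa-geq-1} and Lemma~\ref{lem:DeltaV-between-2-local-optimal} and then remarks that the sole producer neither gains nor loses between $D^{*}$ and $D^{2*}$ and so has no incentive to oppose the shift. You instead re-derive the relevant content of Lemma~\ref{lem:For-kappa-geq-1} from the definition of $\kappa_{r_{2}}$ (obtaining $M(D')>N(D')$ and hence the consumer's strict preference), and you add an explicit backward/forward deviation analysis at $D^{2*}$ that invokes Theorems~\ref{thm:kappa-lt-0} and~\ref{thm:0-lt-kappa-lt-1}. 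Both routes ultimately rest on Lemma~\ref{lem:DeltaV-between-2-local-optimal} for the producer's indifference between the two balance points.

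One small overclaim to correct: your sentence ``by Lemma~\ref{lem:for-DeltaS-Below-D*}, lowers $S$ for the producer'' when retreating from $D^{2*}$ presupposes $E(T\mid Y)>0$ for each intermediate increment, which is not guaranteed here; indeed Lemma~\ref{lem:For-kappa-geq-1}(i) gives $\Delta_{S}=N_{r_{2}}T(p_{2})$ for the first step, and this can be nonpositive. The paper is careful on this point and only asserts that the producer \emph{would not lose} over the full move to $D^{2*}$ (via $\Delta_{S}=0$ from Lemma~\ref{lem:DeltaV-between-2-local-optimal}), not that each backward step hurts. Weakening your producer claim to indifference, as the paper does, is enough for the Nash conclusion.
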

\begin{proof}
The proof follows from Lemmas \ref{lem:For-kappa-geq-1} and \ref{lem:DeltaV-between-2-local-optimal}.
Further, the sole producer would not directly benefit from the shift
from $D^{*}$ to $D^{2*}$, but would not lose either, and would have
no incentive to oppose the shift.
\end{proof}
The Theorem provides for a possible shift from one $D^{*}$ to another.
However, the availability of this option would be an infrequent outcome.
The more likely case would be when there is no viable shift past $D^{*}$
where $\Delta_{V}>0$ and $1<\kappa$. That is not to say that a $D^{n*}$
is not possible. Over a period of time with many revisions of the
sequence $\Theta$, along with determining a new $D^{*}$ for each
revision, or across many consumers each with their own sequence $\Theta$,
a few occurrences of a second $D^{*}$ could be likely under certain
conditions. In practice it may be beneficial to check several distributions
above $D^{*}$ in the sequence for a distribution that has $\kappa>1$
relative to $D^{*}$. If found that distribution would be the $D'$
in Theorem \ref{thm:kappa-gt-1}, see Algorithm \ref{alg:DeterminingD*}.
\begin{algorithm}[h]
\caption{\label{alg:GeneratingCarveout}Generating the Carveout $Y$}

\lstset{
  language={algpseudocode},
  columns=fullflexible, mathescape=true,
  sensitive=false,basicstyle=\small,
  tabsize=4,
  numbers=left,
  numberstyle=\scriptsize, }
\addfunctions{BestIncrement,BestNextInSequence}
\addkeywords{exit}

\begin{lstlisting}[firstnumber=1,xleftmargin=12pt,framextopmargin=-1pt,numbersep=5pt]
Input:	$D^{*}$
Output:	$Y$ // Carveout Distribution
Procedure:
	$\bar{N_{Y}}$ := 0 // interim value for $\sum_{{x}\in{Y}}{N_x} $
	$S_T$ := 0 // interim value for $\sum_{{x}\in{Y}}{N_x}{T(p_x)} $
	$S_c$ := 0 // interim value for $\sum_{{x}\in{Y}}{N_x}{c_x} $
	$Y$ := empty distribution
	repeat
		Points := getPointsToEvalute($D^*-Y$)  
		x := the point from points with the lowest $c<c_{r_{2}}$ that satisfies the condition $S_c + N_{x}c_{x}\leq c_{r_{2}}N_{r_{2}}$ and
		also satisfies the condition $S_T + N_{x}T(p_{x})\leq N_{r_{2}}T(p_{r_{2}})$ as in Lemma 8
		if ( no point satisfies the condition ) exit
		$\bar{N_{Y}}$ := $\bar{N_{Y}}+N_{x}$
		$S_T$ := $S_T + N_{x}T(p_{x})$
		$S_c$ := $S_c + N_{x}c_{x}$
		$Y$ := $Y+I_{x}$
	until $N'-M\left(\frac{Q(D^{*})-\bar{N_{Y}}Q(Y)+N_{r_{2}}c_{r_{2}}}{1-     \bar{N_{Y}}+N_{r_{2}}}\right)\approxeq\bar{N_{Y}}$
\end{lstlisting}
\end{algorithm}

\section{\label{sec:Discussion}Discussion}

\subsection{Algorithm for Determining $D^{*}$}

Algorithm \ref{alg:DeterminingD*} provides a simple suggestive option
for determining $D^{*}$ for each consumer, where the determination
of $D^{*}$ coincides with the determination of at least part of the
sequence $\Theta$. Variations of this algorithm are possible,
including altering the stopping conditions or the handling for the
$0\leq\kappa\leq1$ case.

An alternative approach, that may be advantageous in some cases, computes
the entire sequence $\Theta$ first, then selects the distribution
with the highest volume, $N$, from those distributions that have
a $\text{\ensuremath{N\approxeq M}}.$ Another approach that may have
some advantages, starts at $D_{all}$ or another significantly large
distribution, and generates the preferred incremental sequence using
reductive steps. The possible disadvantage of this alternative is
that there might be more uncertainty with starting closer to $D_{all}$. 

In practice, a procedure of this type would need to integrate with
other media source systems. In particular, a system for assigning
$c$ and $p$ values to content units for a consumer, a system for
generating the $D_{all}$ distribution, and a system for determining
the mapping between $q$ and $M$.%
{} 

Further note, if the $D_{all}$ distribution represents the actual
candidate content then $D^{*}$ would be the content made available
for the consumer. Alternatively, if the $D_{all}$ distribution is
predicted content, the region associated with $D^{*}$ becomes the
determinator of what content would be made available to the consumer. 

Algorithm \ref{alg:GeneratingCarveout} provides the steps for generating
the carveout $Y\subset D^{*}$ by iteratively selecting points that
satisfy the conditions of Lemma \ref{lem:CriteriaForD*+}. This is
a suggestive example, and variations are possible.%
{} The algorithm leverages the fact from Lemma \ref{lem:CriteriaForD*+}
that since $N_{Y}<N_{r_{2}}$ it is possible that $E(T|Y)>T(p_{r_{2}})$
and inequality \ref{eq:CondForParetoY} will still hold. Also since
$N_{Y}$ decreases as $Q(Y)$ decreases, there is even more room to
increase $E(T|Y)$. Thus choosing points with $c<c_{r_{2}}$ have
a direct impact on increasing $M(D'-Y)$ and an indirect impact on
equation \ref{eq:CondForParetoY} via the decrease in $N_{Y}$ . 

\subsection{Summary of Results}

As shown in Theorem \ref{thm:kappa-lt-0}, where $D'=D^{*}+I_{r_{2}}$
and $M(D')<M(D^{*})$, under reasonable limits on $N_{r_{2}}$, $D^{*}$
is best for the consumer, producer, and media source. This includes
the case where $E(T|D^{*})<E(T|D')$, and thus showing that even with
an increase in the producer average unit value, deviations from $D^{*}$
are not optimal for the sole producer. Thus, the media source could
not directly, or via influence from the sole producer, benefit from
deviations to $D^{*}$ when $M(D')<M(D^{*})$ either. This clearly
answers the questions raised after the proof of theorem \ref{thm:optimal-consumer-utility}.%

Only when $M(D')>M(D^{*})$ is there a possibility of going beyond
$D^{*}$, where $N(D^{*})=M(D^{*})$. There are 2 cases where $D'$
could be preferred by the media source. The first is when $0<\kappa<1$,
where $\kappa$ represents the slope between $M(D^{*})$ and $M(D')$,
and second when $1<\kappa$.

In the first case, the analysis of $D^{*}$ and $D'$ is done in
the plane \{$\kappa,N_{r_{2}}\}$ where $\kappa\in[-\infty,\infty]$
and $N_{r_{2}}\in[0,1)$ normalized by $N(D^{*})=1$, and where $N_{r_{2}}=N(D')-N(D^{*})$.%
{} A set of thresholds basically dependent on $p_{r_{2}}$ and $N_{r_{2}}$,
and a set of curves, mostly dependent on $c_{r_{2}}$ and $N_{r_{2}}$,
express the relationships and limits between $\kappa$ and $N_{r_{2}}$
for $M(D')$. This is illustrated in Figure \ref{fig:Xu-Xl-comparison-T(p2)}
and described in Theorem \ref{thm:0-lt-kappa-lt-1} where the four
possible outcomes between consumer and sole producer are outlined.
Two of the outcomes are Nash equilibrium where the outcome is best
for both consumer or sole producer. The other two outcomes lead to
conflict between consumer and sole producer. To resolve the conflicts,
conditions for altering $D'$ by removing a subset of $D^{*}$ in
a Pareto efficient way are outlined in Lemma \ref{lem:CriteriaForD*+}.
Thus turning the potential conflict into an outcome where the participation
is increased and neither consumer or sole producer is made worse off.
An implementation of Lemma \ref{lem:CriteriaForD*+} is provided in
algorithm \ref{alg:GeneratingCarveout}. Also, in this case as shown
in Lemma 7, the possibility of an equilibrium beyond $D^{*}$ diminishes
as $\kappa\rightarrow0$. 

In the second case where $1<\kappa$, theorem \ref{thm:kappa-gt-1}
indicates a consensus between the consumer and media source to move
to $D'$. While the initial move to $D'$ will be a loss for the sole
producer, the eventual end at $D^{2*}$ will be a net $0$ change
for the sole producer. Although, the sole producer might have an ancillary
benefit from the fact that there are more units consumed at $D^{2*}$
than $D^{*}$.

Thus it can be concluded that the approach outlined here leads to
a distribution $D$ that is optimal for producer, consumer and media
source. It is also counter to a popular view that it is necessary
for a media source or producer to trick media consumers to consume
more in order to maximize profit. It may be the case with other algorithms,
but as shown here, not with this algorithm and cost function. 

For the 6 outcomes from Theorems \ref{thm:kappa-lt-0}, \ref{thm:0-lt-kappa-lt-1},
and \ref{thm:kappa-gt-1} there are no possible improvements from
the Nash equilibrium result for any of the stakeholders. Thus the
outcomes are Pareto efficient.

\subsection{Counter Examples}

Further insights into the model can be seen by looking at where the
model or Nash equilibrium might break down.

If there are multiple producers interacting directly with the media
source, or with agents of the media source, one producer could have
an incentive to influence the media source to go beyond $D^{*}$ even
when $M(D')<M(D)$. This would bypass the interests of a sole producer
and would be at a loss for the other producers as well as the consumer.
As an example, consider a case where a producer would pay or pay extra
to increase the producer value or to have the media source include
the producer's content without regard to the preferred sequence. If
consumers could respond by switching to another media source, the
media source would have incentive to correct the situation. However,
if the same scenario is in place at the alternative media sources,
or there are no alternative media sources because the media source
is the sole source for the media it provides, the consumer has few,
if any, options to counter influence the media source.%

A media source that is not optimizing by individual consumers, or
at least homogeneous consumers, is unlikely to be at an optimal. This
could occur when the cost of operating or implementing a system to
monitor individual consumers is prohibitive. Applying alternative
\textit{ad hoc} procedures designed to improve aggregate media source
objectives by their nature leave few if any at the optimal. As in
the proof for Theorem \ref{thm:optimal-consumer-utility} letting
$q_{x}$ be the basis for the anticipated $q$ for an individual consumer
under an \textit{ad hoc} system, then the potential participation
would be $M(q_{x})$ which would almost always be different from $M(D^{*})$,
and thus suboptimal relative to the hypothetical $D^{*}$ in this
situation. 

Another case to consider is when the relationship between $M$ and
$q$ spikes early. Normally a gradual increase at a decreasing rate
going to an asymptote reflecting the physical time limit for an individual
would be expected. However, in some cases where there is a lack of
alternative activities for a consumer,\footnote{See \citet{mcfadden2019atheory}}
$M$ reaches the asymptote much more quickly. Meaning at a relative
low value of $q$ the consumer is willing to spend nearly all their
available time consuming the content from the media source. Another
case is when, for possibly psychological reasons, the consumer has
elevated the q for the content of the media source to an exceptionally
high level. In either of these cases $D^{*}$ is beyond the asymptote
and it is always the case that $M>N$. When this situation is combined
with a media source that has a practically unlimited flow of content
at a value of $q$ sufficient for this special consumer to allocate
all their time available for media consumption to the media source.
Thus the physical constraint on time limits the consumption. The implication
of this is that $M(D)$ is relatively flat and changes in $q$ are
ineffective. Thus the value function would simply prioritize on producer
unit value. 

The different case would occur when there is insufficient content
to reach $M=N$. Here the consumer is under served and both the consumer
and sole producer would like to increase $N$. So it would be expected
that this case would be resolved quickly by an enterprising media
source.

\subsection{Comparison to the Recommender Systems (RS) Literature}

There is a significant body of research in the computer science literature
related to predicting what content a consumer would be interested
in. This is essentially related to the determination of the $c$ values
used in this paper. 

For the analysis here, it was possible to assume the rate of utility
$q$ and $c$ could be combined as in \ref{subsec:Consumers} because
$\rho$ could be normalized to $1$ for a single media source and
consumer combination. However, it would not be possible to directly
cross use interest metrics between media sources or between different
media or even between different formats or contexts, and certainly
not between users in a community. Any of those cross use combinations
would require measuring or estimating the different $\rho$ needed.

There are a variety of interpretations in the literature of what the
determination of interest should include and what if any other objectives
should be considered.\footnote{\citet{silveira2019howgood}, \citet{meng2023asurvey}, \citet{wu2023personalized},
\citet{chen2023whenlarge} \citet{li2019asurvey}} The term \textit{utility} in this literature is generally the level
of interest based on user profile and historical activity, \citet{silveira2019howgood}.
Some of the additional objectives include concepts like \textit{accuracy},
\textit{diversity}, and \textit{novelty}.\footnote{\citet{silveira2019howgood}, \citet{wang2020multiobjective}, \citet{meng2023asurvey},
\citet{wu2023personalized}, \citet{li2019asurvey}} Where \textit{accuracy} generally measures how well the recommendations
match the consumer's interests, \citet{chen2023whenlarge}, \citet{wu2023personalized}.
The concepts \textit{diversity} and \textit{novelty} generally reflect
a desire to include items that are outside of what the consumer may
have been previously exposed to, \citet{guy2010socialmedia}.

While there are different approaches for combining the objectives,
\citet{ribeiro2014multiobjective}, \citet{wang2020multiobjective},
\citet{ge2022towardpareto} and others have suggested determining
a Pareto optimal set of recommendation lists (the Pareto frontier)
where the measure for one objective cannot be improved without worsening
the measure for another objective and then selecting a final recommendation
list from the set, \citet{wang2020multiobjective}, \citet{ribeiro2014multiobjective},
\citet{ge2022towardpareto}. 

Pareto efficiency is of course different from Nash equilibrium, \citet{myerson1997gametheory}.
In the model here, the stopping point is reached because actual stakeholders
cannot improve their position, and not because another stakeholder
would be worse off. However, the resulting Nash Equilibria shown
in section \ref{subsec:Analyzing-Nash-Equilibrium} above are also
Pareto efficient in the sense that there is no other outcome that
makes at least one stakeholder better off.

Recent work from \citet{jagadeesan2024supplyside} and \citet{acharya2024producers}
and earlier work from \citealt{ben-porat2018agametheoretic,ben-porat2020content}
present models related to recommended content with Nash equilibrium
outcomes between independent producers.

\citet{cai2016mechanism} and \citet{cai2022recommender} apply mechanism
design to product RS, and \citet{boutilier2024recommender} discuss
using mechanism design with a social choice function in RS. The mechanism
design discussed in the research there is based on mechanism design
in economics that facilitates the reveal of private information in
a game with incomplete information \citet{myerson1997gametheory},
and is different from the mechanism described here that assures Nash
equilibrium between stakeholders. 

The concerns raised in \citet{boutilier2023modeling} regarding the
use of mechanism design in RS are generally resolved here. Specifically,
the use of participation, as the common value between producer and
consumer, eliminates the need for a shared monetary metric and mitigates
the other social choice function concerns. Also, using the anticipated
rate of utility mitigates the concerns with adding utility and the
potential need for further sequential optimization as discussed there.%
{} 

\citet{ge2022towardpareto} remarks, ``approaches on recommendation
with multiple objectives to achieve Pareto efficiency can be categorized
into two groups: evolutionary algorithm {[}\citet{zitzler2001spea2improving}{]}
and scalarization {[}\citet{lin2019aparetoefficient}{]}''. The approach
here differs from both those alternatives, yet using an endogenous
approach reaches a Nash equilibrium and a Pareto efficient solution
without needing to explicitly compute a Pareto frontier.%
{} %

A further distinguishing feature of the model here, relative to the
computer science literature, is the separation of the producer and
consumer objectives combined with a cost function that explicitly
considers the effect that consumer participation has on overall value
to the producers. This allows not only a means to combine objectives
in a Nash equilibrium and Pareto efficient way, but also a means to
determine a distribution of content that optimizes participation.%
{} 

In the review of RS in \citealt{silveira2019howgood}, it is noted
that \textit{accuracy} and \textit{utility} are user (consumer) dependent
while \textit{diversity}, and \textit{novelty} are not. Also the objectives
for \textit{diversity}, and \textit{novelty} would appear to correspond
with longer term goals for the sole producer or the media source \citet{wu2023personalized}.
Thus, it is suggested that \textit{diversity}, and \textit{novelty}
could be reflected by the producer value, $p$, in the model presented
here. Although, \citet{wu2023personalized} also suggests that using
the right ``content features'' when learning the consumer interests
and preferences could produce news recommendations that have greater
diversity.

A further notable result is that this model implicitly provides a
threshold for how many units should be made available to the consumer.
Specifically, the threshold for $D^{*}$ and the corresponding $N^{*}$
are endogenously determined. While in the literature there is only
a fixed number or an exogenous cutoff based on some external factor. 

\subsection{Applications}

\begin{singlespace}
\noindent The description of the media source has so far been short
of application details. This was intentional to emphasize the algorithm
and the results. To tie in the general nature of this presentation
to real world situations some examples are provided in Table \ref{Table:MediaSource}.
Depending on type, the media source could be the sole producer or
there could be many independent producers. Also included in the table
are considerations for how the different media types might interpret
producer value, $p$, for consumption by a specific consumer.
\end{singlespace}

The examples are suggestive, and there may be cases where one of the
examples is a different type. For example, in some cases, a media
source for advertising may be a sole producer, or a media source for
news may choose to let editors or contributors act as independent
producers.%

When the media source is the sole producer, the producer value, $p$,
should reflect the media source's objectives for increasing long term
participation by increasing exposure to content outside what would
otherwise be made available to a consumer based on their profile and
activity. The $p$ value in this case is in effect a means for promoting
certain content to specific consumers. The media source would need
to consider (1) the expected future increase in $Q(D^{*})$ from promoting
the content item and (2) the expected loss in the current period from
promoting an item that the consumer is not interested in. While the
goals for the media source to promote certain content are in general
similar to the objectives discussed in the Recommender Systems literature
for content \textit{diversity} and \textit{novelty}, the approach
here is different because the decision to promote an item is explicitly
considering the changes in $D^{*}$ for the consumer.

\begin{table*}
\caption{Media Source Examples}
\label{Table:MediaSource}
\begin{singlespace}
\noindent \centering{}%
\begin{tabular}{>{\centering}p{0.1\linewidth}>{\centering}p{0.36\linewidth}>{\centering}p{0.34\linewidth}}
\toprule 
\textbf{Type} & \textbf{Media source is sole producer} & \textbf{Producers are independent}\tabularnewline
\midrule
\midrule 
\textbf{Examples} & news outlets, news aggregators, trade publications, niche publishers,
content recommendations, status messaging, autonomous agents & social media, user generated content, group discussions, advertising,
promotions and offers, personal aggregators\tabularnewline
\midrule 
\textbf{Producer \LyXZeroWidthSpace Value} & Generated based on media source objectives & Determined by the producer's audience ranking\tabularnewline
\bottomrule
\end{tabular}
\end{singlespace}
\end{table*}
When the producers are independent from the media source, the producer
value $p$ should be determined by using limits that incentivize the
content producers to identify and rank desired consumers. Then the
media source would assign producer value, $p$, specific to the item
and consumer based on the rank provided by the producer. Having
the producers identify and rank consumers by their profile characteristics
provides a mechanism for producers to reveal producer value, $p$,
and optimize their reach.

\section{Conclusion}

A model has been described that addresses the fundamental trade off
between media producers who want to increase reach and media consumers
who provide attention based on the rate of utility, $q$, that they
receive from the content units they consume. With a declining relationship
between an increase in content units from a media source and the consumers
expected rate of utility from those units, it is shown that the level
of content units consumed by the consumer in a period of time depends
on the underlying distribution, $D$, of potentially available content
units in that period, $W(D)$. This relation is a crucial part of
the value function for both the media source and a sole producer.
Where the sole producer values the combination of producer unit value
and consumer participation, $W(D)$, and the media source values
the combination of producer unit value and potential consumer participation,
$M(D)$.

The algorithm generates a distribution of content, $D^{*}$, that
optimizes relative to the value functions of the media source and
as a byproduct optimizes the consumer's participation and the result
for the sole producer. The distribution, $D^{*}$, is shown to correspond
with a Nash equilibrium between the media source, consumer, and producer
under very general conditions. Even when there is an increase in expected
unit producer value, neither the sole producer or media source have
incentive to move away from $D^{*}$. 

The analysis of the algorithmic procedure shows that the best possible
distribution beyond $D^{*}$ cannot have both an increase in expected
unit producer value and an increase in the potential participation
volume, $M(D)$, from the consumer. The analysis also shows that only
with an increase in the potential volume will there be a possibility
of going beyond $D^{*}$. Depending on the slope of the increase of
the potential volume relative to the change in the number of units,
there are two possibilities of going beyond $D^{*}$. The first occurs
when the slope is positive and $<1$. This requires a carveout of
$D^{*}$ as detailed in Algorithm \ref{alg:GeneratingCarveout}, and
leads to a Pareto efficient Nash equilibrium under realistic conditions.
The other occurs when the slope is $\geq1$ and leads to a second
local Nash equilibrium that is preferred over the original by the
media source and consumer and at least indifferent to the sole producer. 

The mechanism depends on the media source to make the decisions, but
under the influence of incentives and disincentives from the consumer
and sole producer. These incentives could be monetary or a threat
to disengage with the media source. Since the result of the algorithms
is Nash equilibrium, the incentives do not really come into play unless
there are other considerations causing deviations. 

Cases where this normative mechanism will not apply were also discussed,
although in some of those cases it may be beneficial for all the participants
to utilize the mechanism. Further in these cases, the model provides
useful insights to understand where inefficiencies exist, and also
where other mechanisms lead to inferior outcomes.

While consumer utility and participation dominates the analysis, overall
producer participation benefits as well via the gains of the sole
producer. Further analysis of individual producer participation is
left to future research.

A comparison with the literature for Recommender systems shows that
the approach here provides some distinct advantages in general and
especially in cases combining multiple objectives where the different
objectives can be considered as producer or consumer objectives. Consumer
participation is the common metric shared between both the consumer
and the sole producer that leads to a unique and endogenous mechanism.
The mechanism in turn provides both a Nash equilibrium and Pareto
efficient way for combining multiple objectives between consumers
and producers as well as the endogenous determination of an optimal
content volume for each consumer. %

\nocite{*}
\printbibliography

---------
\noindent \begin{flushleft}
\noindent \begin{flushleft}
\pagestyle{empty}\textbf{\Large{}\hspace{0.4\columnwidth}Mathematical
Appendix}{\Large\par}
\par\end{flushleft}

\section*{Proofs for Lemmas \ref{lem:For-a-DistributionSequence-MathMax} -
\ref{lem:DeltaV-between-2-local-optimal} and Derivations of Equations
\ref{eq:Xlkappa-Threshold}, \ref{eq:Xlkappa-Reactive}, and \ref{eq:Xukappa-threshold}.}
\begin{itemize}
\item Proof of \textbf{Lemma }\ref{lem:For-a-DistributionSequence-MathMax}:
This is obvious from optimizing the cases where $M>N$ and $M<N$
as the optimal for each case converges to the point where $M=N$. 
\item For proof of \textbf{Lemma \ref{lem:for-DeltaS-Below-D*}} see Proposition
\ref{prop:+DelS-before-D*} below.
\item For proof of \textbf{Lemma \ref{lem:LimitsOnTp-M'-afterD*}} see Proposition
\ref{prop:For-a-point-added-after-D*} and note that $\frac{T(p_{r_{2}})}{E(T|D^{*})}>1$
implies that $\frac{E(T|D')}{E(T|D^{*})}>1$. 
\item For proof of \textbf{Lemma \ref{lem:For+TpNoPosDeltaS-afterD*}} see
Proposition \ref{Prop:F_low-gt-F_up} and also propositions \ref{prop:The-functions-Mr2-Mar2},
\ref{prop:A-SC-for-No-Gain}, and \ref{prop:A-SC-For-Gain}. When
as in Proposition \ref{Prop:F_low-gt-F_up}\textbf{ $F_{low}(N_{r_{2}})>F_{up}(N_{r_{2}})$
}holds, where $F_{low}$ is the lower bounds needed for $\Delta_{S}>0$
and $F_{up}$ is the upper bonds needed for the point $r_{1}$ to
be added before point $r_{2}$. However, since \textbf{$F_{low}(N_{r_{2}})>F_{up}(N_{r_{2}})$}
the threshold for positive $\Delta_{S}$ is a range where the point
$r_{2}$ is not viable (meaning it cannot be added after $D^{*}$).
Thus at least under restrictions on $N_{r_{2}}$ and $N_{r_{1}}$
from the proposition, a $\Delta_{S}>0$ is not possible.
\item For derivations of threshold \textbf{equation \ref{eq:Xlkappa-Threshold}},
the reactive case of $X_{l\kappa}$, see derivation of equation \ref{eq:Xl-condition}
and \ref{eq:Delta-S-Inequality-with-kappa} below. 
\item The derivation of \textbf{equation \ref{eq:Xlkappa-Reactive}}, the
reactive case of $X_{l\kappa}$, follows: When choosing $D'$ the
media source and sole producer will have combined two period value
of $E(T|D')N(D')+\frac{M'}{N'}E(T|D^{*})M(D^{*})$. When not choosing
$D'$ the media source and sole producer will have combined value
of $2E(T|D^{*})M(D^{*})$. Comparing the two choices:

\[
E(T|D')N'+\frac{M'}{N'}E(T|D^{*})M(D^{*})>2E(T|D^{*})M(D^{*})
\]

Substituting $N'=1+N_{r_{2}}$, $M(D^{*})=1$ , $M(D')=1+\kappa N_{r_{2}}$,
and $E(T|D')=\frac{E(T|D^{*})+N_{r_{2}}T(p_{r_{2}})}{1+N_{r_{2}}}$
yields,

\[
\frac{M'}{N'}E(T|D^{*})>2E(T|D^{*})-E(T|D')N'
\]

\[
1+\kappa N_{r_{2}}>\frac{\left(1+N_{r_{2}}\right)\left[2E(T|D^{*})-E(T|D')\left(1+N_{r_{2}}\right)\right]}{E(T|D^{*})}
\]

and as a restriction on $\kappa$,

\[
\kappa>1-\frac{T(p_{r_{2}})}{E(T|D^{*})}\left(1+N_{r_{2}}\right)
\]

\item For the derivation of \textbf{equation \ref{eq:Xukappa-threshold}}
see derivation of equations \ref{eq:Xu-condition} and \ref{eq:limit-on-kappa-ar2}
below. 
\item For proof of \textbf{Lemma \ref{lem:For-0-lt-kappa-lt-1}} see Facts
\ref{fact:UnderB3-Xl-lt-Xu}, \ref{fact:Xu(0)=00003D1 FD-lt-1}, \ref{fact:B4-Xu-FD-lt-1},
\ref{fact:kappa-lt-1-only}, and \ref{fact:SC-4-kappa-r2-lt-1} below.
\item For proof of \textbf{Lemma \ref{lem:Impact-Tp1-on-viability}} see
Fact \ref{fact:Xl-gt-Xu-B4} below.
\item For proof of \textbf{Lemma \ref{lem:limits on Xu - Xl }} see Facts
\ref{fact:SC-4-kappa-r2-lt-1}, \ref{fact:KappaDif-gt-Xuk-Xlk}, and
\ref{fact:Xuk-Xlk-properties} below.
\item Proof of \textbf{Lemma} \ref{lem:CriteriaForD*+}

Let, $D^{-*}=D^{*}-Y$ where $Y\subset D^{*}$, and let $D'^{+}=D^{*}-Y+I_{r_{2}}$
where $I_{r_{2}}=D'-D^{*}$. Also, normalization $M(D^{*})=N(D^{*})=1$
(see details below). For the consumer to prefer $D'^{+}$ this inequality
must hold,

\[
N_{r_{2}}c_{r_{2}}+Q(D^{-*})\left(1-N_{Y}\right)\geq Q(D^{*})
\]

Substituting $\left(1-N_{Y}\right)Q(D^{-*})=Q(D^{*})-N_{Y}Q(Y)$ yields
$N_{r_{2}}c_{r_{2}}-N_{Y}Q(Y)>0$ proving the inequality holds when
the gain to the consumer is positive. The condition for this is:

\begin{equation}
Q(Y)N_{Y}\leq N_{r_{2}}c_{r_{2}}\label{eq:ConsumerConditionForSwap}
\end{equation}

This simplifies to $S_{c}=\sum_{x\in Y}N_{x}c_{x}$ and $S_{c}\leq c_{r_{2}}N_{r_{2}}$.
Also $\ensuremath{\bar{N_{Y}}}=\ensuremath{\sum_{x\in Y}N_{x}}$

The formula for $N_{Y}$ is, 

\begin{equation}
N_{Y}=N'-M\left(\frac{Q(D^{*})-S_{c}+N_{r_{2}}c_{r_{2}}}{1-\bar{N_{Y}}+N_{r_{2}}}\right)
\end{equation}

Thus holding $\bar{N_{Y}}$ fixed at a value $<N_{Y}$, if at least
one $c_{x}$ decreases $S_{c}$ will decrease and $M$ will increase.
As more points are added to $Y$, $\bar{N_{Y}}$ increases, and further
increasing $M$, thus decreasing $N_{Y}$ until $\bar{N_{Y}}\approxeq N_{Y}$.
Thus $N_{Y}<N(D')-M(D')$ decreases as $Q(Y)$ decreases. Thus proving
$\frac{dN_{Y}}{dQ(Y)}>0$ and the first part of the Lemma. 

For the sole producer to be at least indifferent between $D^{*}$
and $D'^{+}$ this partial inequality must hold.

\[
E(T|D^{-*})\left(1-N_{Y}\right)+N_{r_{2}}T(p_{r_{2}})\geq E(T|D^{*})M(D^{*})
\]

Note that $M(D^{*})=1$ and $E(T|D^{-*})=\frac{E(T|D^{*})-E(T|Y)N_{Y}}{\left(1-N_{Y}\right)}$
and substituting,

\[
E(T|D^{*})-E(T|Y)N_{Y}+N_{r_{2}}T(p_{r_{2}})\geq E(T|D^{*})
\]

Reducing yields,

\begin{equation}
N_{r_{2}}T(p_{r_{2}})\geq E(T|Y)N_{Y}
\end{equation}

Thus completing the proof. 
\item Proof of \textbf{Lemma \ref{lem:For-kappa-geq-1}}: The condition
for $\Delta_{V}>0$ is,
\[
\Delta_{V}=\left[\frac{N(D^{*})}{N(D')}E(T|D^{*})+\left(1-\frac{N(D^{*})}{N(D')}\right)T(p)\right]M(D')-E(T|D^{*})M(D^{*})
\]

Normalizing $N$ around $D^{*}$ as $M(D^{*})=N(D^{*})=1$ (see rationale
below)%

\[
\Delta_{V}=\left[\frac{1}{N(D')}E(T|D^{*})+\frac{N(D')-1}{N(D')}T(p)\right]M(D')-E(T|D^{*})>0
\]

\[
\left[E(T|D^{*})+(N(D')-1)T(p)\right]M(D')-N(D')E(T|D^{*})>0
\]

\[
\left[1+(N(D')-1)\frac{T(p)}{E(T|D^{*})}\right]M(D')>N(D')
\]

\[
M(D')>\frac{N(D')}{1+(N(D')-1)\frac{T(p)}{E(T|D^{*})}}
\]

\[
M(D')>\frac{1+N_{r_{2}}}{1+N_{r_{2}}\frac{T(p)}{E(T|D^{*})}}
\]

\[
M(D')=1+\kappa_{r_{2}}N_{r_{2}}>\frac{1+N_{r_{2}}}{1+N_{r_{2}}\frac{T(p)}{E(T|D^{*})}}
\]

\[
\kappa_{r_{2}}N_{r_{2}}>\frac{1+N_{r_{2}}-1+N_{r_{2}}\frac{T(p)}{E(T|D^{*})}}{1+N_{r_{2}}\frac{T(p)}{E(T|D^{*})}}=\frac{N_{r_{2}}-N_{r_{2}}\frac{T(p)}{E(T|D^{*})}}{1+N_{r_{2}}\frac{T(p)}{E(T|D^{*})}}
\]

\[
\kappa_{r_{2}}>\frac{1-\frac{T(p)}{E(T|D^{*})}}{1+N_{r_{2}}\frac{T(p)}{E(T|D^{*})}}
\]

Thus, showing $X_{l\kappa}>1\iff\frac{T(p_{2})}{E(T|D^{*})}<0$ and
$X_{l\kappa}=1\iff\frac{T(p_{2})}{E(T|D^{*})}=0$.
\item Proof of \textbf{Lemma \ref{lem:DeltaV-between-2-local-optimal}}:
the $\Delta_{V}$ between $D^{*}$ and $D^{2*}$ can be determined
by noting that $M(D^{*})=N(D^{*})$ and $M(D^{2*})=N(D^{2*})$ and
substituting into equation \ref{eq:MediaSourceIncrementalValueMapping}
to get: 

\[
\Delta_{V}=\left[\frac{N(D^{*})}{N(D^{2*})}E(T|D^{*})+\left(1-\frac{N(D^{*})}{N(D^{2*})}\right)T(p)\right]M(D^{2*})-E(T|D^{*})M(D^{*})
\]

\[
\Delta_{V}=\left[N(D^{*})E(T|D^{*})+\left(N(D^{2*})-N(D^{*})\right)T(p)\right]\frac{M(D^{2*})}{N(D^{2*})}-E(T|D^{*})M(D^{*})
\]

Letting $M(D^{*})=N(D^{*})=1$ and noting $T(p)=0$ yields,

\[
\Delta_{V}=E(T|D^{*})+\left(N(D^{2*})-1\right)T(p)-E(T|D)=0
\]

\end{itemize}
\pagestyle{fancy}
\fancyhead{}
\fancyhead[R]{\textbf{\small Appendix}}

\section*{Distribution Math}

As in the main paper, a distribution $D$ consists of a probability
density function or mapping $\Phi$ over a discrete multidimensional
region $R$ and a volume of scalar $N$. The frequency for the distribution
is $N\phi_{r}\text{ \ensuremath{\forall r\in R}}$. Combining two
distributions $D_{1}$ and $D_{2}$ simply involves the combination
of the frequencies at each point in the combined region. So for $D'=D_{1}+D_{2}$
the frequency for $D'$ at each point $r\in R_{1}\cup R_{2}$ is $N_{1}\phi_{1_{r}}+N_{2}\phi_{2_{r}}$
and the density for each point $r\in R_{1}\cup R_{2}$ is $\frac{1}{N_{1}+N_{2}}\left[N_{1}\phi_{1_{r}}+N_{2}\phi_{2_{r}}\right]$. 

A special distribution used in the analysis is the distribution for
a single point $r=\{c,p\}$ denoted as $I_{r}$.
\begin{assumption}[Additive Q]
\label{assu:Q-For-2-non-overlapping}For 2 non-overlapping distributions
$D_{1}$ and $D_{2}$ with the combined distribution $D_{12}=D_{1}+D_{2}$
and regions such that $R_{12}=R_{1}+R_{2}$, the following relation
for $Q$ holds 
\begin{equation}
Q(D_{12})=\frac{N_{1}}{N_{1}+N_{2}}Q(D_{1})+\frac{N_{2}}{N_{1}+N_{2}}Q(D_{2})\label{eq:combined-q}
\end{equation}
.
\end{assumption}
This assumption is inclusive of most, if not all, practical functions
for $Q$.%

\begin{prop}
\label{prop:For-OrderAdded}Let $D''$ be the distribution after two
incremental changes, and let $1,2$ or $2,1$ be the order of the
incremental changes, then $Q(D''|1,2)=Q(D''|2,1)$.
\end{prop}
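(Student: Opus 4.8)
The plan is to reduce both orderings to a single symmetric expression for $Q$ by combining the additivity of frequency distributions with Assumption~\ref{assu:Q-For-2-non-overlapping}. Write the starting distribution as $D$ with volume $N$ and rate $Q(D)$, and let the two incremental changes correspond to adjoining fixed increment distributions $D_{1}$ and $D_{2}$ with volumes $N_{1},N_{2}$ and rates $Q(D_{1}),Q(D_{2})$. For an expansive sequence (Definition~\ref{def:An-incremental-distribution,}) each increment is supported on the newly added region $R'-R$, so $D$, $D_{1}$, and $D_{2}$ are pairwise non-overlapping, and Assumption~\ref{assu:Q-For-2-non-overlapping} therefore applies to each pairwise combination formed below.

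First I would take the order $1,2$: form $D+D_{1}$ and apply the additive rule to obtain
\[
Q(D+D_{1})=\frac{N}{N+N_{1}}Q(D)+\frac{N_{1}}{N+N_{1}}Q(D_{1}),
\]
then adjoin $D_{2}$ and apply the rule again. The volume weights telescope, since the factor $N+N_{1}$ appearing as the denominator weight in the second application cancels the $N+N_{1}$ in the denominators above, leaving
\[
Q(D''\mid 1,2)=\frac{N\,Q(D)+N_{1}\,Q(D_{1})+N_{2}\,Q(D_{2})}{N+N_{1}+N_{2}}.
\]
Repeating the identical computation in the order $2,1$ yields the same right-hand side with the roles of $1$ and $2$ interchanged; because that expression is symmetric in the two increments, the two orders coincide, which is exactly $Q(D''\mid 1,2)=Q(D''\mid 2,1)$.

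I would also record the structural reason that makes this robust: combining distributions adds frequencies point by point, and addition of frequencies is commutative and associative, so the terminal distribution $D''$ is literally the same object regardless of the order in which the two increments are appended. Since $Q$ is a functional of the distribution alone, its value cannot depend on the construction history; the weighted-average computation above is then merely a bookkeeping confirmation expressed through the machinery the paper already employs.

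The only point demanding care — and the closest thing to an obstacle — is justifying that Assumption~\ref{assu:Q-For-2-non-overlapping}, stated for two non-overlapping pieces, extends to the three-way combination. I would handle this by applying the rule only to pairwise combinations of non-overlapping distributions (first $D$ with $D_{1}$, then their union with $D_{2}$), which is legitimate because the union of two disjoint regions remains disjoint from the third increment's region. Should one wish to permit increments that overlap the base or one another, the frequency-additivity argument of the preceding paragraph delivers the conclusion directly, so no generality is lost.
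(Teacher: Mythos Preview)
Your proof is correct and follows essentially the same approach as the paper: both apply Assumption~\ref{assu:Q-For-2-non-overlapping} iteratively to reduce $Q(D'')$ to the volume-weighted average $\bigl(N\,Q(D)+N_{1}Q(D_{1})+N_{2}Q(D_{2})\bigr)/(N+N_{1}+N_{2})$, which is manifestly symmetric in the two increments. Your version is in fact more careful than the paper's, since you explicitly justify the pairwise non-overlap needed to invoke the assumption and supplement it with the structural commutativity-of-frequencies argument.
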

\begin{proof}
This follows from the assumption \ref{assu:Q-For-2-non-overlapping}
on $Q$ for altering the order that distributions are combined:

\[
Q(D_{R'})=\frac{N}{N+N_{I}}Q(D_{R})+\frac{N_{I}}{N+N_{I}}Q(D_{I})
\]
Where $R'-R=I$ with $D_{I}$ and scalar $N_{I}$. Using the formula
to add the increments $I_{1}$ and $I_{2}$ in different order yields
the same result. 

\[
Q(D''|2,1)=Q(D''|2,1)=\frac{N}{N+N_{I_{1}}+N_{I_{2}}}\left[NQ(D_{R})+N_{I_{1}}Q(D_{I_{2}})+N_{I_{2}}Q(D_{I_{2}})\right]
\]
\end{proof}
\begin{flushleft}
Proposition \ref{prop:For-OrderAdded} confirms that the order that
incremental distributions are added does not impact $q$.
\par\end{flushleft}

\subsection*{Value Functions}

Let $\Phi$ and $\Phi'$ be a density functions on a finite discrete
region $R$ that bounds a set of combinations, with $\sum_{\forall r\in R}\phi_{r}=1$
and $0\leq\phi_{r}<1$. Let $T$ be a function or mapping with a value
for each point in the region. So, the expected value for $T$ over
the region $R$ given $\Phi$ would be $E(T|\Phi)=\sum_{\forall r\in R}\phi_{r}T_{r}$
, where $T_{r}$ is the value of $T$ at the point $r$. Let $D$
be the distribution with density function $\Phi$ and volume $N$,
and let $D'$ be the distribution with density $\Phi'$ and volume
$N'$. Further let Y be a distribution such that $D'=D+Y$ and let
$H(T|Y)$ be $E(T|\Phi_{Y})$. Then,

\begin{equation}
E(T|\Phi')=E(T|\Phi)\frac{N}{N'}+H(T|Y)\frac{N'-N}{N'}\label{eq:ExpectedT}
\end{equation}

Let $Z(D_{R})$ be a function determined from the distribution $D_{R}$
. For example, the potential participation for $D_{R}$, where $Z(D_{R})$
would be $M(Q(D_{R}))$.
\begin{fact}
\label{fact:value-change}For a value function $V=E(T|\Phi)Z$, where
$Z$ is a value determined from the distribution $D$ and $Z'$ is
determined from distribution $D'$, the change in the value function
\begin{equation}
\Delta_{V}=V'-V=E(T|\Phi')Z'-E(T|\Phi)Z=\left(E(T|\Phi)\frac{N}{N'}+H(T|Y)\frac{N'-N}{N'}\right)Z'-E(T|\Phi)Z\label{eq:DeltaV}
\end{equation}
\end{fact}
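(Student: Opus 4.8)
The plan is to derive the Fact from the already-stated identity \ref{eq:ExpectedT} by pure substitution, since the value function $V=E(T|\Phi)Z$ is simply a product of the expectation with the distribution-determined quantity $Z$. First I would apply the definition of the value function to both distributions: writing $V=E(T|\Phi)Z$ and correspondingly $V'=E(T|\Phi')Z'$, the definition $\Delta_V=V'-V$ yields
\[
\Delta_V=E(T|\Phi')Z'-E(T|\Phi)Z
\]
immediately, which is the first claimed equality. This step uses nothing beyond the definitions and the fact that $V'$ is the same functional form evaluated at $D'$.

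For the second equality I would substitute the decomposition $E(T|\Phi')=\frac{N}{N'}E(T|\Phi)+\frac{N'-N}{N'}H(T|Y)$ of equation \ref{eq:ExpectedT} into the leading term $E(T|\Phi')Z'$ only, leaving the trailing $-E(T|\Phi)Z$ untouched; this produces exactly the right-hand side of equation \ref{eq:DeltaV}. The substance therefore lies entirely in justifying \ref{eq:ExpectedT}, which I would establish from the distribution-combination rules of the Distribution Math section. Because $D'=D+Y$, the density of $D'$ at a point $r$ is the normalized sum of frequencies $\phi'_r=\frac{1}{N'}\bigl(N\phi_r+(N'-N)\phi_{Y,r}\bigr)$, where $N'-N$ is the volume $N_Y$ of $Y$. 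Multiplying by $T_r$, summing, and using linearity of the finite sum splits $E(T|\Phi')=\sum_r\phi'_r T_r$ into the two weighted pieces $\frac{N}{N'}E(T|\Phi)$ and $\frac{N'-N}{N'}H(T|Y)$, with $H(T|Y)=E(T|\Phi_Y)=\sum_r\phi_{Y,r}T_r$ by definition.

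The only delicate point---and the nearest thing to an obstacle in what is otherwise a routine calculation---is reconciling the index sets of the three sums. The expectation $E(T|\Phi')$ ranges over the combined region $R'=R\cup R_Y$, whereas $E(T|\Phi)$ and $H(T|Y)$ range a priori over $R$ and $R_Y$ respectively. I would handle this by observing that each sum may be extended to all of $R'$ without change, since $\phi_r=0$ for $r\in R'\setminus R$ and $\phi_{Y,r}=0$ for $r\in R'\setminus R_Y$; only after placing all three expectations over the common index set $R'$ may the terms be collected and the expectation split. This remark also shows the computation is valid whether or not the supports of $D$ and $Y$ overlap, so no separate treatment of the expansive and reductive cases is needed, and the Fact follows directly.
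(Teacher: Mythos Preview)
Your proposal is correct and follows essentially the same approach as the paper: the paper's proof is simply ``plugging equation~\ref{eq:ExpectedT} into $V'-V$,'' which is exactly your substitution step. You go slightly further by justifying identity~\ref{eq:ExpectedT} itself from the distribution-combination rules and by handling the index-set alignment, whereas the paper takes~\ref{eq:ExpectedT} as already established; this extra care is sound but not a different route.
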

\begin{proof}
Plugging equation \ref{eq:ExpectedT} into $V'-V$, yields 

\[
\begin{aligned}\begin{aligned}V'-V=\end{aligned}
 & \left(E(T|\Phi)\frac{N}{N'}+H(T|Y)\frac{N'-N}{N'}\right)Z'-E(T|\Phi)Z=\\
 & E(T|\Phi)(\frac{N}{N'}Z'-Z)+H(T|Y)\frac{N'-N}{N'}Z'
\end{aligned}
\]
\end{proof}

\subsection*{Special Cases of $\Delta_{V}$}

\paragraph*{Unit Incremental Distributions}

Next consider $\triangle_{V}$ when $D'$ results from an unit incremental
change in $D$. Let $I_{r}$ be the increment distribution, so $D'=I_{r}+D$
and $N_{I_{r}}=1$. Thus, $E(T|D')=E(T|D)\frac{N}{N+1}+T_{r}\frac{1}{N+1}$
, and %
\begin{equation}
\Delta_{V}=E(T|D)\left(\frac{N}{(N+1)}Z'-Z\right)+T_{r}\frac{Z'}{N+1}\label{eq:incremental-distribution-shift}
\end{equation}

\paragraph*{Incremental Point Shifts}

These shifts will occur at one point, but will not be limited by size.
Effectively, for an incremental point added the range is $0<(N'-N)<\infty$.
Thus $\Delta_{V}$ from a distribution change at point $r$ is,

\begin{equation}
\Delta_{V}=E(T|D)\left(\frac{N}{N'}Z'-Z\right)+T_{r}\frac{(N'-N)}{N'}Z'\label{eq:IncPointShiftValChange}
\end{equation}

\paragraph*{General Distribution Shifts}

For a more general shift let $Y$ be a distribution with standard
density $\Phi$ with $-\infty<N_{Y}<\infty$ subject to the limitation
that for each point $r\in R'$ is $N\phi_{r}+N_{Y}\phi_{Y_{r}}\geq0$.
Then for the case where $D'=D+Y$ we have the most general case of
a distribution change. Thus $N'=N+N_{Y}$ and 

\begin{equation}
E(T|D')=E(T|D)\frac{N}{N'}+H(T|Y)\frac{N_{Y}}{N'}\label{eq:ShiftInProducerUnitValue}
\end{equation}

This form includes the incremental cases above. Here the distribution
change is,

\begin{equation}
\Delta_{V}=E(T|D)\left(\frac{N}{N'}Z'-Z\right)+H(T|Y)\frac{N_{Y}}{N'}Z'\label{eq:GeneralShiftValueChange}
\end{equation}

It should be clear that equations \ref{eq:incremental-distribution-shift},
and \ref{eq:IncPointShiftValChange} are special cases of equation
\ref{eq:GeneralShiftValueChange} and both reductive and expansive
increments are included in the formulation.

\subsection*{Value Function with $M$}

Letting $Z=M(D)$ and $Z'=M(D')$ in equation \ref{eq:GeneralShiftValueChange},
simplifying, and specifying for a single point with production value
$p$ yields:

\begin{equation}
\Delta_{V}=E(T|D)N(D)\left(\frac{M(D')}{N(D')}-\frac{M(D)}{N(D)}\right)+T(p)\left(N(D')-N(D)\right)\frac{M(D')}{N(D')}\label{eq:RefinedValueChange}
\end{equation}

Equation \ref{eq:RefinedValueChange} is the same as \ref{eq:MediaSourceIncrementalValueMapping}.
This equation can also be expressed in terms of the probability of
an incremental point $r$ with distribution $I_{r}$ as,

\begin{equation}
\Delta_{V}=\left[(1-\phi_{r})E(T|D)+\phi_{r}T(p)\right]M(D')-E(T|D)M(D)\label{eq:DeltaV-with-phi}
\end{equation}

Where $D'=D+I_{r}$ and $\phi_{r}=1-\frac{N(D)}{N(D')}$ is the probability
of incremental point $r$, and $1-\phi_{r}$ is the probability of
the points in the initial distribution. %

The condition for $\Delta_{V}>0$ can be derived from either Equation
\ref{eq:RefinedValueChange} or \ref{eq:DeltaV-with-phi} as: %

\begin{equation}
\left[(1-\phi_{r})+\phi_{r}\frac{T(p)}{E(T|D)}\right]>\frac{M(D)}{M(D')}\implies\Delta_{V}>0\label{eq:ConditionFor-DeltaV-gt-0}
\end{equation}

or equivalently,

\begin{equation}
\left(N(D)\frac{1}{N(D')}\right)+\frac{T(p)}{E(T|D)}\left(N(D')-N(D)\right)\frac{1}{N(D')}>\frac{M(D)}{M(D')}
\end{equation}

Note if $D=D^{*}$ and normalizing volume at $D^{*}$ so $N(D)=M(D)=1$
and substituting yields, %

\[
\frac{T(p)}{E(T|D)}\left(N(D')-1\right)>\frac{N(D')-M(D')}{M(D')}
\]

further defining$\text{ \ensuremath{\frac{M(D')-1}{N(D')-1}}=\ensuremath{\kappa}}$
and $N_{r_{2}}+1=N(D')$, and substituting yields,%

\begin{equation}
\kappa>\frac{1-\frac{T(p)}{E(T|D)}}{\frac{T(p)}{E(T|D)}N_{r_{2}}+1}
\end{equation}

\subsection*{Alternative Value Functions}

For simplicity when comparing points to choose for the best incremental
increase, it is only necessary to compare $\left[(1-\phi_{r})E(T|D)+\phi_{r}T(p)\right]M(D')$
as the term $-\frac{M(D)}{N(D)}E(T|D)N(D)$ will always be the same
for the points to be compared.\footnote{note $\phi_{r}=1-\frac{N(D)}{N(D')}$ and $1-\phi_{r}=\frac{N(D)}{N(D')}$ }
Further can break down $M(D')$ as $M((1-\phi_{r})Q(D)+\phi_{r}c)$.
The comparison function becomes:

\[
\Xi(c,p,\phi_{r}|D)=\left[(1-\phi_{r})E(T|D)+\phi_{r}T(p)\right]M((1-\phi_{r})Q(D)+\phi_{r}c)
\]

Rearranging,

\begin{equation}
\Xi(c,p,\phi_{r}|D)=\left[E(T|D)+\phi_{r}\left(T(p)-E(T|D)\right)\right]M(Q(D)+\phi_{r}(c-Q(D)))\label{eq:ApendixAltVF}
\end{equation}

This is generally easier to work with and otherwise equivalent to
$\Delta_{V}$. It is clear from the formulation above that as long
as $c\leq Q(D)$ and $D_{0}$ includes a point with the highest available
$c$, that $M$ is going to decline or stay the same with the addition
of the next point.
\begin{prop}
\label{prop:Equivalent-cost-functions}The value functions $\Lambda(r,N_{r},D)=\Delta_{V}$
and $\Lambda(r,N_{r},D)=\Xi$ are equivalent when applied to determine
the preferred incremental distribution relative to a common base distribution
$D$. 
\end{prop}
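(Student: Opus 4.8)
The plan is to show that the two candidate mappings differ, at a fixed base distribution $D$, only by a term that is constant across all points $r$ under comparison; since Definition \ref{def:A-Preferred-incremental-distribution} selects the preferred incremental distribution purely through the ordering of $\Lambda$-values at that common base, any such shared additive offset leaves the ordering, and hence the selected distribution, unchanged.

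First I would record the two quantities explicitly. Writing $V(D)=E(T|D)M(D)$ for the media-source value function of equation \ref{eq:MediaSourceValueFunction}, equation \ref{eq:DeltaV-with-phi} gives $\Delta_V=\left[(1-\phi_r)E(T|D)+\phi_r T(p)\right]M(D')-E(T|D)M(D)$, where $D'=D+I_r$ and $\phi_r=1-\frac{N(D)}{N(D')}$. For the single incremental point $r$ we have $H(T|I_r)=T(p)$, so equation \ref{eq:ExpectedT} yields $E(T|D')=(1-\phi_r)E(T|D)+\phi_r T(p)$; consequently the bracketed coefficient in $\Delta_V$ is exactly $E(T|D')$. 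Moreover, by Assumption \ref{assu:Q-For-2-non-overlapping} the anticipated rate decomposes as $Q(D')=(1-\phi_r)Q(D)+\phi_r c$, so $M(D')=M(Q(D'))=M\left(Q(D)+\phi_r(c-Q(D))\right)$.

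Next I would simplify $\Xi$. The coefficient appearing in equation \ref{eq:ApendixAltVF} satisfies $E(T|D)+\phi_r\left(T(p)-E(T|D)\right)=(1-\phi_r)E(T|D)+\phi_r T(p)=E(T|D')$, and the argument of $M$ coincides with the one just obtained. Hence $\Xi(c,p,\phi_r|D)=E(T|D')M(D')=V(D')$, and therefore
\begin{equation}
\Xi=\Delta_V+E(T|D)M(D)=\Delta_V+V(D).
\end{equation}
The offset $V(D)$ depends only on the fixed base distribution $D$, and not on the candidate point $r$, nor on $c$, $p$, or $\phi_r$ being evaluated.

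Finally I would invoke the definition of preference. Definition \ref{def:A-Preferred-incremental-distribution} compares candidate points solely through inequalities relating $\Lambda(r,N_r,D)$ to $\Lambda(r_1,N_{r_1},D)$ and to $\Lambda(r_2,N_{r_2},D)$ at a common base $D$, and every such inequality is preserved when the same quantity $V(D)$ is added to both sides. Thus a point is preferred under $\Lambda=\Xi$ if and only if it is preferred under $\Lambda=\Delta_V$, so both mappings induce the identical ordering on the points of $D_{all}-R$ and select the same preferred incremental distribution. I expect the only delicate point to be bookkeeping rather than analysis: one must confirm that the reference distribution in every pairwise comparison is genuinely the same $D$, so that $V(D)$ is a single shared constant rather than a point-dependent quantity. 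This is exactly guaranteed by the hypothesis that the comparison is made \emph{relative to a common base distribution} $D$, and notably no monotonicity, sign, or viability property of $\Delta_V$ is required for the equivalence.
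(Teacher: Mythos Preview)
Your proposal is correct and follows essentially the same approach as the paper: both arguments observe that $\Xi$ and $\Delta_V$ differ only by the additive constant $V(D)=E(T|D)M(D)$ at a fixed base $D$, so the induced ranking of candidate incremental points is identical. Your version simply makes explicit the identification $\Xi=V(D')$ and the bookkeeping that the offset depends only on the common base, which the paper leaves implicit.
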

\begin{proof}
The equivalence of $\Delta_{V}$ and $\Xi$ can be shown. Since functions
$\Delta_{V}$ and $\Xi$ differ only by a constant so any ranking
of incremental distributions or shifts relative to distribution $D$
will be the same for either $\Delta_{V}$ or $\Xi$.
\end{proof}
Properties of $\Xi$:

\[
\frac{\partial\Xi}{\partial\phi_{r}}=E(T|D)\frac{dM}{dq}(c-Q(D))+\left(T(p)-E(T|D)\right)\left(M+\frac{dM}{d(q)}(c-Q(D))\phi_{r}\right)
\]

\[
\frac{\partial\Xi}{\partial c}=\frac{dM}{dq}\phi_{r}
\]

\[
\frac{\partial\Xi}{\partial p}=\frac{dT}{dp}M\phi_{r}
\]

If there is a direction where $c$ decreases and $p$ increases the
ratio $\frac{\frac{dM}{dq}}{\frac{dT}{dp}M}$ specifies a threshold
for how many units of $p$ per unit of $c$ are needed at the margin
to increase $\Xi$.%

As in equation \ref{eq:incremental-distribution-shift} there are
cases where it might be desired to assume $N'-N=1$. Applying this
to equation \ref{eq:ApendixAltVF} yields:

\[
\varUpsilon=\left[E(T|D)\frac{N(D)}{N(D)+1}+\frac{1}{N(D)+1}T(p)\right]M(Q(D)-\frac{1}{N(D)+1}(Q(D)-c))
\]

Properties of $\Upsilon$:

\[
\frac{\partial\varUpsilon}{\partial c}=\frac{dM}{dq}\frac{1}{N(D)+1}
\]

\[
\frac{\partial\varUpsilon}{\partial p}=\frac{dT}{dp}M\frac{1}{N(D)+1}
\]

As with $\Xi$, if there is a direction where $c$ decreases and $p$
increases, the ratio $\frac{\frac{dM}{dq}}{\frac{dT}{dp}M}$ specifies
a threshold for how many units of $p$ per unit of $c$ are needed
at the margin to increase $\varUpsilon$.

Also, as a value function, $\varUpsilon$ is equivalent to $\Delta_{V}$
for the same reasons as for $\Xi$.

\section*{Sequences of Incremental Distributions}

This section provides analysis of conditions for the order incremental
distributions would be added in the sequence of distributions $\Theta$.
The resulting restrictions are useful in further proofs. 

\subsection*{Comparing Initial Distributions}

Consider two possible initial distributions $Y_{a}$ and $Y_{b}$
with value functions respectively $V_{a}=E(T|Y_{a})M(Y_{a})$ and
$V_{b}=E(T|Y_{b})M(Y_{b})$. Thus initial distribution $Y_{a}$ is
preferred over $Y_{b}$ when,

\[
E(T|Y_{a})M(q_{a})>E(T|Y_{b})M(q_{b})
\]

Assuming $M(q)=\zeta q^{\alpha}$ is at least an approximation of
$M(q)$ where $0<\alpha<1$, then 

\[
\frac{M(q_{a})}{M(q_{b})}=\left(\frac{q_{a}}{q_{b}}\right)^{\alpha}>\frac{E(T|Y_{b})}{E(T|Y_{a})}
\]

Obviously, if both $q_{b}<q_{a}$ and $E(T|Y_{b})<E(T|Y_{a})$, then
$Y_{a}$ is preferred. So consider the case where $q_{b}<q_{a}$ and
$E(T|Y_{b})>E(T|Y_{a})$. In this case, it is possible, especially
for small values of $\alpha$ that, 

\begin{equation}
\frac{q_{a}}{q_{b}}>\frac{E(T|Y_{b})}{E(T|Y_{a})}>\left(\frac{q_{a}}{q_{b}}\right)^{\alpha}\label{eq:Ya-vs-Yb}
\end{equation}

Indicating that that percentage advantage of $q_{a}$ over $q_{b}$
must be significantly greater than the advantage of $E(T|Y_{b})\text{ over }E(T|Y_{a})$.

Consider the case when $Y_{a}$ is a single point, $\{c,p\}$, so,

\begin{equation}
\frac{M(q_{a})}{M(q_{b})}=\left(\frac{c}{q_{b}}\right)^{\alpha}>\frac{E(T|Y_{b})}{T(p)}\label{eq:r-vs-Ya}
\end{equation}

\subsection*{Sequence Order with a Distribution Shift\label{subsubsec:Sequence-Order}}

In this subsection, conditions for a point $r$ to be added to the
sequence after a distribution $Y$ are analyzed.
\begin{description}
\item [{Sequence\ Scenario\ A:}] A distribution shift with an initial
distribution $D_{a}$ and a general incremental distribution $Y$
covering potentially multiple points as in equation \ref{eq:GeneralShiftValueChange},
and a second distribution $D_{b}$ such that $D_{b}=D_{a}+Y$. Also
let $N_{a}$ be the volume of the initial distribution, and $N_{Y}$
be the volume of the incremental distribution $Y$. 
\end{description}
The formula for $\Xi$ in this scenario is:
\[
\Xi=\left(E(T|D_{a})\frac{N_{a}}{N_{b}}+H(T|Y)\frac{N_{a}-N_{b}}{N_{b}}\right)M_{b}
\]

Assume $Q(D_{b})<Q(D_{a})$ so that $M_{b}<M_{a}$, and consider adding
an incremental point $r=\{c,p\}$, where $c>Q(D_{b})$. So $D'=D_{b}+I_{r}$
yields:

\[
Q(D')=\left[\frac{N_{a}}{N_{a}+N_{Y}}Q(D_{a})+\frac{N_{Y}}{N_{a}+N_{Y}}Q(Y)\right]\frac{N_{a}+N_{Y}}{N_{a}+N_{Y}+N_{r}}+c\frac{N_{r}}{N_{a}+N_{Y}+N_{r}}
\]

Note from above and Proposition \ref{prop:For-OrderAdded} that $D_{a}+I_{r}+Y$
results in the same distribution regardless if $I_{r}$ is added to
$D_{a}$ first or if $Y$ is added first. However, knowing conditions
for adding $Y$ before $r$ provides a potentially useful constraint.
For $Y$ to be added to in the sequence before $I_{r}$ this inequality
of the value functions must hold,

\begin{equation}
\left[E(T|D_{a})\frac{N_{a}}{N_{a}+N_{Y}}+H(T|Y)\frac{N_{Y}}{N_{a}+N_{Y}}\right]M_{b}>\left[E(T|D_{a})\frac{N_{a}}{N_{a}+N_{r}}+T(p)\frac{N_{r}}{N_{a}+N_{r}}\right]M_{ar}\label{eq:ConditionForOrder}
\end{equation}

Where $D_{ar}=D_{a}+I_{r}$ and $M_{ar}=M(Q(D_{ar}))$. Rearranging
and expressing $T(p)$ and $H(T|Y)$ relative to $E(T|D_{a})$ yields,%

\begin{equation}
\frac{\frac{N_{a}}{N_{a}+N_{Y}}+\frac{H(T|Y)}{E(T|D_{a})}\frac{N_{Y}}{N_{a}+N_{Y}}}{\frac{N_{a}}{N_{a}+N_{r}}+\frac{T(p)}{E(T|D_{a})}\frac{N_{r}}{N_{a}+N_{r}}}>\frac{M_{ar}}{M_{b}}>1\label{eq:RestrictionOnMar}
\end{equation}

Note that $\frac{M_{ar}}{M_{b}}>1$ because $c>Q(D_{b})$. This implies
that if $Y$ was added first over $I_{r}$ then $\frac{N_{a}}{N_{a}+N_{Y}}+\frac{H(T|Y)}{E(T|D_{a})}\frac{N_{Y}}{N_{a}+N_{Y}}>\frac{N_{a}}{N_{a}+N_{r}}+\frac{T(p)}{E(T|D_{a})}\frac{N_{r}}{N_{a}+N_{r}}$.
\textbf{}%

Assuming $M(q)=\zeta q^{\alpha}$ is at least an approximation of
$M(q)$ where $0<\alpha<1$, then $\frac{M_{ar}}{M_{b}}=\left(\frac{q_{ar}}{q_{b}}\right)^{\alpha}$.
Consider, 

\begin{align}
\frac{q_{ar}}{q_{b}}=\frac{Q(D_{ar})}{Q(D_{b})}= & \frac{\frac{N_{a}}{N_{a}+N_{r}}Q(D_{a})+\frac{N_{r}}{N_{a}+N_{r}}c}{\frac{N_{a}}{N_{a}+N_{Y}}Q(D_{a})+\frac{N_{Y}}{N_{a}+N_{Y}}Q(Y)}=\label{eq:Q-Ratio}\\
 & \frac{\frac{N_{a}}{N_{a}+N_{r}}+\frac{N_{r}}{N_{a}+N_{r}}\frac{c}{Q(D_{a})}}{\frac{N_{a}}{N_{a}+N_{Y}}+\frac{N_{Y}}{N_{a}+N_{Y}}\frac{Q(Y)}{Q(D_{a})}}>1\nonumber 
\end{align}

Substituting equation \ref{eq:Q-Ratio} into $\frac{M_{ar}}{M_{b}}=\left(\frac{q_{ar}}{q_{b}}\right)^{\alpha}$,
yields:

\[
\frac{M_{ar}}{M_{b}}=\left(\frac{\frac{N_{a}}{N_{a}+N_{r}}+\frac{N_{r}}{N_{a}+N_{r}}\frac{c}{Q(D_{a})}}{\frac{N_{a}}{N_{a}+N_{Y}}+\frac{N_{Y}}{N_{a}+N_{Y}}\frac{Q(Y)}{Q(D_{a})}}\right)^{\alpha}
\]

then substituting into equation \ref{eq:RestrictionOnMar} yields:

\begin{equation}
\frac{\frac{N_{a}}{N_{a}+N_{Y}}+\frac{H(T|Y)}{E(T|D_{a})}\frac{N_{Y}}{N_{a}+N_{Y}}}{\frac{N_{a}}{N_{a}+N_{r}}+\frac{T(p)}{E(T|D_{a})}\frac{N_{r}}{N_{a}+N_{r}}}>\frac{M_{ar}}{M_{b}}=\left(\frac{\frac{N_{a}}{N_{a}+N_{r}}+\frac{N_{r}}{N_{a}+N_{r}}\frac{c}{Q(D_{a})}}{\frac{N_{a}}{N_{a}+N_{Y}}+\frac{N_{Y}}{N_{a}+N_{Y}}\frac{Q(Y)}{Q(D_{a})}}\right)^{\alpha}>1
\end{equation}

Rearranging terms implies this condition on $c$:

\begin{equation}
\frac{c}{Q(D_{a})}<\frac{N_{a}+N_{r}}{N_{r}}\left(M_{b}\frac{\frac{N_{a}}{N_{a}+N_{Y}}+\frac{H(T|Y)}{E(T|D_{a})}\frac{N_{Y}}{N_{a}+N_{Y}}}{\frac{N_{a}}{N_{a}+N_{r}}+\frac{T(p)}{E(T|D_{a})}\frac{N_{r}}{N_{a}+N_{r}}}\right)^{\frac{1}{\alpha}}-\frac{N_{a}}{N_{r}}\label{eq:LimitForY-Before-c}
\end{equation}

The inequality in equation \ref{eq:LimitForY-Before-c} must hold
for the point $r$ to be added in the sequence after the distribution
$Y$. Further note that $c$ is not limited to be $<Q(D_{a})$.

\subsubsection*{Comparing Points for Sequence Order \label{subsubsec:Comparing-Points}}

In this subsection conditions for a point $r_{1}$ to be added in
the sequence before point $r_{2}$ are analyzed. A useful constraint
is derived to provide the threshold for when a point $r_{1}$ would
be added before point $r_{2}$. 

Letting $D_{a}$ be the initial distribution as above and $M_{ar}$
and $N_{r}$ be the respective values for increment for point $r\in\{r_{1},r_{2}\}$
and substituting into equation \ref{eq:ConditionForOrder} yields,

\begin{equation}
\left[\frac{N_{a}}{N_{a}+N_{r_{1}}}+\frac{T(p_{1})}{E(T|D_{a})}\frac{N_{r_{1}}}{N_{a}+N_{r_{1}}}\right]M_{ar_{1}}>\left[\frac{N_{a}}{N_{a}+N_{r_{2}}}+\frac{T(p_{2})}{E(T|D_{a})}\frac{N_{r_{2}}}{N_{a}+N_{r_{2}}}\right]M_{ar_{2}}\label{eq:OrderCond-r1-r2}
\end{equation}

\begin{fact}
\label{fact:For-two-points}For two points $r_{1}$ and $r_{2}$ with
the same incremental volume $(N_{I_{1}}=N_{I_{2}})$, point $r_{1}$
is preferred over $r_{2}$ if (i) $p_{r_{1}}>p_{r_{2}}$ and $c_{r_{1}}>c_{r_{2}}$,
or (ii) $p_{r_{1}}=p_{r_{2}}$ and $c_{r_{1}}>c_{r_{2}}$, or (iii)
$p_{r_{1}}>p_{r_{2}}$ and $c_{r_{1}}=c_{r_{2}}$.
\end{fact}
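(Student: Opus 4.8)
The plan is to read off the conclusion directly from the sequence-order comparison \ref{eq:OrderCond-r1-r2}, which states that $r_1$ is preferred over (added before) $r_2$ relative to the common base $D_a$ exactly when the left-hand bracketed value exceeds the right-hand one; by Proposition \ref{prop:Equivalent-cost-functions} it does not matter whether the underlying mapping $\Lambda$ is taken to be $\Delta_V$ or $\Xi$, so ``preferred'' is equivalent to $\Xi(r_1)>\Xi(r_2)$. First I would impose the common increment volume $N_{r_1}=N_{r_2}=N_I$, which collapses the two bracketed weight pairs in \ref{eq:OrderCond-r1-r2} to the single pair $\tfrac{N_a}{N_a+N_I},\ \tfrac{N_I}{N_a+N_I}$. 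Writing $\phi=\tfrac{N_I}{N_a+N_I}\in(0,1)$ for the shared incremental probability, the value of adding a point $r=\{c_r,p_r\}$ then takes the separable form of \ref{eq:ApendixAltVF},
\[
\Xi(r)=A(p_r)\,B(c_r),\qquad A(p_r)=(1-\phi)E(T|D_a)+\phi\,T(p_r),\quad B(c_r)=M\big((1-\phi)Q(D_a)+\phi\,c_r\big).
\]
The structural point that drives everything is that $A$ depends on the point only through its producer coordinate $p_r$ and $B$ only through its content coordinate $c_r$.

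Next I would record the two monotonicity inputs together with the relevant positivity. The transformation $T$ is order-preserving in $p$, so since $\phi>0$ the factor $A(\cdot)$ is strictly increasing in $p_r$; and $M$ is monotonically increasing in $q$ (as assumed in the main text), while its argument $(1-\phi)Q(D_a)+\phi c_r$ is strictly increasing in $c_r$ because $\phi>0$, so $B(\cdot)$ is strictly increasing in $c_r$. Alongside these I would fix the sign of the two factors on the domain of interest: $B(c_r)=M(\cdot)>0$ since potential participation is positive, and $A(p_r)>0$ since the convex-type combination of $E(T|D_a)$ and $T(p_r)$ is positive for any point that is a genuine candidate for inclusion in a generally decreasing preferred sequence.

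With the factorization and these sign facts in hand, each case reduces to the elementary statement that a product of positive factors strictly increases when one factor strictly increases and the other does not decrease. In case (i), $p_{r_1}>p_{r_2}$ and $c_{r_1}>c_{r_2}$ give $A(p_{r_1})>A(p_{r_2})>0$ and $B(c_{r_1})>B(c_{r_2})>0$, hence $\Xi(r_1)=A(p_{r_1})B(c_{r_1})>A(p_{r_2})B(c_{r_2})=\Xi(r_2)$. In case (ii), writing $A:=A(p_{r_1})=A(p_{r_2})>0$ and using $B(c_{r_1})>B(c_{r_2})$ gives $\Xi(r_1)-\Xi(r_2)=A\big(B(c_{r_1})-B(c_{r_2})\big)>0$. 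In case (iii), writing $B:=B(c_{r_1})=B(c_{r_2})>0$ and using $A(p_{r_1})>A(p_{r_2})$ gives $\Xi(r_1)-\Xi(r_2)=\big(A(p_{r_1})-A(p_{r_2})\big)B>0$. In every case $\Xi(r_1)>\Xi(r_2)$, which is precisely inequality \ref{eq:OrderCond-r1-r2}, so $r_1$ is preferred over $r_2$.

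The hard part is not any estimate but pinning down the two standing hypotheses on which the separable argument rests. Monotonicity of $M$ is given, but $T$ is introduced only as a scale transformation, so I would have to state explicitly that it is increasing in $p$ (higher producer priority maps to larger $T$); and because the paper admits negative $p$ and $c$, I would restrict to the regime where $A(p_r)>0$ and $M>0$ — exactly the regime in which a point is a live candidate for the sequence — rather than asserting the inequalities for arbitrary coordinate values. Once those are fixed, the result is immediate from the product structure, with no delicate computation required.
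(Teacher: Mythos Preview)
Your argument is correct and follows essentially the same route as the paper: substitute the common increment $N_{r_1}=N_{r_2}$ into the order condition \eqref{eq:OrderCond-r1-r2} and read off the inequality from monotonicity. You are more explicit than the paper---you spell out the $A(p_r)B(c_r)$ factorization via $\Xi$ and flag the standing hypotheses (monotonicity of $T$ in $p$, positivity of the factors)---whereas the paper simply displays the substituted inequality and relegates the monotonicity assumptions to a one-line remark after the proof.
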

\begin{proof}
Utilizing equation \ref{eq:OrderCond-r1-r2} with $N_{r}=N_{r_{1}}=N_{r_{2}}$
reveals that 
\[
\left[\frac{N_{a}}{N_{a}+N_{r}}+\frac{T(p_{1})}{E(T|D_{a})}\frac{N_{r}}{N_{a}+N_{r}}\right]M_{ar_{1}}>\left[\frac{N_{a}}{N_{a}+N_{r}}+\frac{T(p_{2})}{E(T|D_{a})}\frac{N_{r}}{N_{a}+N_{r}}\right]M_{ar_{2}}
\]
\end{proof}
Note an alternative proof exits for $\varUpsilon$ and $\Delta_{V}$
when $T(p)$ is monotonically increasing in $p$, and $M$ is monotonically
increasing in c.
\begin{prop}
\label{prop:Preferred-When-Nr1>Nr2} For a preferred distribution
sequence and points $r_{1}$ and $r_{2}$, where $N_{r_{2}}<N_{r_{1}}<1=N_{a}$
, $r_{1}$ is added before $r_{2}$ when: 

\begin{doublespace}
\noindent (i) $T(p_{1})=T(p_{2})=E(T|D)$ and $1<\frac{1+N_{r_{2}}\frac{c_{r_{2}}}{Q(D)}}{1+N_{r_{2}}}<\frac{1+N_{r_{2}}\left(2\frac{c_{r_{2}}}{Q(D)}-1\right)}{1+N_{r_{2}}}<\frac{c_{1}}{Q(D)}<\frac{c_{2}}{Q(D)}$
or alternatively $N_{r_{2}}<\frac{1}{2\eta-1}$ and $N_{r_{1}}>\frac{\eta N_{r_{2}}}{1+N_{r_{2}}\left(1-\eta\right)}$,
where $\eta=\frac{c_{r_{2}}-Q(D)}{c_{r_{1}}-Q(D)}$ ; or 

\noindent (ii) $c_{1}=c_{2}=Q(D$) and $1<\frac{1+N_{r_{2}}\frac{T(p_{2})}{E(T|D_{a})}}{1+N_{r_{2}}}<\frac{1+N_{r_{2}}\left(2\frac{T(p_{2})}{E(T|D_{a})}-1\right)}{1+N_{r_{2}}}<\frac{T(p_{1})}{E(T|D_{a})}<\frac{T(p_{2})}{E(T|D_{a})}$
or alternatively $N_{r_{2}}<\frac{1}{2\varsigma-1}$ and $N_{r_{1}}>\frac{\varsigma N_{r_{2}}}{1+N_{r_{2}}-\varsigma N_{r_{2}}}$,
where $\varsigma=\frac{T(p_{2})-E(T|D_{a})}{T(p_{1})-E(T|D_{a})}$.
\end{doublespace}
\end{prop}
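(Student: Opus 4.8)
The plan is to read off both cases from the single order-condition inequality \ref{eq:OrderCond-r1-r2}, which says that $r_{1}$ is added before $r_{2}$ exactly when $\left[\frac{N_{a}}{N_{a}+N_{r_{1}}}+\frac{T(p_{1})}{E(T|D_{a})}\frac{N_{r_{1}}}{N_{a}+N_{r_{1}}}\right]M_{ar_{1}}>\left[\frac{N_{a}}{N_{a}+N_{r_{2}}}+\frac{T(p_{2})}{E(T|D_{a})}\frac{N_{r_{2}}}{N_{a}+N_{r_{2}}}\right]M_{ar_{2}}$. Writing $D$ for the base $D_{a}$ and normalizing $N_{a}=1$, each hypothesis collapses one side of this product, leaving a clean comparison.

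For case (i) I would set $T(p_{1})=T(p_{2})=E(T|D)$, which makes both producer-value brackets equal to $1$, so the condition reduces to $M_{ar_{1}}>M_{ar_{2}}$. Since $M$ is monotonically increasing in $q$, this is equivalent to $Q(D_{ar_{1}})>Q(D_{ar_{2}})$, and I would evaluate each side with the additive-$Q$ rule of Assumption \ref{assu:Q-For-2-non-overlapping}, giving $Q(D_{ar})=\frac{Q(D)+N_{r}c_{r}}{1+N_{r}}$. Subtracting $Q(D)$ from both sides produces the clean inequality $\frac{N_{r_{1}}(c_{r_{1}}-Q(D))}{1+N_{r_{1}}}>\frac{N_{r_{2}}(c_{r_{2}}-Q(D))}{1+N_{r_{2}}}$, which is the heart of the argument. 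Setting $\eta=\frac{c_{r_{2}}-Q(D)}{c_{r_{1}}-Q(D)}$ and dividing by $c_{r_{1}}-Q(D)>0$ turns this into $\frac{N_{r_{1}}}{1+N_{r_{1}}}>\eta\frac{N_{r_{2}}}{1+N_{r_{2}}}$, and solving for $N_{r_{1}}$ yields the alternative form $N_{r_{1}}>\frac{\eta N_{r_{2}}}{1+N_{r_{2}}(1-\eta)}$. The side condition $N_{r_{2}}<\frac{1}{2\eta-1}$ then does double duty: it keeps the denominator $1+N_{r_{2}}(1-\eta)$ positive so that the division preserves the inequality, and it forces the stated lower bound on $N_{r_{1}}$ to lie below $1$, making it compatible with $N_{r_{2}}<N_{r_{1}}<1$. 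For the first (explicit-threshold) form I would specialize the clean inequality at the extreme $N_{r_{1}}=1$, where $\frac{N_{r_{1}}}{1+N_{r_{1}}}=\tfrac12$, obtaining $\frac{c_{1}}{Q(D)}>\frac{1+N_{r_{2}}(2c_{r_{2}}/Q(D)-1)}{1+N_{r_{2}}}$; with $\gamma_{2}=c_{r_{2}}/Q(D)$ the other links of the displayed chain are automatic, since the first two reduce to $\gamma_{2}>1$ and the last, $\frac{1+N_{r_{2}}(2\gamma_{2}-1)}{1+N_{r_{2}}}<\gamma_{2}$, factors as $(1-\gamma_{2})(1-N_{r_{2}})<0$, true because $\gamma_{2}>1$ and $N_{r_{2}}<1$, which certifies that the admissible range for $c_{1}$ is non-empty.

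For case (ii) the roles of the two factors swap. Setting $c_{1}=c_{2}=Q(D)$ and applying Assumption \ref{assu:Q-For-2-non-overlapping} gives $Q(D_{ar_{1}})=Q(D_{ar_{2}})=Q(D)$, hence $M_{ar_{1}}=M_{ar_{2}}$; these cancel in \ref{eq:OrderCond-r1-r2}, leaving only the bracket comparison $\frac{1+\tau_{1}N_{r_{1}}}{1+N_{r_{1}}}>\frac{1+\tau_{2}N_{r_{2}}}{1+N_{r_{2}}}$ with $\tau_{r}=T(p_{r})/E(T|D_{a})$. This is formally identical to the $Q$-comparison of case (i) after replacing $c_{r}/Q(D)$ by $\tau_{r}$, so the same manipulation — subtract $1$, set $\varsigma=\frac{T(p_{2})-E(T|D_{a})}{T(p_{1})-E(T|D_{a})}=\frac{\tau_{2}-1}{\tau_{1}-1}$, and solve for $N_{r_{1}}$ — reproduces both the chain form and the alternative form $N_{r_{1}}>\frac{\varsigma N_{r_{2}}}{1+N_{r_{2}}-\varsigma N_{r_{2}}}$ together with the feasibility bound $N_{r_{2}}<\frac{1}{2\varsigma-1}$.

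The main obstacle is not the algebra, which is routine once the reduction is made, but reconciling the two stated forms: recognizing that the explicit-threshold chain is the boundary specialization at $N_{r_{1}}=1$ whose purpose is only to certify that an admissible $c_{1}$ (respectively $T(p_{1})$) exists, whereas the ``alternatively'' form is the sharp statement obtained by solving the clean inequality for $N_{r_{1}}$. Careful sign bookkeeping is also essential throughout — $c_{r}>Q(D)$ and $\tau_{r}>1$ so that $\eta,\varsigma>1$, and $1+N_{r_{2}}(1-\eta)>0$ under the feasibility bound — since every division and every preservation of the inequality direction depends on these.
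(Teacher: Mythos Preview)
Your proposal is correct and follows essentially the same route as the paper: both reduce \eqref{eq:OrderCond-r1-r2} under the stated hypotheses to the comparison $Q(D_{ar_1})>Q(D_{ar_2})$ (case (i)) or the bracket comparison (case (ii)), solve for the $N_{r_1}$ threshold via $\eta$ (resp.\ $\varsigma$), and obtain the chain form by imposing $N_{r_1}<1$. Your early subtraction of $Q(D)$ to reach $\frac{N_{r_1}(c_{r_1}-Q(D))}{1+N_{r_1}}>\frac{N_{r_2}(c_{r_2}-Q(D))}{1+N_{r_2}}$ is a slightly cleaner path to the same algebra, and your framing of the chain threshold as the specialization at $N_{r_1}=1$ is equivalent to the paper's framing of it as the compatibility condition forced by $N_{r_1}<1$.
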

\begin{cor*}
The proposition further implies that for any $N_{a}$ or any $N_{r_{1}}$,
if $1=\frac{c_{1}}{Q(D_{a})}<\frac{c_{2}}{Q(D_{a})}$ and $1=\frac{T(p_{1})}{E(T|D_{a})}<\frac{T(p_{2})}{E(T|D_{a})}$,
$r_{2}$ is preferred to $r_{1}$.
\end{cor*}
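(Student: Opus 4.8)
The plan is to work directly from the ordering condition in equation \ref{eq:OrderCond-r1-r2}, which (being specialized from \ref{eq:ConditionForOrder}) is precisely the inequality under which $r_1$ is added before $r_2$, i.e. under which $r_1$ is preferred. To establish instead that $r_2$ is preferred, I would show that the strict reverse inequality holds under the stated hypotheses: that the $r_2$ bracket-times-$M_{ar_2}$ term strictly exceeds its $r_1$ counterpart.

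First I would collapse the $r_1$ side of \ref{eq:OrderCond-r1-r2} to a constant. By Assumption \ref{assu:Q-For-2-non-overlapping}, $Q(D_{ar_1})$ is the $N$-weighted average of $Q(D_a)$ and $c_1$; since the hypothesis gives $c_1 = Q(D_a)$, this average collapses to $Q(D_{ar_1}) = Q(D_a)$, so $M_{ar_1} = M_a$. The second hypothesis $T(p_1) = E(T|D_a)$ makes the bracket multiplying $M_{ar_1}$ equal to $\frac{N_a}{N_a + N_{r_1}} + \frac{N_{r_1}}{N_a + N_{r_1}} = 1$. Hence the entire $r_1$ side reduces to exactly $M_a$, and crucially this reduction is independent of both $N_a$ and $N_{r_1}$ — which is exactly why the corollary can claim validity for \emph{any} such volumes.

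For the $r_2$ side I would bound the two factors separately. Since $c_2 > Q(D_a)$ and $N_{r_2} > 0$, Assumption \ref{assu:Q-For-2-non-overlapping} gives $Q(D_{ar_2}) > Q(D_a)$, and then monotonicity of $M$ in $q$ yields $M_{ar_2} > M_a$. Since $T(p_2) > E(T|D_a)$, the bracket multiplying $M_{ar_2}$ strictly exceeds $\frac{N_a}{N_a + N_{r_2}} + \frac{N_{r_2}}{N_a + N_{r_2}} = 1$. A factor strictly above $1$ times a factor strictly above $M_a$ gives a product strictly above $M_a$, which equals the reduced $r_1$ side. Thus the inequality of \ref{eq:OrderCond-r1-r2} fails, so $r_2$ is preferred to $r_1$.

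There is no serious obstacle here: once the two normalizing hypotheses on $r_1$ are recognized to collapse its side of \ref{eq:OrderCond-r1-r2} to the constant $M_a$, the argument is a direct substitution. The only points requiring care are fixing the correct direction of the preference inequality and invoking Assumption \ref{assu:Q-For-2-non-overlapping} together with the monotonicity of $M$ to pin down the sign of $M_{ar_2} - M_a$.
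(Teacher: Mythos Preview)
Your argument is correct. You substitute the hypotheses $c_1=Q(D_a)$ and $T(p_1)=E(T|D_a)$ directly into the ordering inequality \eqref{eq:OrderCond-r1-r2}, collapse the $r_1$ side to the constant $M_a$, and then bound the $r_2$ side strictly above $M_a$ by handling the bracket (via $T(p_2)>E(T|D_a)$) and $M_{ar_2}$ (via $c_2>Q(D_a)$ and monotonicity of $M$) separately. This is clean and self-contained.

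The paper's own proof takes a slightly different route: rather than returning to \eqref{eq:OrderCond-r1-r2}, it argues that the value $c_1/Q(D_a)=1$ falls below the threshold $\frac{1+N_{r_2}(2c_2/Q(D_a)-1)}{1+N_{r_2}}$ derived in part~(i), and likewise $T(p_1)/E(T|D_a)=1$ falls below the analogous threshold from part~(ii). That is, the paper leans on the threshold formulas already established in the two special cases of the proposition. Your approach is more direct and arguably tighter, since those thresholds were each derived under the assumption that the \emph{other} variable is held at the baseline ($T(p_1)=T(p_2)=E(T|D_a)$ in~(i), $c_1=c_2=Q(D_a)$ in~(ii)), so invoking them jointly requires an extra word of justification that the paper leaves implicit. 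Your direct substitution sidesteps that issue entirely and makes the independence from $N_a$ and $N_{r_1}$ transparent.
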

\begin{proof}
Consider $D_{1}=D_{a}+I_{r_{1}}$ and $D_{2}=D_{a}+I_{r_{2}}$. If
$r_{1}$ is added to the arbitrary distribution $D_{a}$ before $r_{2}$
this inequality must hold,
\[
\left[\frac{N_{a}}{N_{a}+N_{r_{1}}}+\frac{T(p_{1})}{E(T|D_{a})}\frac{N_{r_{1}}}{N_{a}+N_{r_{1}}}\right]M_{ar_{1}}>\left[\frac{N_{a}}{N_{a}+N_{r_{2}}}+\frac{T(p_{2})}{E(T|D_{a})}\frac{N_{r_{2}}}{N_{a}+N_{r_{2}}}\right]M_{ar_{2}}
\]

For case (i) since $T(p_{1})=T(p_{2})=E(T|D)$ only $M_{ar_{1}}\text{ and }M_{ar_{2}}$
are of interest. In general,$\text{ for }D'=D_{a}+I_{r}$ 
\[
Q(D')=\frac{N_{a}}{N_{a}+N_{r}}Q(D_{a})+\frac{N_{r}}{N_{a}+N_{r}}c
\]

Thus, $Q(D'(N_{r}))$ is increasing in $N_{r}$ for $c>Q(D_{a})$
. 

For $M_{ar_{1}}>M_{ar_{2}}$ it must be that $Q(D_{1})>Q(D_{2})$
and the threshold on $N_{r_{1}}$ for that inequality will be the
value of $N_{r_{1}}$ that solves, %
{} 
\[
Q(D_{1})=\frac{N_{a}}{N_{a}+N_{r_{1}}}Q(D_{a})+\frac{N_{r_{1}}}{N_{a}+N_{r_{1}}}c_{r_{1}}=\frac{N_{a}}{N_{a}+N_{r_{2}}}Q(D_{a})+\frac{N_{r_{2}}}{N_{a}+N_{r_{2}}}c_{r_{2}}=Q(D_{2})
\]

Rearranging, normalizing all $N$ values relative to $N_{a}=1$, and
presenting $c$ relative to $Q(D_{a})$, simplifies to,
\[
1+N_{r_{1}}\frac{c_{r_{1}}}{Q(D_{a})}=\frac{1+N_{r_{1}}}{1+N_{r_{2}}}\left(1+N_{r_{2}}\frac{c_{r_{2}}}{Q(D_{a})}\right)
\]

Rearranging further yields a simplified equation for the threshold
as,%
\[
N_{r_{1}}\left[\left(1+N_{r_{2}}\right)\left(c_{r_{1}}-Q(D_{a})\right)-N_{r_{2}}\left(c_{r_{2}}-Q(D_{a})\right)\right]=N_{r_{2}}\left(c_{r_{2}}-Q(D_{a})\right)
\]

Expressing the threshold as an inequality for $r_{1}$ to be added
before $r_{2}$, and letting $\eta=\frac{c_{r_{2}}-Q(D)}{c_{r_{1}}-Q(D)}$ 

\[
N_{r_{1}}\left[1+N_{r_{2}}-\eta N_{r_{2}}\right]>N_{r_{2}}\eta
\]

When $\frac{1+N_{r_{2}}}{N_{r_{2}}}<\eta$ the $L.H.S<0$ and the
inequality cannot hold for values of $N_{r_{1}}>0$. Also when $\frac{1+N_{r_{2}}}{N_{r_{2}}}=\eta$
the denominator is zero and $N_{r_{1}}$ is undefined. So the only
viable case is when 

\begin{equation}
\frac{1+N_{r_{2}}}{N_{r_{2}}}>\eta\label{eq:eta-threshold}
\end{equation}

\begin{equation}
N_{r_{2}}<\frac{1}{\eta-1}\label{eq:Nr2-threshold}
\end{equation}
 where the inequality for $N_{r_{1}}$ is,

\begin{equation}
N_{r_{1}}>\frac{N_{r_{2}}\eta}{1+N_{r_{2}}\left(1-\eta\right)}\label{eq:Nr1-Threshold}
\end{equation}

Alternatively substituting for $\eta$ into inequality yields this
restriction on $c_{r_{1}}$,%

\begin{equation}
\frac{c_{r_{1}}}{Q(D_{a})}>\frac{1+N_{r_{2}}\frac{c_{r_{2}}}{Q(D_{a})}}{1+N_{r_{2}}}\label{eq:Cr1-threshold-0}
\end{equation}

 With the constraint $N_{r_{1}}<1$, this additional more restrictive
restriction would apply. %

\begin{equation}
\frac{c_{r_{1}}}{Q(D)}>1+\frac{2N_{r_{2}}}{1+N_{r_{2}}}\left(\frac{c_{r_{2}}}{Q(D)}-1\right)\label{eq:Cr1-threshold-1}
\end{equation}

Further, it can be shown that,%

\[
N_{r_{1}}<1\implies N_{r_{2}}<\frac{1}{2\eta-1}
\]

Also note that $\frac{1+N_{r_{2}}\frac{c_{r_{2}}}{Q(D_{a})}}{1+N_{r_{2}}}<\frac{c_{r_{1}}}{Q(D_{a})}<\frac{c_{r_{2}}}{Q(D_{a})}$$\implies$
$1<\frac{c_{r_{2}}}{Q(D_{a})}$%
{} completing the proof for (i).

------

For (ii) When $c_{1}=c_{2}=Q(D$) and $\frac{T(p_{1})}{E(T|D_{a})}<\frac{T(p_{2})}{E(T|D_{a})}$,
$M_{ar_{1}}=M_{ar_{2}}$, and need to find the value of $N_{r_{1}}$
that solves:

\[
\frac{N_{a}}{N_{a}+N_{r_{1}}}+\frac{T(p_{1})}{E(T|D_{a})}\frac{N_{r_{1}}}{N_{a}+N_{r_{1}}}=\frac{N_{a}}{N_{a}+N_{r_{2}}}+\frac{T(p_{2})}{E(T|D_{a})}\frac{N_{r_{2}}}{N_{a}+N_{r_{2}}}
\]

Rearranging, normalizing all $N$ values relative to $N_{a}=1$, simplifies
the equation to:

\[
1+N_{r_{1}}\frac{T(p_{1})}{E(T|D_{a})}=\frac{1+N_{r_{1}}}{1+N_{r_{2}}}\left(1+N_{r_{2}}\frac{T(p_{2})}{E(T|D_{a})}\right)
\]

Rearranging, 

\[
N_{r_{1}}=\frac{N_{r_{2}}\left(\frac{T(p_{2})}{E(T|D_{a})}-1\right)}{\frac{T(p_{1})}{E(T|D_{a})}-1-N_{r_{2}}\left(\frac{T(p_{2})}{E(T|D_{a})}-\frac{T(p_{1})}{E(T|D_{a})}\right)}
\]

\begin{equation}
N_{r_{1}}=\frac{N_{r_{2}}\left(\frac{T(p_{2})}{E(T|D_{a})}-1\right)}{\left(\frac{T(p_{1})}{E(T|D_{a})}-1\right)\left(1+N_{r_{2}}\right)-N_{r_{2}}\left(\frac{T(p_{2})}{E(T|D_{a})}-1\right)}\label{eq:Nr1-Threshold-p}
\end{equation}

Similar to the algebra for proving (i) rearranging \ref{eq:Nr1-Threshold-p}
yields, 

\[
N_{r_{1}}=\frac{\varsigma N_{r_{2}}}{1+N_{r_{2}}\left(1-\varsigma\right)}
\]

where $\varsigma=\frac{T(p_{2})-E(T|D_{a})}{T(p_{1})-E(T|D_{a})}$
and the additional constraint on $\frac{T(p_{1})}{E(T|D_{a})}$,

\[
\frac{1+N_{r_{2}}\frac{T(p_{2})}{E(T|D_{a})}}{1+N_{r_{2}}}<\frac{T(p_{1})}{E(T|D_{a})}
\]

\[
N_{r_{1}}<1\implies N_{r_{2}}<\frac{1}{2\varsigma-1}
\]

For proof of the Corollary, note that since $\frac{c_{1}}{Q(D_{a})}=1$
and $\frac{T(p_{1})}{E(T|D_{a})}=1$ the values for $\frac{c_{1}}{Q(D_{a})}$
and $\frac{T(p_{1})}{E(T|D_{a})}$ are relatively too small to meet
the minimum thresholds as in the inequalities $\frac{c_{1}}{Q(D_{a})}<\frac{1+N_{r_{2}}\left(2\frac{c_{2}}{Q(D_{a})}-1\right)}{1+N_{r_{2}}}$
and $\frac{T(p_{1})}{E(T|D_{a})}<\frac{1+N_{r_{2}}\left(2\frac{T(p_{2})}{E(T|D_{a})}-1\right)}{1+N_{r_{2}}}$
. 

Thus completing the proof. 
\end{proof}
The results in the proposition derive from the extra weighting attributable
to $N_{r_{1}}$ that allow a point with a lower $c_{1}$ or $p_{1}$,
and a large enough $N_{r_{1}}$to be preferred over another point
with higher values. However, there are limits on the values of $c_{1}$
and $p_{1}$, and this is the basis for the corollary. These lower
limits on $c_{1}$ result from the upper limit $N_{r_{1}}<1$. 

Note that the corollary could potentially be made less restrictive
considering the restrictions on $\frac{c_{1}}{Q(D_{a})}$ and $\frac{T(p_{1})}{E(T|D_{a})}$,
indicating that there could be values for $r_{1}$ that are not added
to the sequence before $r_{2}$ when

\[
1<\frac{c_{1}}{Q(D_{a})}<\frac{1+N_{r_{2}}\left(2\frac{c_{2}}{Q(D_{a})}-1\right)}{1+N_{r_{2}}}\text{ and }1<\frac{T(p_{1})}{E(T|D_{a})}<\frac{1+N_{r_{2}}\left(2\frac{T(p_{2})}{E(T|D_{a})}-1\right)}{1+N_{r_{2}}}
\]

For further reference, the following 4 Sequence Scenarios are defined
to differentiate assumptions for points $r_{1}$ and $r_{2}$:
\begin{description}
\item [{Sequence\ Scenario\ B1:}] For preferred distribution sequence
let $r_{1}$ be such that $D^{*}=D_{a}+I_{r_{1}}$ where $D_{a}$
is the distribution in the sequence prior to $D^{*}$. 
\item [{Sequence\ Scenario\ B2:}] The Sequence Scenario B1 with the addition
of a point $r_{2}$ with distribution $I_{r_{2}}$ where $D'=D^{*}+I_{r_{2}}$.
\item [{Sequence\ Scenario\ B3:}] The Sequence Scenario B2 where $r_{1}=\{c_{r_{1}},p_{r_{1}}\}$
such that $c_{r_{1}}=Q(D_{a})$ and $T(p_{r_{1}})=E(T|D_{a})$. In
this scenario $Q(D_{a})=Q(D_{ar_{1}})$ and $E(T|D_{a})=E(T|D_{ar_{1}})$
so there is no change in $M$ or the producer unit value $E(T,D)$
between $D_{a}$ and $D^{*}$. Substituting into equation \ref{eq:OrderCond-r1-r2}
shows that $\frac{N_{a}}{N_{a}+N_{r_{1}}}+\frac{T(p_{1})}{E(T|D_{a})}\frac{N_{r_{1}}}{N_{a}+N_{r_{1}}}=1$
regardless of the value for $N_{r_{1}}$.
\item [{Sequence\ Scenario\ B4:}] The Sequence Scenario B2 where $r_{1}=\{c_{r_{1}},p_{r_{1}}\}$
such that $c_{r_{1}}\leq Q(D_{a})$ and $T(p_{r_{1}})<E(T|D_{a})$.
In this scenario $Q(D_{ar_{1}})\leq Q(D_{a})$ and $E(T|D_{ar_{1}})<E(T|D_{a})$.
Substituting into equation \ref{eq:OrderCond-r1-r2} shows that $\frac{N_{a}}{N_{a}+N_{r_{1}}}+\frac{T(p_{r_{1}})}{E(T|D_{a})}\frac{N_{r_{1}}}{N_{a}+N_{r_{1}}}<1$
regardless of the value for $N_{r_{1}}$.
\end{description}
The Sequence Scenarios B1 - B4 provide a simplified way to assure
consistency of assumptions for the facts, propositions, and discussions
in the remainder of this appendix. Sequence Scenarios B1 and B2 are
basic descriptions. Scenario B3 provides a specialized case that while
not guaranteed to exist is plausible in consideration of Proposition
\ref{prop:For-OrderAdded}, and the restrictions are useful. Scenario
B4 is a less restrictive alternative to B3. 

For notational convenience let $M_{r_{2}}$ be the value of $M(D')$
where $D'=D^{*}+I_{r_{2}}$, and let $M_{ar_{2}}$ be the value of
$M(D')$ where $D'=D_{a}+I_{r_{2}}$. Specifically, 
\[
M_{r_{2}}(N_{r_{2}})=M\left(\frac{N^{*}}{N^{*}+N_{r_{2}}}Q(D^{*})+\frac{N_{r_{2}}}{N^{*}+N_{r_{2}}}c\right)
\]

\[
M_{ar_{2}}(N_{r_{2}})=M\left(\frac{N_{a}}{N_{a}+N_{r_{2}}}Q(D_{a})+\frac{N_{r_{2}}}{N_{a}+N_{r_{2}}}c\right)\text{ and \ensuremath{M_{a}}\ensuremath{=M(Q(\ensuremath{D_{a}}))}}
\]

\begin{prop}
\label{prop:For-a-point-added-after-D*} For the sequence scenario
B2,  $\frac{T(p_{r_{2}})}{E(T|D^{*})}>1$ $\iff$$N^{*}>M_{r_{2}},$
and $N^{*}<M_{r_{2}}$ $\iff$ $\frac{T(p_{r_{2}})}{E(T|D^{*})}<1$.
\end{prop}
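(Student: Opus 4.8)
The plan is to reduce the claimed equivalence to a statement purely about the content value $c_{r_{2}}$, and then to supply the missing link between $c_{r_{2}}$ and $T(p_{r_{2}})$ from the preferred-sequence ordering. The first (routine) step establishes that $N^{*}>M_{r_{2}}\iff c_{r_{2}}<Q(D^{*})$, together with its mirror $N^{*}<M_{r_{2}}\iff c_{r_{2}}>Q(D^{*})$. Because $D^{*}$ is the point with $M(D^{*})=N(D^{*})=N^{*}$ and $M(D^{*})=M(Q(D^{*}))$, while $M_{r_{2}}=M(Q(D'))$ with $Q(D')=\frac{N^{*}}{N^{*}+N_{r_{2}}}Q(D^{*})+\frac{N_{r_{2}}}{N^{*}+N_{r_{2}}}c_{r_{2}}$ a weighted average of $Q(D^{*})$ and $c_{r_{2}}$, strict monotonicity of $M$ gives $N^{*}>M_{r_{2}}\iff Q(D^{*})>Q(D')\iff c_{r_{2}}<Q(D^{*})$. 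Hence it suffices to prove $T(p_{r_{2}})>E(T|D^{*})\iff c_{r_{2}}<Q(D^{*})$.

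Second, I would exploit the marginal structure at $D^{*}$. Working in Sequence Scenario B3, the point $r_{1}$ reaching $D^{*}$ is neutral, $c_{r_{1}}=Q(D_{a})=Q(D^{*})$ and $T(p_{r_{1}})=E(T|D_{a})=E(T|D^{*})$, so inserting $r_{1}$ changes neither $Q$ nor $E(T)$ and the value of that insertion equals the ``neutral level'' $E(T|D^{*})N^{*}$. Since the sequence is preferred and $r_{2}$ is added after $D^{*}$, the increment $r_{2}$ is viable after $D^{*}$ in the sense of Definition \ref{def:Viable-Y-At-D} and is not preferred over $r_{1}$. I would then compare the value of $r_{2}$ against this neutral level using $\Xi$ from \ref{eq:ApendixAltVF} with base $D^{*}$, namely $\Xi(c_{r_{2}},p_{r_{2}},\phi_{r_{2}}\mid D^{*})=[E(T|D^{*})+\phi_{r_{2}}(T(p_{r_{2}})-E(T|D^{*}))]\,M(Q(D^{*})+\phi_{r_{2}}(c_{r_{2}}-Q(D^{*})))$ with $\phi_{r_{2}}=N_{r_{2}}/(N^{*}+N_{r_{2}})$; the ordering inequality \ref{eq:OrderCond-r1-r2} relative to the base distribution $D_{a}$ reduces to this comparison under the B3 identities $Q(D_{a})=Q(D^{*})$ and $E(T|D_{a})=E(T|D^{*})$.

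Third, I would read off the signs of the two factors of $\Xi$: the producer bracket exceeds $E(T|D^{*})$ exactly when $T(p_{r_{2}})>E(T|D^{*})$, and the participation factor exceeds $N^{*}=M(Q(D^{*}))$ exactly when $c_{r_{2}}>Q(D^{*})$. If both factors were above their neutral values the product would strictly exceed the neutral level, so $r_{2}$ would be preferred over the neutral $r_{1}$ and hence added before $D^{*}$, contradicting viability; this yields the forward implications $T(p_{r_{2}})>E(T|D^{*})\Rightarrow c_{r_{2}}<Q(D^{*})$ and $c_{r_{2}}>Q(D^{*})\Rightarrow T(p_{r_{2}})<E(T|D^{*})$. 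Monotonicity of preference separately in $c$ and in $p$ (Fact \ref{fact:For-two-points}, equivalently $\partial\Xi/\partial c>0$ and $\partial\Xi/\partial p>0$) is what makes these comparisons with the neutral configuration rigorous.

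The hard part will be the reverse implications, which amount to excluding a successor with both $c_{r_{2}}<Q(D^{*})$ and $T(p_{r_{2}})<E(T|D^{*})$: such a point is individually admissible after $D^{*}$ (its $\Delta_{V}$ may be negative without violating Definition \ref{def:A-preferred-distribution-sequence}), so the exclusion must be forced from $r_{2}$ being the most-preferred remaining increment at the margin---so that a configuration strictly below neutral in both coordinates would be dominated by an available mixed increment---rather than from a value sign alone. I would therefore spend the bulk of the argument tracking strict versus non-strict inequalities so that the forward implications and their duals compose into the genuine equivalences asserted, while noting that the operative consequence used downstream (in Lemma \ref{lem:LimitsOnTp-M'-afterD*} and Theorem \ref{thm:kappa-lt-0}) is precisely the ``not both above the $D^{*}$ averages'' half, which the ordering comparison delivers cleanly.
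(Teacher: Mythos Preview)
Your approach is essentially the paper's: reduce to a comparison of $r_{2}$ against a neutral predecessor $r_{1}$ with $c_{r_{1}}=Q(D_{a})$ and $T(p_{r_{1}})=E(T|D_{a})$ (Sequence Scenario B3), observe that if both $T(p_{r_{2}})>E(T|D^{*})$ and $M_{r_{2}}>N^{*}$ then $r_{2}$ would be preferred over $r_{1}$ and hence could not be added after $D^{*}$, and conclude ``not both above.'' The paper invokes exactly this via the Corollary to Proposition~\ref{prop:Preferred-When-Nr1>Nr2}; your use of $\Xi$ and Fact~\ref{fact:For-two-points} is the same mechanism written out in value-function terms.

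Your caveat about the reverse implications is well placed and, in fact, applies to the paper too. The paper's proof establishes only that $T(p_{r_{2}})>E(T|D^{*})$ and $N^{*}<M_{r_{2}}$ cannot hold simultaneously, then declares the proof complete; the ``not both below'' half (needed for the full $\iff$) is never argued. The paper's own commentary immediately after the proof restates the result as ``only one or none of the two inequalities can hold,'' which is exactly the one-sided conclusion you derive. So you are not missing an idea the paper supplies; rather, you have correctly diagnosed that the stated biconditional is stronger than what the ordering argument alone yields, and that the downstream applications (Lemma~\ref{lem:LimitsOnTp-M'-afterD*}, Theorem~\ref{thm:kappa-lt-0}) only consume the ``not both above'' direction.
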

\begin{proof}
Consider the case where both $\frac{T(p_{r_{2}})}{E(T|D^{*})}>1$
and $N^{*}<M_{r_{2}}$. In this case by the corollary of Proposition
\ref{prop:Preferred-When-Nr1>Nr2} point $r_{2}$ would be preferred
to any point $r_{1}$ with $\frac{c_{r_{1}}}{Q(D_{a})}=1$ and $\frac{T(p_{r_{1}})}{E(T|D_{a})}=1$,
and further distribution $I_{r_{2}}$ would be preferred to $D_{a}$,
and this would be regardless of the values for $N_{r_{1}}$ or $N_{a}$.
Thus only $\frac{T(p_{r_{2}})}{E(T|D^{*})}>1$ or $N^{*}<M_{r_{2}}$
are possible if $\frac{c_{r_{1}}}{Q(D_{a})}\leq1$ and $\frac{T(p_{r_{1}})}{E(T|D_{a})}\leq1$
and point $r_{1}$ is to be preferred over point $r_{2}$. Thus completing
the proof.
\end{proof}
The proposition says that a point $r_{2}$ with the properties specified,
would not be optimally added after $D^{*}$ if both $T(p_{r_{2}})>E(T|D^{*})$
and $M_{r_{2}}>N^{*}$. Thus only one or none of the two inequalities
can hold if the point $r_{2}$ is added after $D^{*}$. Also note
the requirement for strict inequality in the proposition and as shown
in the proof.

The proposition works because both $T(p_{r_{2}})>E(T|D^{*})$ and
$M_{r_{2}}>N^{*}$ implies $r_{2}$ would have been added to the sequence
prior to at least one of the points in $D^{*}$ or prior to $D_{a}$.\textbf{}
\begin{prop}
\label{prop:The-functions-Mr2-Mar2} For $r_{2}=\{c_{r_{2}},p_{r_{2}}\}$
and $N_{r_{2}}>0$ and assuming $M(q)=\zeta q^{\alpha}$,

(i) with sequence scenario B3 if $\frac{c_{r_{2}}}{Q(D_{a})}>1$,
$M_{ar_{2}}>M_{r_{2}}$ , and if $\frac{c_{r_{2}}}{Q(D_{a})}<1$,
$M_{ar_{2}}<M_{r_{2}}$, and $\lvert M_{a}-M_{ar_{2}}\rvert>\lvert N^{*}-M_{r_{2}}\rvert$,
and 

(ii) with sequence scenario B4 if $\frac{c_{r_{2}}}{Q(D_{a})}>\frac{c_{r_{1}}}{Q(D_{a})}-\frac{N_{a}}{N_{r_{2}}}\left(1-\frac{c_{r_{1}}}{Q(D_{a})}\right)$
or if $\frac{c_{r_{1}}}{Q(D_{a})}<\frac{\frac{c_{r_{2}}}{Q(D_{a})}N_{r_{2}}+N_{a}}{N_{a}+N_{r_{2}}}$
 then $M_{ar_{2}}>M_{r_{2}}$, and otherwise $M_{ar_{2}}<M_{r_{2}}$.
\end{prop}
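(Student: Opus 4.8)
The plan is to reduce both parts to a single comparison of weighted averages, exploiting that $M(q)=\zeta q^{\alpha}$ with $0<\alpha<1$ is strictly increasing in $q$. By the definitions of $M_{r_2}$ and $M_{ar_2}$ stated just above, these are $M$ evaluated at
\[
q_{r_2}=\frac{N^{*}Q(D^{*})+N_{r_2}c_{r_2}}{N^{*}+N_{r_2}},\qquad q_{ar_2}=\frac{N_{a}Q(D_a)+N_{r_2}c_{r_2}}{N_{a}+N_{r_2}},
\]
so that $M_{ar_2}\gtrless M_{r_2}\iff q_{ar_2}\gtrless q_{r_2}$. First I would invoke Assumption \ref{assu:Q-For-2-non-overlapping} on $D^{*}=D_a+I_{r_1}$ to substitute $N^{*}Q(D^{*})=N_aQ(D_a)+N_{r_1}c_{r_1}$, with $N^{*}=N_a+N_{r_1}$. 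This turns $q_{r_2}$ into a three-term average of $Q(D_a)$, $c_{r_1}$, $c_{r_2}$, while $q_{ar_2}$ is the same average with the $r_1$ mass removed.

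The core computation is then one cross-multiplication. Clearing the positive denominators, $q_{ar_2}-q_{r_2}$ has the sign of $N_{r_1}\bigl[N_a(Q(D_a)-c_{r_1})+N_{r_2}(c_{r_2}-c_{r_1})\bigr]$; since $N_{r_1}>0$, the sign is governed entirely by $N_a(Q(D_a)-c_{r_1})+N_{r_2}(c_{r_2}-c_{r_1})$. For part (ii) under scenario B4 I would rearrange this expression two ways: isolating $c_{r_2}$ yields $\tfrac{c_{r_2}}{Q(D_a)}>\tfrac{c_{r_1}}{Q(D_a)}-\tfrac{N_a}{N_{r_2}}\bigl(1-\tfrac{c_{r_1}}{Q(D_a)}\bigr)$, and isolating $c_{r_1}$ yields the equivalent alternative form, while the reversed inequality delivers $M_{ar_2}<M_{r_2}$. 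Part (i)'s first claim is the special case $c_{r_1}=Q(D_a)$ of scenario B3, where the sign reduces to that of $c_{r_2}-Q(D_a)$, i.e.\ to whether $c_{r_2}/Q(D_a)$ exceeds $1$.

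For the remaining inequality $\lvert M_a-M_{ar_2}\rvert>\lvert N^{*}-M_{r_2}\rvert$ in part (i), the extra ingredient is the identity $N^{*}=M_a$: because $D^{*}$ satisfies $M(D^{*})=N(D^{*})$ and scenario B3 forces $Q(D^{*})=Q(D_a)$, we have $N^{*}=M(Q(D^{*}))=M(Q(D_a))=M_a$. Under B3 both $q_{r_2}$ and $q_{ar_2}$ are weighted averages of $Q(D_a)$ and $c_{r_2}$ lying on the same side of $Q(D_a)$, and $q_{ar_2}$ carries strictly more weight on $c_{r_2}$ since $N_a<N^{*}$. Monotonicity of $M$ then places $M_{ar_2}$ strictly further from $M_a$ than $M_{r_2}$ is, whether $c_{r_2}>Q(D_a)$ or $c_{r_2}<Q(D_a)$, which is the desired strict inequality once $N^{*}=M_a$ is used to replace $N^{*}$ by $M_a$.

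I expect the only real difficulty to be bookkeeping rather than ideas: keeping the normalization $N^{*}=N_a+N_{r_1}$ consistent, folding $N_{r_1}c_{r_1}$ into $Q(D^{*})$ correctly via the additive-$Q$ assumption, and tracking the sign through the cross-multiplication so that the two stated threshold forms come out algebraically identical. The conceptual content is simply that passing from $D_a$ to $D^{*}$ injects the $r_1$ mass, diluting the weight $c_{r_2}$ receives relative to $Q$, so the direction of the $M_{ar_2}$ versus $M_{r_2}$ comparison is decided by where $c_{r_2}$ sits relative to the $r_1$-adjusted average.
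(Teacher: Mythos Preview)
Your proposal is correct and follows essentially the same approach as the paper: both reduce the comparison of $M_{ar_2}$ and $M_{r_2}$ to a comparison of the underlying $q$-values via monotonicity of $M$, then carry out the same weighted-average algebra (the paper frames it as the ratio $q_{ar_2}/q_{b}$, you as the sign of $q_{ar_2}-q_{r_2}$, but the cross-multiplication is identical). Your explicit sign expression $N_{r_1}\bigl[N_a(Q(D_a)-c_{r_1})+N_{r_2}(c_{r_2}-c_{r_1})\bigr]$ is a slightly cleaner packaging than the paper's step-by-step rearrangement, and your handling of the absolute-value claim via $N^{*}=M_a$ and the weighting argument matches the paper's reasoning exactly.
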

\begin{proof}
Prove (i) by evaluating this inequality.

\begin{equation}
\frac{M_{ar_{2}}}{M_{r_{2}}}=\left(\frac{\frac{N_{a}}{N_{a}+N_{r_{2}}}Q(D_{a})+\frac{N_{r_{2}}}{N_{a}+N_{r_{2}}}c_{r_{2}}}{\frac{N^{*}}{N^{*}+N_{r_{2}}}Q(D^{*})+\frac{N_{r_{2}}}{N^{*}+N_{r_{2}}}c_{r_{2}}}\right)^{\alpha}>1\label{eq:Mar2-Mr2}
\end{equation}

Equivalently, evaluate,

\[
\frac{\frac{N_{a}}{N_{a}+N_{r_{2}}}Q(D_{a})+\frac{N_{r_{2}}}{N_{a}+N_{r_{2}}}c_{r_{2}}}{\frac{N^{*}}{N^{*}+N_{r_{2}}}Q(D^{*})+\frac{N_{r_{2}}}{N^{*}+N_{r_{2}}}c_{r_{2}}}>1
\]

Note $D^{*}=D_{a}+I_{r_{1}}$ and $Q(D^{*})=\frac{N_{a}}{N^{*}}Q(D_{a})+\frac{N_{r_{1}}}{N^{*}}c_{r_{1}}$
and under the assumptions of $B3$, $c_{r_{1}}=Q(D_{a})$. Substituting
yields,

\[
\frac{\frac{N_{a}}{N_{a}+N_{r_{2}}}Q(D_{a})+\frac{N_{r_{2}}}{N_{a}+N_{r_{2}}}c_{r_{2}}}{\frac{N_{a}+N_{r_{1}}}{N_{a}+N_{r_{1}}+N_{r_{2}}}Q(D_{a})+\frac{N_{r_{2}}}{N_{a}+N_{r_{1}}+N_{r_{2}}}c_{r_{2}}}>1
\]

Thus in the B3 case, because the weights shift more to $Q(D_{a})$
in the denominator,%
{} the denominator will be smaller when $c_{r_{2}}>Q(D_{a}$), thus
proving:%

\[
c_{r_{2}}>Q(D_{a})\implies M_{ar_{2}}>M_{r_{2}}
\]

and likewise,

\[
c_{r_{2}}<Q(D_{a})\implies M_{ar_{2}}<M_{r_{2}}
\]

Finally, noting that $N^{*}=M_{ar_{1}}=M_{a}$, and When $c_{r_{2}}>Q(D_{a})$,
$M_{ar_{2}}-M_{a}>M_{r_{2}}-M_{a}$ and when $c_{r_{2}}<Q(D_{a})$,
$M_{a}-M_{ar_{2}}>M_{a}-M_{r_{2}}$ thus $\lvert M_{a}-M_{ar_{2}}\rvert>\lvert N^{*}-M_{r_{2}}\rvert$

Thus proving (i). 

Expanding to include $c_{r_{1}}$ under the assumptions of B4, $c_{r_{1}}<Q(D_{a})$,
so $0<\frac{c_{r_{1}}}{Q(D_{a})}<1$ and the inequality to evaluate
in this more general case is:

\[
\left(\frac{N_{a}+N_{r_{1}}+N_{r_{2}}}{N_{a}+N_{r_{2}}}\right)\frac{N_{a}Q(D_{a})+N_{r_{2}}c_{r_{2}}}{\left(N_{a}+\frac{c_{r_{1}}}{Q(D_{a})}N_{r_{1}}\right)Q(D_{a})+N_{r_{2}}c_{r_{2}}}>1
\]

Where $\frac{c_{r_{1}}}{Q(D_{a})}=1$ corresponds with the B3 assumptions.
Rearranging and solving for $\frac{c_{r_{2}}}{Q(D_{a})}$, %

\[
\frac{N_{a}+N_{r_{1}}+N_{r_{2}}}{N_{a}+N_{r_{2}}}N_{a}+\frac{N_{a}+N_{r_{1}}+N_{r_{2}}}{N_{a}+N_{r_{2}}}N_{r_{2}}\frac{c_{r_{2}}}{Q(D_{a})}>N_{a}+\frac{c_{r_{1}}}{Q(D_{a})}N_{r_{1}}+N_{r_{2}}\frac{c_{r_{2}}}{Q(D_{a})}
\]

Further rearranging yields,

\[
\frac{1}{N_{a}+N_{r_{2}}}N_{a}+\frac{1}{N_{a}+N_{r_{2}}}N_{r_{2}}\frac{c_{r_{2}}}{Q(D_{a})}>\frac{c_{r_{1}}}{Q(D_{a})}
\]

Thus, $M_{ar_{2}}>M_{r_{2}}$ when,

\begin{equation}
\frac{c_{r_{2}}}{Q(D_{a})}>\left(\frac{N_{a}}{N_{r_{2}}}+1\right)\frac{c_{r_{1}}}{Q(D_{a})}-\frac{N_{a}}{N_{r_{2}}}\label{eq:Mar2>Mr2-limit}
\end{equation}

Alternatively, 

\begin{equation}
\frac{c_{r_{1}}}{Q(D_{a})}<\frac{\frac{c_{r_{2}}}{Q(D_{a})}N_{r_{2}}+N_{a}}{N_{a}+N_{r_{2}}}\label{eq:Mar2>Mr2-limit-on-cr1}
\end{equation}

Thus it will be the case that $M_{ar_{2}}>M_{r_{2}}$ when the threshold
of either inequality \ref{eq:Mar2>Mr2-limit} or \ref{eq:Mar2>Mr2-limit-on-cr1}
holds. Thus proving (ii).
\end{proof}
Intuitively, while $M$ depends on $q$ the shift from $D_{a}$ to
$D^{*}$ changes the weights from the $N$ values ($N_{a}$ vs $N^{*})$
when adding the incremental point $r_{2}$ leading to differences
in $M_{ar_{2}}$ and $M_{r_{2}}$ even though $M_{ar_{1}}=M_{a}$.

Note in general $N_{r_{2}}$is much less than $N_{a}$ so the threshold
from inequality \ref{eq:Mar2>Mr2-limit-on-cr1} is likely to hold
in most cases. For example, consider $N_{r_{2}}\in[0,N_{a}]$, generally
the practical range for $N_{r_{2}}$. In this range, the limit with
$\frac{c_{r_{2}}}{Q(D_{a})}=0$ ranges from $[1,0.5]$ and with $\frac{c_{r_{2}}}{Q(D_{a})}=0.5$
the range is $[1,0.75]$.%

\section*{Sole Producer Value Function}

The sole producer function $S(D)=E(T|D)min(M(D),N(D))$ corresponds
to the media source value function $V(D)$ for $N(D')>M(D'$ and is
simply $E(T|D)N(D)$ for $N(D')\leq M(D')$. It will be of interest
to know when a step from $D$ to $D'$ is a positive change, $\Delta_{S}>0$,
or a negative, $\Delta_{S}<0$, for the sole producer. This is analyzed
for cases where $N(D')\leq M(D')$ and where $N(D')>M(D')$.

\subsubsection*{The case when $N(D')\protect\leq M(D')$}

Noting that $E(T|D')=E(T|D)\frac{N}{N'}+T(p)\frac{N'-N}{N'}$, the
change in $S(D)$ for $N(D)<N^{*}$ is: 

\begin{equation}
\begin{aligned}\Delta_{S}=S(D')-S(D)= & E(T|D')N'-E(T|D)N=\\
 & \left(E(T|D)\frac{N}{N'}+T(p)\frac{N'-N}{N'}\right)N'-E(T|D)N=T(p)(N'-N)
\end{aligned}
\label{eq:SP-delta}
\end{equation}

Thus $\Delta_{S}>0$ in this case when $N$ is increasing provided
that $T(p)>0$.%

For the more general case where the change in the distribution involves
multiple points, 

\begin{equation}
\Delta_{S}=H(T|Y)(N'-N)\label{eq:SP-Delta-Region}
\end{equation}

This more general case shows that as long as the net producer value
over the distribution $Y$ is positive, that is $H(T|Y)>0$, then
$\Delta_{S}>0$ and there is a net gain to the sole producer even
if some of the points in the distribution $Y$ have $p<0$.%

\begin{prop}
\label{prop:+DelS-before-D*}For a shift from $D$ to $D'$ where
$N(D')\leq M(D')$ and $D'=D+Y$ , $\Delta_{S}>0$ implies that $H(T|Y)>0$. 
\end{prop}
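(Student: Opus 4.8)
The plan is to reduce everything to the identity already derived just above the statement, namely equation \ref{eq:SP-Delta-Region}, $\Delta_S = H(T|Y)(N'-N)$, and then simply read off the sign. The only preliminary fact that needs isolating is that both endpoints of the shift lie in the regime where the sole-producer value collapses to the simple product form, so that this identity is legitimately available.

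First I would record that the hypothesis $N(D') \leq M(D')$, together with the fact that the shift takes place before $D^*$ (so $N(D) \leq M(D)$ as well), lets me replace $\min(M,N)$ by $N$ in the definition $S(\cdot) = E(T|\cdot)\min(M(\cdot),N(\cdot))$. Hence $S(D') = E(T|D')N'$ and $S(D) = E(T|D)N$, giving $\Delta_S = E(T|D')N' - E(T|D)N$. Next I would substitute the additive decomposition of the producer unit value under a distribution shift, equation \ref{eq:ShiftInProducerUnitValue}, $E(T|D') = E(T|D)\frac{N}{N'} + H(T|Y)\frac{N'-N}{N'}$. Multiplying through by $N'$ and subtracting $E(T|D)N$ makes the $E(T|D)N$ contributions cancel, leaving precisely $\Delta_S = H(T|Y)(N'-N)$, which is \ref{eq:SP-Delta-Region}.

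Finally, because the shift $D' = D + Y$ is expansive, the increment carries strictly positive volume, so $N' - N = N_Y > 0$. With this second factor strictly positive, the product $H(T|Y)(N'-N)$ is positive exactly when $H(T|Y) > 0$; in particular $\Delta_S > 0$ forces $H(T|Y) > 0$, which is the asserted implication. I would note in passing that this same reasoning yields the converse too, which is why Lemma \ref{lem:for-DeltaS-Below-D*} can upgrade the implication to an equivalence.

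The only point requiring care — and the closest thing to an obstacle — is guaranteeing $N' - N > 0$ strictly rather than merely $N' - N \geq 0$: if a degenerate increment with $N_Y = 0$ were permitted, then $\Delta_S = 0$ regardless of $H(T|Y)$ and the implication would be vacuous or ill-signed. This is handled by the standing convention that the sequence is expansive with strictly increasing volume, so $Y$ necessarily carries positive mass and the sign conclusion is clean.
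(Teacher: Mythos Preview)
Your proof is correct and follows exactly the approach the paper takes: the paper's proof simply says the result follows from equations \eqref{eq:SP-delta} and \eqref{eq:SP-Delta-Region} and their derivations, which is precisely the identity $\Delta_S = H(T|Y)(N'-N)$ combined with $N'-N>0$ that you spell out. Your added care about ensuring both $D$ and $D'$ lie in the $N\le M$ regime and that the expansive convention gives $N_Y>0$ makes explicit what the paper leaves to context, but the underlying argument is identical.
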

\begin{proof}
This follows from equations \ref{eq:SP-delta} and \ref{eq:SP-Delta-Region}
and their derivations above. 
\end{proof}

\subsubsection*{The case when \textmd{$N(D')>M(D')$ }}

The condition needed for $\Delta_{S}>0$ is,

\[
\Delta_{S}=\Delta_{V}=E(T|D)N(D)\left(\frac{M(D')}{N(D')}-\frac{M(D)}{N(D)}\right)+T(p)\left(N(D')-N(D)\right)\frac{M(D')}{N(D')}>0
\]

Rearranging,%

\begin{equation}
\left(\frac{N(D)}{N(D')}+\frac{T(p)}{E(T|D)}\frac{N(D')-N(D)}{N(D')}\right)\frac{M(D')}{M(D)}>1\label{eq:ConditionForS+}
\end{equation}

Since $\frac{N(D)}{N(D')}$ and $\frac{N(D')-N(D)}{N(D')}$ are normalized
weights summing to $1$, this term: 

\[
\left(\frac{N(D)}{N(D')}+\frac{T(p)}{E(T|D)}\frac{N(D')-N(D)}{N(D')}\right)
\]
 will be $<1$ when $\frac{T(p)}{E(T|D)}<1$. This suggests a simple
$S.C.$ for $\Delta_{S}<0$ when $N(D')\geq N^{*}$.%

\begin{prop}
\label{prop:A-SC-for-No-Gain} When \textup{$N(D')>M(D')$, a $S.C.$
for} $\Delta_{S}<0$ from adding an incremental point $r=\{c,p\}$
such that $D'=D+I_{r}$ is: 
\begin{equation}
\frac{T(p)}{E(T|D)}\leq1\text{ and }\frac{M(D')}{M(D)}\leq1\label{eq:SimpleSCondForS-}
\end{equation}

Conversely a $N.C.$ for $\Delta_{S}>0$ is:

\begin{equation}
\frac{T(p)}{E(T|D)}>1\text{ or }\frac{M(D')}{M(D)}>1
\end{equation}
\end{prop}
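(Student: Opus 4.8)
The plan is to obtain both halves of the proposition as an immediate reading of the condition for $\Delta_{S}>0$ already recorded in inequality \ref{eq:ConditionForS+}; no genuinely new computation is needed, only a convexity observation about its left-hand side. Since we add a single incremental point $I_{r}$ of positive volume, $N(D')>N(D)>0$, so the coefficients $\frac{N(D)}{N(D')}$ and $\frac{N(D')-N(D)}{N(D')}$ are nonnegative and sum to $1$. Thus the first factor in \ref{eq:ConditionForS+}, namely $\frac{N(D)}{N(D')}\cdot 1+\frac{N(D')-N(D)}{N(D')}\cdot\frac{T(p)}{E(T|D)}$, is a convex combination of $1$ and $\frac{T(p)}{E(T|D)}$.

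First I would establish the sufficient condition. Suppose $\frac{T(p)}{E(T|D)}\le 1$. Then the convex combination $a$ above satisfies $a\le 1$, being an average of $1$ and something no larger than $1$. Writing $b=\frac{M(D')}{M(D)}$ and using $M>0$, the factor $b$ is positive; combined with $b\le 1$ this forces $ab\le 1$ in every case: if $a\le 0$ then $ab\le 0<1$, while if $0<a\le 1$ then $ab\le a\le 1$. Hence the left-hand side of \ref{eq:ConditionForS+} is at most $1$, the strict inequality it demands for $\Delta_{S}>0$ fails, and $\Delta_{S}\le 0$.

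The necessary condition is then the contrapositive of the same step: denying ``$\frac{T(p)}{E(T|D)}>1$ or $\frac{M(D')}{M(D)}>1$'' is exactly the pair $\frac{T(p)}{E(T|D)}\le 1$ and $\frac{M(D')}{M(D)}\le 1$, which by the previous paragraph makes \ref{eq:ConditionForS+} fail and so precludes $\Delta_{S}>0$. Therefore any increment with $\Delta_{S}>0$ must violate at least one of the two bounds, which is the stated necessary condition.

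The one place requiring care---and the main obstacle---is strictness at the boundary rather than the algebra. Because \ref{eq:ConditionForS+} characterizes $\Delta_{S}>0$ by its left side exceeding $1$, and because $ab=1$ occurs precisely when $a=1$ and $b=1$, the double equality $\frac{T(p)}{E(T|D)}=1$ together with $\frac{M(D')}{M(D)}=1$ yields $\Delta_{S}=0$, not $\Delta_{S}<0$. I would handle this by recording that the sufficient condition delivers $\Delta_{S}\le 0$ in general, with the strict conclusion $\Delta_{S}<0$ holding whenever at least one of the two ratios is strictly below $1$; the necessary-condition half, stated with $\le$, is exact and needs no caveat. Implicit throughout are $E(T|D)>0$ and $M(D)>0$, which are exactly what was used to pass from $\Delta_{S}=\Delta_{V}$ to \ref{eq:ConditionForS+}, so I would simply inherit that assumption rather than re-derive it.
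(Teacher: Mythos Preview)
Your argument is essentially the paper's own: read the condition \eqref{eq:ConditionForS+} and note that the first factor is a convex combination of $1$ and $T(p)/E(T|D)$, hence at most $1$ when the latter is, so the product with $M(D')/M(D)\le 1$ cannot exceed $1$ and the strict inequality for $\Delta_S>0$ fails; the necessary condition is then the contrapositive. You are in fact more careful than the paper about the boundary, correctly flagging that equality in both ratios yields $\Delta_S=0$ rather than a strict negative---the paper's surrounding text silently uses strict inequalities in its derivation while the proposition as stated uses~$\le$.
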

\begin{proof}
See the derivations above in this subsection.
\end{proof}
Alternatively, the condition on $\frac{T(p)}{E(T|D)}$ for $\Delta_{S}>0$
is,

\begin{equation}
\frac{T(p)}{E(T|D)}>\left(\frac{M(D)}{M(D')}-\frac{N(D)}{N(D')}\right)\frac{N(D')}{N(D')-N(D)}\label{eq:T(p)CondForS+}
\end{equation}

Consider the case where $D=D^{*}$ so $M(D)=N(D)$. Substituting this
into the $R.H.S.$ of equation \ref{eq:T(p)CondForS+} and rearranging%
{} yields:

\[
\left(\frac{M(D^{*})}{M(D')}\right)\frac{N(D')-M(D')}{N(D')-M(D^{*})}>1
\]

When $M(D)>M(D')$ the $>$ inequality holds because, $N(D')-M(D')>N(D')-M(D)$
$\iff$ $M(D)>M(D')$. Thus $M(D^{*})>M(D')$ implies the condition
$\frac{T(p)}{E(T|D^{*})}>1$ for $\Delta_{S}<0$.
\begin{prop}
\label{prop:A-SC-For-Gain}At $D^{*}$ where $M(D^{*})=N(D^{*})$
when adding an incremental point $r=\{c,p\}$ such that $D'=D^{*}+I_{r}$,
a $S.C.$ for $\Delta_{S}>0$ is:

\[
\frac{T(p)}{E(T|D^{*})}>\left(\frac{M(D^{*})}{M(D')}-\frac{N(D^{*})}{N(D')}\right)\frac{N(D')}{N(D')-N(D^{*})}=\left(\frac{M(D^{*})}{M(D')}\right)\frac{N(D')-M(D')}{N(D')-M(D^{*})}
\]

and the $N.C.$ for $\left(\frac{M(D^{*})}{M(D')}-\frac{N(D^{*})}{N(D')}\right)\frac{N(D')}{N(D')-N(D^{*})}>1\text{ is for }M(D')<M(D^{*})$. 

Alternatively, $M(D')>M(D^{*})$ $\implies$ $\frac{T(p)}{E(T|D^{*})}<1$
and a $S.C.$ for $\Delta_{S}>0$ as,
\begin{equation}
\frac{M(D')}{M(D^{*})}>\frac{1}{\left(\frac{N(D^{*})}{N(D')}+\frac{T(p)}{E(T|D^{*})}\frac{N(D')-N(D^{*})}{N(D')}\right)}>1\label{eq:M+CondForS+}
\end{equation}
\end{prop}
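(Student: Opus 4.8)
The plan is to specialize the general condition for $\Delta_{S}>0$ in the regime $N(D')>M(D')$, which is already in hand. In that regime $\Delta_{S}=\Delta_{V}$, and rearranging equation \ref{eq:MediaSourceIncrementalValueMapping} produced both the weighted form \ref{eq:ConditionForS+} and its solved-for-$T(p)$ form \ref{eq:T(p)CondForS+}. First I would set $D=D^{*}$ so that $M(D^{*})=N(D^{*})$ and substitute directly into \ref{eq:T(p)CondForS+}; this immediately yields the stated sufficient condition on $T(p)/E(T|D^{*})$, so the first assertion is essentially a substitution into a prior result.

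Next I would verify the claimed identity between the two forms of the threshold. Writing $a=M(D^{*})=N(D^{*})$, $M'=M(D')$, and $N'=N(D')$, the left form is $a\!\left(\tfrac{1}{M'}-\tfrac{1}{N'}\right)\tfrac{N'}{N'-a}$; factoring out $1/M'$ and cancelling one factor of $N'$ reduces it to $\tfrac{a}{M'}\cdot\tfrac{N'-M'}{N'-a}$, which is exactly the right-hand form once $M(D^{*})=a$ is recognized in the denominator $N'-M(D^{*})$. This is a routine cancellation and is the least delicate step.

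For the necessary-condition claim I would clear denominators. Since the increment is expansive, $N'-N(D^{*})=N'-a>0$, so multiplying the threshold inequality $\tfrac{a}{M'}\cdot\tfrac{N'-M'}{N'-a}>1$ through by the positive quantity $M'(N'-a)$ preserves direction; the cross-multiplied inequality collapses (the $aM'$ terms cancel) to $aN'>M'N'$, i.e. $a>M'$, giving $M(D')<M(D^{*})$. This argument is reversible, so the condition is in fact an equivalence, of which the stated necessity is one half. The alternative direction, $M(D')>M(D^{*})\implies T(p)/E(T|D^{*})<1$, I would obtain directly by invoking Lemma \ref{lem:LimitsOnTp-M'-afterD*}, which already establishes $T(p_{r_{2}})/E(T|D^{*})>1\iff M(D')/M(D^{*})<1$.

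Finally, to reach \ref{eq:M+CondForS+} I would rearrange \ref{eq:ConditionForS+} with $D=D^{*}$ into the statement that $M(D')/M(D^{*})$ exceeds the reciprocal of the bracketed weight. The remaining point, that this reciprocal is itself strictly greater than $1$, follows because the bracket $\tfrac{N(D^{*})}{N(D')}+\tfrac{T(p)}{E(T|D^{*})}\tfrac{N'-N(D^{*})}{N'}$ is a convex combination of $1$ and $T(p)/E(T|D^{*})$, and under $M(D')>M(D^{*})$ the latter is below $1$ by Lemma \ref{lem:LimitsOnTp-M'-afterD*}, so the bracket is strictly below $1$ and its reciprocal strictly above. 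I expect the main obstacle to be purely bookkeeping: keeping the inequality directions straight through the denominator-clearing (which hinges on $N'-N(D^{*})>0$ for an expansive step) and correctly chaining the convex-combination bound with Lemma \ref{lem:LimitsOnTp-M'-afterD*} so that the two-sided chain in \ref{eq:M+CondForS+} reads coherently.
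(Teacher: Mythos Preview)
Your proposal is correct and follows essentially the same route as the paper: the paper's proof simply says ``See the derivations above in this subsection,'' and those derivations are precisely the substitution of $D=D^{*}$ into \eqref{eq:T(p)CondForS+} and \eqref{eq:ConditionForS+}, the cross-multiplication showing the threshold exceeds $1$ iff $M(D')<M(D^{*})$, and the weighted-average observation that the bracket is below $1$ when $T(p)/E(T|D^{*})<1$. Your invocation of Lemma~\ref{lem:LimitsOnTp-M'-afterD*} for the implication $M(D')>M(D^{*})\Rightarrow T(p)/E(T|D^{*})<1$ is exactly what the paper relies on (it appears in the appendix as Proposition~\ref{prop:For-a-point-added-after-D*}).
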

\begin{proof}
See the derivations above in this subsection.
\end{proof}
The proposition provides thresholds at $D^{*}$ for increased sole
producer value, $\Delta_{S}>0$ beyond $D^{*}$. The threshold on
$\frac{T(p)}{E(T|D^{*})}$ from equation \ref{eq:T(p)CondForS+} is
$>1$. The alternative threshold on $\frac{M(D')}{M(D^{*})}$ from
equation \ref{eq:M+CondForS+} is also $>1$.

Proposition \ref{prop:A-SC-For-Gain} partially reinforces the conclusion
from Proposition \ref{prop:For-a-point-added-after-D*} that $\frac{T(p_{r_{2}})}{E(T|D^{*})}>1$
implies $N^{*}>M(D')$. However, here the condition relates to a constraint
on $\frac{T(p)}{E(T|D^{*})}$ for a positive $\Delta_{S}$ while in
Proposition \ref{prop:For-a-point-added-after-D*} the conditions
relate to the order that the points are added. As it turns out combining
the constraints on order and those on $\Delta_{S}>0$ provides for
an interesting conclusion. 

\section*{Consideration of Prior Order}

In this section the order of which points would be added to the sequence
of distributions is considered, and in particular if the point $r_{2}$
in Sequence Scenarios B3 and B4 would be a viable point to evaluate
or would it be a point that would have optimally already been added
in the sequence prior to $D^{*}$. In view of proposition \ref{prop:For-a-point-added-after-D*}
the two cases that need to be considered are when $M(D')<M(D^{*})\text{ and }T(p_{r_{2}})>E(T|D^{*})$
and when $M(D')>M(D^{*})\text{ and }T(p_{r_{2}})<E(T|D^{*})$.

\subsection*{\emph{The case where $M(D')<M(D^{*})\text{ and }T(p_{r_{2}})>E(T|D^{*})$}}

Rearranging equation \ref{eq:ConditionForS+} and presenting in $N^{*}$
and $N_{r_{2}}$ notation:

\[
\left(\frac{N^{*}}{N^{*}+N_{r_{2}}}+\frac{T(p_{2})}{E(T|D^{*})}\frac{N_{r_{2}}}{N^{*}+N_{r_{2}}}\right)\frac{M_{r_{2}}}{N^{*}}>1
\]

Utilizing the revised notation for equation \ref{eq:T(p)CondForS+},%

\begin{equation}
F_{low}(N_{r_{2}})=\frac{N^{*}}{N_{r_{2}}}\left(\frac{N^{*}}{M_{r_{2}}}-1\right)+\frac{N^{*}}{M_{r_{2}}}<\frac{T(p_{2})}{E(T|D^{*})}\label{eq:Tp-Threshold}
\end{equation}

Where $F_{low}(N_{r_{2}})$ is the lower bounds on $\frac{T(p_{2})}{E(T|D^{*})}$
for $\Delta_{S}>0$ as a function of $N_{r_{2}}$. There are two effects
from an increase in $N_{r_{2}}$. First there is the impact directly
from $\frac{N^{*}}{N_{r_{2}}}$ where the increase in $N_{r_{2}}$
lowers $\frac{N^{*}}{N_{r_{2}}}$ and hence $F_{low}$. The second
increases $F_{low}$ via the impact on $M_{r_{2}}(N_{r_{2}})$, where
$\frac{N^{*}}{M_{r_{2}}}$ is increasing at a decreasing rate starting
with $\frac{N^{*}}{M_{r_{2}}}=1$ and ultimately converging to an
asymptote where $\frac{N^{*}}{M_{r_{2}}(c_{r_{2}})}>1$. Note from
Proposition \ref{prop:For-a-point-added-after-D*} it cannot be that
both $\frac{T(p_{2})}{E(T|D^{*})}>1\text{ and }\frac{N^{*}}{M_{r_{2}}}<1$.

Next consider under what cases $r_{2}$ would have been optimally
added to a distribution in the sequence $\Theta$ prior to $D^{*}$
as in Sequence Scenario B2. Substituting the points $r_{1}$ and $r_{2}$
into equation \ref{eq:OrderCond-r1-r2}, and noting that $N_{a}=N^{*}-N_{r_{1}}$
and $M_{ar_{1}}=M(D^{*})=N^{*}$, yields: %

\begin{equation}
\frac{T(p_{2})}{E(T|D_{a})}<(N^{*}-N_{r_{1}}+N_{r_{2}})\left[\frac{N^{*}-N_{r_{1}}}{N^{*}}+\frac{T(p_{1})}{E(T|D_{a})}\frac{N_{r_{1}}}{N^{*}}\right]\frac{N^{*}}{M_{ar_{2}}}-\frac{N^{*}-N_{r_{1}}}{N_{r_{2}}}=F_{up}(N_{r_{2}})\label{eq:rev-OrderCond-r1-r2}
\end{equation}

Where $F_{up}(N_{r_{2}})$ is the upper bounds for $\frac{T(p_{2})}{E(T|D)}$
to be viable.%

As shown in the following Proposition the lower bounds $F_{low}$
exceeds the upper bounds $F_{up}$ over the practical range for $N_{r_{2}}$.
\begin{prop}
\label{Prop:F_low-gt-F_up}\textbf{$F_{low}(N_{r_{2}})>F_{up}(N_{r_{2}})$}
for at least $N_{r_{2}}$ and $N_{r_{1}}$ such that $0<N_{r_{2}}<1$
and $0<N_{r_{1}}$, and 
\[
\frac{\left(1+\frac{c_{r_{2}}}{Q(D_{a})}N_{r_{2}}\right)\left(1-N_{r_{2}}\right)}{1-N_{r_{2}}\left(1-\frac{c_{r_{1}}}{Q(D_{a})}\right)}>N_{r_{1}}
\]
\end{prop}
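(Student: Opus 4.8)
The plan is to treat this as a purely algebraic comparison of the two explicit thresholds, carried out after normalizing $N^{*}=N(D^{*})=1$ so that $N_{a}=1-N_{r_{1}}$ and $M(D^{*})=1$. First I would substitute the definition of $F_{low}$ from equation \ref{eq:Tp-Threshold} and of $F_{up}$ from equation \ref{eq:rev-OrderCond-r1-r2}, clear the common factor $N_{r_{2}}>0$, and collect terms. The inequality $F_{low}>F_{up}$ then rearranges to
\[
\frac{1+N_{r_{2}}}{M_{r_{2}}}-\beta\,\frac{N_{r_{2}}\left(1-N_{r_{1}}+N_{r_{2}}\right)}{M_{ar_{2}}}>N_{r_{1}},
\]
where $\beta=(1-N_{r_{1}})+\frac{T(p_{1})}{E(T|D_{a})}N_{r_{1}}$ is the producer weight attached to $r_{1}$. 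Under Sequence Scenario B4 one has $\beta<1$, and since $\beta$ multiplies a subtracted positive term, replacing it by its Scenario-B3 value $\beta=1$ only tightens the requirement; so it suffices to prove the B3 form, which also aligns $E(T|D_{a})$ with $E(T|D^{*})$ so that the two thresholds may be compared on a common scale.

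Next I would bring in the functional form $M(q)=\zeta q^{\alpha}$ with $\zeta$ pinned by $M(D^{*})=1$, so that $M_{r_{2}}$ and $M_{ar_{2}}$ become $\alpha$-powers of explicit $q$-ratios in the variables $N_{r_{1}},N_{r_{2}}$, $\gamma=c_{r_{2}}/Q(D_{a})$ and $\delta=c_{r_{1}}/Q(D_{a})$. Before bounding I would invoke Proposition \ref{prop:For-a-point-added-after-D*}: in the case at hand $T(p_{r_{2}})>E(T|D^{*})$ forces $M_{r_{2}}<N^{*}=1$, hence $Q(D')<Q(D^{*})$ and $\gamma<1$; then Proposition \ref{prop:The-functions-Mr2-Mar2}(i) supplies the ordering $M_{ar_{2}}<M_{r_{2}}<1$ on which the rest of the argument rests. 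With $\gamma<1$ the bases of both power expressions lie below $1$, so the elementary bound $x^{\alpha}>x$ for $0<\alpha<1$ converts each $M$-term into a sign-definite rational expression; reducing the resulting inequality should collapse it onto the stated sufficient condition, whose denominator $1-N_{r_{2}}(1-\delta)$ records exactly the Scenario-B4 shift of $Q(D^{*})$ relative to $Q(D_{a})$.

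The main obstacle is the bounding step itself, because the two $M$-terms must be controlled in opposite one-sided directions: the subtracted term carries $1/M_{ar_{2}}$, the largest quantity in the expression, so any careless bound on it overwhelms the leading term and the inequality is lost. The delicate point is that the convexity bound $x^{\alpha}>x$ controls $M_{ar_{2}}$ and $M_{r_{2}}$ from below, which is the useful direction for the subtracted $1/M_{ar_{2}}$ term but the wrong direction for the leading $1/M_{r_{2}}$ term. I therefore expect to rewrite the leading term through the ratio $M_{ar_{2}}/M_{r_{2}}$, whose base I can compute in closed form under B3 and which is again a sign-definite power of a ratio below $1$, factor out $1/M_{ar_{2}}$, and only then apply the single-sided bound so that both pieces point consistently. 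Getting this rearrangement to land precisely on the rational expression in the statement, rather than on a looser surrogate, is where the real work lies.

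Finally I would sanity-check the reduced condition against the extreme values flagged after Lemma \ref{lem:For+TpNoPosDeltaS-afterD*}: letting $\gamma\to0$ and $\delta\to1$ collapses the left side to $1-N_{r_{2}}$, recovering $1-N_{r_{2}}>N_{r_{1}}$, which confirms the algebra has been carried through correctly and that the admissible region is non-empty for small $N_{r_{1}}$.
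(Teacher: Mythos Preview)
Your overall plan—normalize to $N^{*}=1$, set up the algebraic comparison, tighten $\beta$ to $1$, and use the convexity bound $x^{\alpha}>x$ on the ratio $m=M_{ar_{2}}/M_{r_{2}}$—tracks the paper's proof closely, and your diagnosis of the sign obstacle (the two $M$-terms needing opposite one-sided bounds) is exactly right. The gap is in the rearrangement you choose. Your additive form
\[
\frac{1+N_{r_{2}}}{M_{r_{2}}}-\beta\,\frac{N_{r_{2}}(1-N_{r_{1}}+N_{r_{2}})}{M_{ar_{2}}}>N_{r_{1}},
\]
after factoring out $1/M_{ar_{2}}$ and applying $m>\text{base}_m$ together with $M_{ar_{2}}<1$, yields a sufficient condition that is \emph{not comparable} with the one in the statement: for instance at $N_{r_{1}}=0.9$, $N_{r_{2}}=0.05$, $\gamma=0$, $\delta=1$ the stated inequality holds ($0.95>0.9$) but your reduced inequality fails. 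So your route, completed in the natural way, proves a different proposition, and the difficulty you flag in the last paragraph is not merely algebraic tidying but a genuine mismatch.

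The paper instead multiplies through by $M_{ar_{2}}$ early and collects into the multiplicative form
\[
\frac{1-M_{r_{2}}N_{r_{1}}+N_{r_{2}}}{(1-N_{r_{1}}+N_{r_{2}})\,N_{r_{2}}}\cdot m\;>\;\beta.
\]
This isolates $m$ as a single factor. The scalar coefficient is shown to exceed $1$ directly (using $M_{r_{2}}<1$ and $N_{r_{2}}<1$), and $\beta\le1$ bounds the right side. Then $m>\text{base}_m$, followed by a second factorization of $\text{base}_m$ into two ratios—one exceeding $1$ by inspection, the other producing exactly the stated rational inequality in $\gamma,\delta,N_{r_{1}},N_{r_{2}}$—closes the argument. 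The point is that the slack introduced by the paper's three crude bounds (coefficient $>1$, $\beta\le1$, residual right-hand factor $<1$ via $M_{r_{2}}<1$) happens to align with the stated condition; the slack from your bound $M_{ar_{2}}<1$ does not. A smaller remark: tightening $\beta$ to $1$ handles the $T(p_{1})$-dependence but does not force $\delta=1$, so invoking Proposition~\ref{prop:The-functions-Mr2-Mar2}(i) for the ordering $M_{ar_{2}}<M_{r_{2}}$ is only warranted under full B3; the paper keeps $\delta$ general and simply disposes of the $m>1$ case separately (it is immediate there).
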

\begin{proof}
Assume Sequence Scenario B2 and show by algebraic derivation:

\[
F_{low}(N_{r_{2}})=\frac{N^{*}}{N_{r_{2}}}\left(\frac{N^{*}}{M_{r_{2}}}-1\right)+\frac{N^{*}}{M_{r_{2}}}>(N^{*}-N_{r_{1}}+N_{r_{2}})\left[\frac{N^{*}-N_{r_{1}}}{N^{*}}+\frac{T(p_{1})}{E(T|D_{a})}\frac{N_{r_{1}}}{N^{*}}\right]\frac{N^{*}}{M_{ar_{2}}}-\frac{N^{*}-N_{r_{1}}}{N_{r_{2}}}=F_{up}(N_{r_{2}})
\]

Remove $\frac{N^{*}}{N_{r_{2}}}$ on both sides

\[
\frac{N^{*}}{N_{r_{2}}}\frac{N^{*}}{M_{r_{2}}}+\frac{N^{*}}{M_{r_{2}}}>(N^{*}-N_{r_{1}}+N_{r_{2}})\left[\frac{N^{*}-N_{r_{1}}}{N^{*}}+\frac{T(p_{1})}{E(T|D_{a})}\frac{N_{r_{1}}}{N^{*}}\right]\frac{N^{*}}{M_{ar_{2}}}+\frac{N_{r_{1}}}{N_{r_{2}}}
\]

Set $N^{*}=1$ as the base unit so all values of $N$ and $M$ are
relative to $N^{*}$, then reduce,%

\[
\frac{1-M_{r_{2}}N_{r_{1}}+N_{r_{2}}}{(1-N_{r_{1}}+N_{r_{2}})N_{r_{2}}}\frac{M_{ar_{2}}}{M_{r_{2}}}>1-N_{r_{1}}+\frac{T(p_{1})}{E(T|D_{a})}N_{r_{1}}
\]

Analyzing if $\frac{1-M_{r_{2}}N_{r_{1}}+N_{r_{2}}}{(1-N_{r_{1}}+N_{r_{2}})N_{r_{2}}}>1$
yields,%

\[
1-M_{r_{2}}N_{r_{1}}>(N_{r_{2}}-N_{r_{1}})N_{r_{2}}
\]

The inequality holds at least for all $N_{r_{2}}<1$. To confirm note
that relative to $N^{*}=1$, $1>N_{r_{1}}$ and $1>M_{r_{2}}$(from
Proposition \ref{prop:For-a-point-added-after-D*} and $\frac{T(p_{2})}{E(T|D^{*})}>1$)
implying $-M_{r_{2}}N_{r_{1}}\geq-N_{r_{1}}$ so for $N_{r_{2}}-N_{r_{1}}\geq0$
$1-M_{r_{2}}N_{r_{1}}>(N_{r_{2}}-N_{r_{1}})\geq(N_{r_{2}}-N_{r_{1}})N_{r_{2}}$
and if $N_{r_{2}}-N_{r_{1}}<0$ the inequality also holds. 

Assuming under Sequence Scenario B3 or B4 $\frac{T(p_{1})}{E(T|D_{a})}\leq1$
thus $1-N_{r_{1}}+\frac{T(p_{1})}{E(T|D_{a})}N_{r_{1}}\leq1$. Thus
if $\frac{M_{ar_{2}}}{M_{r_{2}}}>1$ per conditions of Proposition
\ref{prop:The-functions-Mr2-Mar2} the inequality holds and proving
that $F_{low}(N_{r_{2}})>F_{up}(N_{r_{2}})$ for those conditions. 

In addition, even if $\frac{M_{ar_{2}}}{M_{r_{2}}}<1$, implying that
that $c_{r_{2}}<Q(D_{a})$, assuming $M$ is at least approximated
by $M(q)=\zeta q^{\alpha}$ with $0<\alpha<1$, it is still possible
to show that $F_{low}(N_{r_{2}})>F_{up}(N_{r_{2}})$ by showing the
inequality from substituting $N_{a}+N_{r_{1}}=N^{*}=1$ into \ref{eq:Mar2-Mr2}
holds, 

\[
1>\frac{M_{ar_{2}}}{M_{r_{2}}}=\left(\left(\frac{1+N_{r_{2}}}{1-N_{r_{1}}+N_{r_{2}}}\right)\frac{1-N_{r_{1}}+N_{r_{2}}\frac{c_{r_{2}}}{Q(D_{a})}}{\left(1-N_{r_{1}}+\frac{c_{r_{1}}}{Q(D_{a})}N_{r_{1}}\right)+N_{r_{2}}\frac{c_{r_{2}}}{Q(D_{a})}}\right)^{\alpha}>\frac{1+N_{r_{1}}\left(\frac{T(p_{1})}{E(T|D_{a})}-1\right)}{\frac{1-M_{r_{2}}N_{r_{1}}+N_{r_{2}}}{(1-N_{r_{1}}+N_{r_{2}})N_{r_{2}}}}
\]

Considering that $\frac{M_{ar_{2}}}{M_{r_{2}}}<1$ and $0<\alpha<1$
a more restrictive inequality is:

\begin{align*}
1>\left(\left(\frac{1+N_{r_{2}}}{1-N_{r_{1}}+N_{r_{2}}}\right)\frac{1-N_{r_{1}}+N_{r_{2}}\frac{c_{r_{2}}}{Q(D_{a})}}{\left(1-N_{r_{1}}+\frac{c_{r_{1}}}{Q(D_{a})}N_{r_{1}}\right)+N_{r_{2}}\frac{c_{r_{2}}}{Q(D_{a})}}\right)^{\alpha} & >\\
\left(\left(\frac{1+N_{r_{2}}}{1-N_{r_{1}}+N_{r_{2}}}\right)\frac{1-N_{r_{1}}+N_{r_{2}}\frac{c_{r_{2}}}{Q(D_{a})}}{\left(1-N_{r_{1}}+\frac{c_{r_{1}}}{Q(D_{a})}N_{r_{1}}\right)+N_{r_{2}}\frac{c_{r_{2}}}{Q(D_{a})}}\right) & >\left(\frac{1+N_{r_{1}}\left(\frac{T(p_{1})}{E(T|D_{a})}-1\right)}{\frac{1-M_{r_{2}}N_{r_{1}}+N_{r_{2}}}{(1-N_{r_{1}}+N_{r_{2}})N_{r_{2}}}}\right)
\end{align*}

Thus this more restricted inequality would be a $S.C.$ :

\begin{multline}
1>\left(\frac{1+N_{r_{2}}}{1-N_{r_{1}}+N_{r_{2}}}\right)\frac{1-N_{r_{1}}+N_{r_{2}}\frac{c_{r_{2}}}{Q(D_{a})}}{1-N_{r_{1}}+\frac{c_{r_{1}}}{Q(D_{a})}N_{r_{1}}+N_{r_{2}}\frac{c_{r_{2}}}{Q(D_{a})}}>\\
\left(1+N_{r_{1}}\left(\frac{T(p_{1})}{E(T|D_{a})}-1\right)\right)\frac{1-N_{r_{1}}+N_{r_{2}}}{1-M_{r_{2}}N_{r_{1}}+N_{r_{2}}}N_{r_{2}}\label{eq:SC-for-Flow-gt-Fup}
\end{multline}

Rearranging,

\[
\left[\frac{1+N_{r_{2}}}{1-N_{r_{1}}+N_{r_{2}}}\right]\left[\frac{1-N_{r_{1}}+N_{r_{2}}\frac{c_{r_{2}}}{Q(D_{a})}}{N_{r_{2}}\left(1-N_{r_{1}}+N_{r_{2}}\frac{c_{r_{2}}}{Q(D_{a})}+\frac{c_{r_{1}}}{Q(D_{a})}N_{r_{1}}\right)}\right]>\left[1+N_{r_{1}}\left(\frac{T(p_{1})}{E(T|D_{a})}-1\right)\right]\left[\frac{1-N_{r_{1}}+N_{r_{2}}}{1-M_{r_{2}}N_{r_{1}}+N_{r_{2}}}\right]
\]

The 2 terms on the $R.H.S.$of the inequality are $<1$ thus the $R.H.S<1$.
The first term on the $L.H.S.$ is $>1$. So if the second term on
the left is $>1$ so the proof is complete. 

Checking,

\[
x=\left(1-N_{r_{1}}+N_{r_{2}}\frac{c_{r_{2}}}{Q(D_{a})}\right)>N_{r_{2}}\left(1-N_{r_{1}}+N_{r_{2}}\frac{c_{r_{2}}}{Q(D_{a})}+\frac{c_{r_{1}}}{Q(D_{a})}N_{r_{1}}\right)
\]

Where $x$ is a temporary variable for notational convenience. Representing
the inequality using $x$ yields, 

\[
x>N_{r_{2}}x+N_{r_{2}}\frac{c_{r_{1}}}{Q(D_{a})}N_{r_{1}}
\]

\[
x\left(1-N_{r_{2}}\right)>N_{r_{1}}N_{r_{2}}\frac{c_{r_{1}}}{Q(D_{a})}
\]

Assume $\frac{c_{r_{2}}}{Q(D_{a})}<1$, then expand $x$ in the inequality:

\[
\left(1-N_{r_{1}}\right)\left(1-N_{r_{2}}\right)+\frac{c_{r_{2}}}{Q(D_{a})}N_{r_{2}}\left(1-N_{r_{2}}\right)>\frac{c_{r_{1}}}{Q(D_{a})}N_{r_{1}}N_{r_{2}}
\]

Simplified

\[
1+N_{r_{1}}N_{r_{2}}-N_{r_{1}}-N_{r_{2}}+\frac{c_{r_{2}}}{Q(D_{a})}N_{r_{2}}\left(1-N_{r_{2}}\right)>\frac{c_{r_{1}}}{Q(D_{a})}N_{r_{1}}N_{r_{2}}
\]

\[
1+N_{r_{1}}\left[N_{r_{2}}\left(1-\frac{c_{r_{1}}}{Q(D_{a})}\right)-1\right]-N_{r_{2}}+\frac{c_{r_{2}}}{Q(D_{a})}N_{r_{2}}\left(1-N_{r_{2}}\right)>0
\]

\[
1-N_{r_{2}}+\frac{c_{r_{2}}}{Q(D_{a})}N_{r_{2}}\left(1-N_{r_{2}}\right)>N_{r_{1}}\left[1-N_{r_{2}}\left(1-\frac{c_{r_{1}}}{Q(D_{a})}\right)\right]
\]

Thus a more general less restrictive $S.C.$ for $F_{low}(N_{r_{2}})>F_{up}(N_{r_{2}})$ 

\[
\frac{\left(1+\frac{c_{r_{2}}}{Q(D_{a})}N_{r_{2}}\right)\left(1-N_{r_{2}}\right)}{1-N_{r_{2}}\left(1-\frac{c_{r_{1}}}{Q(D_{a})}\right)}>N_{r_{1}}
\]

Completing the proof.
\end{proof}
For intuition, set $\frac{c_{r_{1}}}{Q(D_{a})}=1$ the highest value
it could have and set $\frac{c_{r_{2}}}{Q(D_{a})}=0$ as the reasonable
lowest value possible. With $\frac{c_{r_{1}}}{Q(D_{a})}=1$

\[
\left(1-N_{r_{2}}\right)\left(1+\frac{c_{r_{2}}}{Q(D_{a})}N_{r_{2}}\right)>N_{r_{1}}
\]

And with $\frac{c_{r_{2}}}{Q(D_{a})}=0$ ,

\[
1-N_{r_{2}}>N_{r_{1}}
\]

Thus an alternative simple $S.C.$ would be $N_{r_{2}}<\frac{1}{2}N^{*}$and
$N_{r_{1}}<\frac{1}{2}N^{*}$. 

 The graph of \textbf{$F_{low}(N_{r_{2}})\text{ vs }F_{up}(N_{r_{2}})$}
provides a further check that\textbf{ $F_{low}(N_{r_{2}})>F_{up}(N_{r_{2}})$}
at least in the range where $1>N_{r_{2}}+N_{r_{1}}$. The conclusion
that\textbf{ $F_{low}(N_{r_{2}})>F_{up}(N_{r_{2}})$} implies that
any $r_{2}$ that would have $\Delta_{S}>0$ would be preferred over
any $r_{1}$ where $1\geq\frac{T(p_{r_{1}})}{E(T|D_{a})}$, and there
must be at least one such point $r_{1}$ by the definition of $E(T|D_{a})$. 

\noindent \includegraphics[scale=0.75]{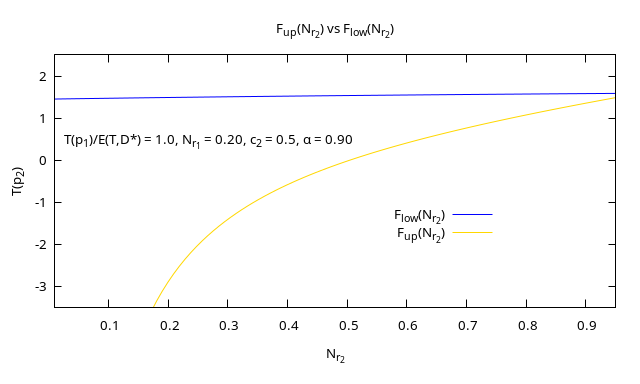}

\subsection*{\emph{The case where $M(D')>M(D^{*})\text{ and }T(p_{r_{2}})<E(T|D^{*})$}}

Let $\kappa$ represent the slope of $M$ over an interval between
two values of $N$. In this section where $M(D')>M(D^{*})$, $\kappa>0$
by definition. Further let $\kappa_{ar_{2}}$ be the slope for $M$
over the interval of $N_{a}$ to $N_{a}+N_{r_{2}}$, and assume that
$M_{a}=M_{ar_{1}}=N^{*}$ as in Sequence Scenario B3. Thus $M_{ar_{2}}$
is parameterized as, 
\begin{equation}
M_{ar_{2}}=N^{*}+\kappa_{ar2}N_{r_{2}}\label{eq:Mar2-parameterized}
\end{equation}
 Likewise for $M_{r_{2}}$, since $M_{ar_{1}}=N^{*}$,

\begin{equation}
M_{r_{2}}=N^{*}+\kappa_{r2}N_{r_{2}}\label{eq:Mr2-Parameterized}
\end{equation}

For the case where $0<\kappa<1$, $M(D')\leq N(D')$ and the conditions
for $\Delta_{S}>0$ can be determined by substituting $M(D^{*})=N(D^{*})=N^{*}$,
$N(D')=N^{*}+N_{r_{2}}$ and $M(D')=M_{r_{2}}$ into equation \ref{eq:M+CondForS+}
from Proposition \ref{prop:A-SC-For-Gain} to get,%

\begin{equation}
X_{l}=\frac{N^{*}+N_{r_{2}}}{N^{*}+\frac{T(p_{2})}{E(T|D^{*})}N_{r_{2}}}<\frac{M_{r_{2}}}{N^{*}}\label{eq:Xl-condition}
\end{equation}

For the case where $\kappa\geq1$, $M(D')\geq N(D')$ and as in Proposition
\ref{prop:+DelS-before-D*} the change in the sole producer value
is: 

\[
\Delta_{S}=E(T|D')N'-E(T|D^{*})N^{*}=E(T|D')\left(N^{*}+N_{r_{2}}\right)-E(T|D^{*})N^{*}
\]

Substituting $E(T|D')=E(T|D^{*})\frac{1}{1+N_{r_{2}}}+T(p_{2})\frac{N_{r_{2}}}{1+N_{r_{2}}}$
with $N^{*}=1$ and $N'=1+N_{r_{2}}$ yields,%

\begin{equation}
\Delta_{S}=T(p_{2})N_{r_{2}}\label{eq:DeltaS=00003DTp2Nr2}
\end{equation}

Thus, in this case, $\Delta_{S}>0$ when $T(p_{2})>0$.

When $\kappa>0$, from equation \ref{eq:OrderCond-r1-r2} the order
requirement for $r_{1}$ before $r_{2}$ is,

\[
\left[\frac{N_{a}}{N_{a}+N_{r_{1}}}+\frac{T(p_{1})}{E(T|D_{a})}\frac{N_{r_{1}}}{N_{a}+N_{r_{1}}}\right]M_{ar_{1}}>\left[\frac{N_{a}}{N_{a}+N_{r_{2}}}+\frac{T(p_{2})}{E(T|D_{a})}\frac{N_{r_{2}}}{N_{a}+N_{r_{2}}}\right]M_{ar_{2}}
\]

From the assumptions of Sequence Scenario B3, $M_{ar_{1}}=N^{*}=N_{a}+N_{r_{1}}$,
$E(T|D_{a})=E(T|D^{*})$%
, and $\frac{T(p_{r_{1}})}{E(T|D_{a})}=1$. Substituting those values
into the inequality above yields,

\begin{equation}
M_{ar_{2}}<X_{u}=\frac{N^{*}}{\frac{N^{*}-N_{r_{1}}}{N^{*}-N_{r_{1}}+N_{r_{2}}}+\frac{T(p_{2})}{E(T|D^{*})}\frac{N_{r_{2}}}{N^{*}-N_{r_{1}}+N_{r_{2}}}}\label{eq:Xu-condition}
\end{equation}

Since from%
{} Proposition \ref{prop:The-functions-Mr2-Mar2} $M_{r_{2}}<M_{ar_{2}}$,
when $0<\kappa<1$ and $M(D')\leq N(D')$, for there to be a viable
$\Delta_{S}>0$, this inequality chain would need to hold. 

\begin{equation}
1\leq X_{l}<M_{r_{2}}<M_{ar_{2}}<X_{u}\label{eq:viable-inequality-kappa-lt-1}
\end{equation}
If $X_{u}<X_{l}$ or $X_{u}<M_{ar_{2}}$ or $X_{l}>M_{r_{2}}$ the
inequality chain is broken, and there will be no viable $\Delta_{S}>0$
as it implies that the conditions \ref{eq:Xl-condition} or \ref{eq:Xu-condition}
are violated.%

Alternatively, when $\kappa\geq1$ and $M(D')\geq N(D')$ the inequality
chain for viability is:

\begin{equation}
1+N_{r_{2}}<M_{r_{2}}<M_{ar_{2}}<X_{u}\label{eq:eq:viable-inequality-kappa-gt-1}
\end{equation}

Here if $X_{u}<M_{ar_{2}}$ the inequality will be broken and $r_{2}$
will not be a viable addition to the sequence after $D^{*}$. 

The equality $X_{u}<X_{l}$ provides a $S.C.$ for non viability when
$\kappa<1$ and is analyzed first using the next 2 Facts. 
\begin{fact}
\label{fact:UnderB3-Xl-lt-Xu}Under the assumptions of Sequence Scenario
B3 where $\frac{T(p_{1})}{E(T|D_{a})}=1$, $X_{l}<X_{u}$ for $0\leq N_{r_{2}}\leq1=N^{*}$.

\[
1\leq X_{l}=\frac{N^{*}}{\frac{N^{*}}{N^{*}+N_{r_{2}}}+\frac{T(p_{2})}{E(T|D^{*})}\frac{N_{r_{2}}}{N^{*}+N_{r_{2}}}}<\frac{N^{*}}{\frac{N^{*}-N_{r_{1}}}{N^{*}-N_{r_{1}}+N_{r_{2}}}+\frac{T(p_{2})}{E(T|D^{*})}\frac{N_{r_{2}}}{N^{*}-N_{r_{1}}+N_{r_{2}}}}=X_{u}
\]
\end{fact}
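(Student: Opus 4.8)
The plan is to reduce both bounds to values of a single one-variable function and then exploit monotonicity. First I would adopt the Sequence Scenario B3 normalization $N^{*}=1$ and abbreviate $t=\frac{T(p_{2})}{E(T|D^{*})}$; the governing assumption of this subsection, $T(p_{r_{2}})<E(T|D^{*})$, gives $t<1$, which is the only structural input needed. Clearing the internal fractions inside each denominator turns the two quantities into
\[
X_{l}=\frac{1+N_{r_{2}}}{1+tN_{r_{2}}},\qquad X_{u}=\frac{(1-N_{r_{1}})+N_{r_{2}}}{(1-N_{r_{1}})+tN_{r_{2}}},
\]
so that both are instances of the function $g(a)=\frac{a+N_{r_{2}}}{a+tN_{r_{2}}}$ evaluated at $a=N^{*}=1$ for $X_{l}$ and at $a=N^{*}-N_{r_{1}}=1-N_{r_{1}}=N_{a}$ for $X_{u}$.

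Next I would show that $g$ is strictly decreasing on $a>0$. A direct quotient differentiation gives
\[
g'(a)=\frac{(t-1)N_{r_{2}}}{(a+tN_{r_{2}})^{2}},
\]
whose sign is that of $t-1<0$ whenever $N_{r_{2}}>0$. Since $N_{r_{1}}>0$ forces $N_{a}=1-N_{r_{1}}<1$, and since $N_{a}>0$ (the distribution $D_{a}$ preceding $D^{*}$ carries positive volume, so $N_{r_{1}}<N^{*}$), decreasingness of $g$ yields $X_{u}=g(N_{a})>g(1)=X_{l}$, which is exactly the strict inequality. At the boundary $N_{r_{2}}=0$ both expressions collapse to $1$, so the weak inequality still holds there.

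The remaining assertion $1\le X_{l}$ follows immediately from $t<1$: the denominator $1+tN_{r_{2}}$ of $X_{l}$ is no larger than its numerator $1+N_{r_{2}}$, so $X_{l}\ge 1$, with equality only at $N_{r_{2}}=0$. I expect no genuine obstacle beyond bookkeeping here; the only point deserving explicit care is the positivity of the denominators $a+tN_{r_{2}}$, which must be confirmed so that both the direction of the division and the sign read off from $g'$ are legitimate. That positivity is guaranteed because $a\in\{1,N_{a}\}$ with $0<N_{a}\le 1$ and $tN_{r_{2}}\ge 0$, so the monotonicity comparison is valid throughout the stated range $0\le N_{r_{2}}\le 1=N^{*}$, completing the argument.
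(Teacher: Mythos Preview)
Your proof is correct and is essentially the same argument the paper gives: both reduce $X_l$ and $X_u$ to the simplified forms $\frac{1+N_{r_2}}{1+tN_{r_2}}$ and $\frac{(1-N_{r_1})+N_{r_2}}{(1-N_{r_1})+tN_{r_2}}$ and then compare them using $t<1$. The paper phrases the comparison qualitatively (the denominator of $X_u$ is a weighted average that puts more weight on $t<1$, hence is smaller), whereas you package the same observation as monotonicity of the single function $g(a)=\frac{a+N_{r_2}}{a+tN_{r_2}}$ via its derivative; your formulation is cleaner and more rigorous but not a different idea. One small caveat: your positivity check invokes $tN_{r_2}\ge 0$, i.e.\ $t\ge 0$, which the statement of the Fact does not assert on its own; in the paper's context (the surrounding Facts assume $0<\frac{T(p_2)}{E(T|D^*)}<1$) this is fine, but if you want the argument to stand alone you could instead note that $g$ is monotone on any interval where $a+tN_{r_2}$ keeps a fixed sign, and the inequality $X_l\ge 1$ already forces $1+tN_{r_2}>0$.
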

\begin{proof}
From Proposition \ref{prop:For-a-point-added-after-D*} $\frac{T(p_{2})}{E(T|D^{*})}<1$,
so $X_{l}>1$, thus,%

\[
\frac{N^{*}-N_{r_{1}}}{N^{*}-N_{r_{1}}+N_{r_{2}}}+\frac{T(p_{2})}{E(T|D^{*})}\frac{N_{r_{2}}}{N^{*}-N_{r_{1}}+N_{r_{2}}}<\frac{N^{*}}{N^{*}+N_{r_{2}}}+\frac{T(p_{2})}{E(T|D^{*})}\frac{N_{r_{2}}}{N^{*}+N_{r_{2}}}
\]

\[
\frac{1-N_{r_{2}}}{1-N_{r_{1}}+N_{r_{2}}}\left[1-N_{r_{1}}+\frac{T(p_{2})}{E(T|D^{*})}N_{r_{2}}\right]<1+\frac{T(p_{2})}{E(T|D^{*})}N_{r_{2}}
\]

because $\frac{N_{r_{2}}}{N^{*}+N_{r_{2}}}<\frac{N_{r_{2}}}{N^{*}-N_{r_{1}}+N_{r_{2}}}$
the $L.H.S.$ weighs $\frac{T(p_{2})}{E(T|D^{*})}$ more. Thus proving
the inequality.%

\end{proof}
\begin{fact}
\label{fact:Xl-gt-Xu-B4} Under the assumptions of Sequence Scenario
B4 where $\frac{T(p_{1})}{E(T|D_{a})}<1$, and for $0\leq N_{r_{2}}\leq1=N^{*}$
and $\frac{T(p_{2})}{E(T|D^{*})}<1,$ there is a threshold $\tau$
on $\frac{T(p_{1})}{E(T|D_{a})}$ where for $\frac{T(p_{1})}{E(T|D_{a})}<\tau<1$
it will be the case that $X_{l}>X_{u}$ and where 

\[
\tau=\frac{\left(1-N_{r_{1}}\right)+\left(2+N_{r_{2}}-N_{r_{1}}\right)\frac{T(p_{2})}{E(T|D^{*})}N_{r_{2}}}{\left(1-N_{r_{1}}+N_{r_{2}}\right)\left(1+\frac{T(p_{2})}{E(T|D^{*})}N_{r_{2}}\right)}
\]

and further, under B4, $X_{u}<X_{l}$ for $N_{r_{2}}$ at and near
0. 
\end{fact}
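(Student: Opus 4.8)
The plan is to obtain the explicit form of $X_u$ valid under Sequence Scenario B4 and then collapse the inequality $X_l > X_u$ to a single threshold on $\frac{T(p_1)}{E(T|D_a)}$. First I would return to the order condition \ref{eq:OrderCond-r1-r2} for $r_1$ to precede $r_2$ in the sequence. Normalizing $N^{*}=1$ with $N_a = 1 - N_{r_1}$, and using $M_{ar_1} = M(D^{*}) = N^{*} = 1$, the left bracket evaluates to $1 - N_{r_1}\bigl(1 - \tfrac{T(p_1)}{E(T|D_a)}\bigr)$, which is strictly below $1$ precisely because B4 posits $\frac{T(p_1)}{E(T|D_a)} < 1$ (this is the single feature distinguishing B4 from B3). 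Solving for the viability ceiling on $M_{ar_2}$ then gives
\[
X_u = \frac{\left[1 - N_{r_1}\left(1 - \tfrac{T(p_1)}{E(T|D_a)}\right)\right]\left(1 - N_{r_1} + N_{r_2}\right)}{1 - N_{r_1} + \tfrac{T(p_2)}{E(T|D^{*})}\, N_{r_2}},
\]
while from \ref{eq:Xl-condition} we have $X_l = \frac{1 + N_{r_2}}{1 + \frac{T(p_2)}{E(T|D^{*})}N_{r_2}}$.

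Next I would form $X_l > X_u$, clear the two strictly positive denominators, and isolate the one term carrying $\frac{T(p_1)}{E(T|D_a)}$. Writing $a = N_{r_1}$, $b = N_{r_2}$, $s = \frac{T(p_2)}{E(T|D^{*})}$, cross-multiplication yields $(1+b)(1-a+sb) > \bigl(1 - a + a\tfrac{T(p_1)}{E(T|D_a)}\bigr)(1-a+b)(1+sb)$, so isolating gives $a\,\tfrac{T(p_1)}{E(T|D_a)} < \frac{(1+b)(1-a+sb)}{(1-a+b)(1+sb)} - (1-a)$. Combining the right side over the common denominator $(1-a+b)(1+sb)$ produces a numerator that I must show factors as $a\bigl[(1-a) + sb(2 - a + b)\bigr]$, whereupon the factor $a$ cancels (legitimate since $N_{r_1} > 0$) and leaves exactly $\frac{T(p_1)}{E(T|D_a)} < \tau$ with $\tau$ as stated. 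This factoring — which hinges on the identity $1 - (1-a)^2 = a(2-a)$ to absorb the quadratic terms — is the one genuinely fiddly step and the place where a sign or bookkeeping slip would prevent the numerator from collapsing to the claimed $\tau$; it is the main obstacle, though it is arithmetic rather than conceptual.

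To finish, I would confirm $\tau < 1$ so that the interval $\frac{T(p_1)}{E(T|D_a)} < \tau < 1$ is nonempty: the difference between the denominator $(1-a+b)(1+sb)$ and the numerator of $\tau$ simplifies to $b(1-s)$, which is nonnegative whenever $\frac{T(p_2)}{E(T|D^{*})} < 1$ (a B4 hypothesis) and strictly positive once $N_{r_2} > 0$. For the final clause, evaluating at $N_{r_2}=0$ gives $X_l = 1$ and $X_u = 1 - N_{r_1}\bigl(1 - \tfrac{T(p_1)}{E(T|D_a)}\bigr) < 1$, so $X_u < X_l$ holds outright; equivalently $\tau$ tends to $1$ as $N_{r_2}\to 0$ while $\frac{T(p_1)}{E(T|D_a)}$ stays below $1$, and continuity of both thresholds in $N_{r_2}$ extends $X_u < X_l$ to a neighborhood of $0$. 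The only caveat I would flag is the distinction between $E(T|D_a)$ and $E(T|D^{*})$, which coincide under B3 but differ under B4; I would keep each producer-value ratio attached to its reference distribution as in \ref{eq:Xl-condition} and \ref{eq:OrderCond-r1-r2}, so the resulting $\tau$ matches equation \ref{eq:Tp1-threshold} of Lemma \ref{lem:Impact-Tp1-on-viability}.
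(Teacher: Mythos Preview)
Your proposal is correct and follows essentially the same route as the paper: both derive the B4 form of $X_u$ from the order condition \ref{eq:OrderCond-r1-r2}, cross-multiply $X_l>X_u$, and isolate $\tfrac{T(p_1)}{E(T|D_a)}$ to obtain $\tau$, then verify $\tau<1$ via the difference $b(1-s)$ and check the $N_{r_2}=0$ endpoint directly. Your explicit factoring of the numerator as $a[(1-a)+sb(2-a+b)]$ using $1-(1-a)^2=a(2-a)$ is a nice bookkeeping touch that the paper leaves implicit, but otherwise the arguments coincide.
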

\begin{proof}
 Set $N^{*}=N_{a}+N_{r_{1}}=1$, and show by algebraic derivation.
The formula for $X_{u}$ under B4, comes from equation \ref{eq:OrderCond-r1-r2}.
Substituting into the inequality \ref{eq:viable-inequality-kappa-lt-1}
and checking for non viability, yields,

\[
X_{l}=\frac{1}{\frac{1}{1+N_{r_{2}}}+\frac{T(p_{2})}{E(T|D^{*})}\frac{N_{r_{2}}}{1+N_{r_{2}}}}>\frac{1-N_{r_{1}}+\frac{T(p_{1})}{E(T|D_{a})}N_{r_{1}}}{\frac{1-N_{r_{1}}}{1-N_{r_{1}}+N_{r_{2}}}+\frac{T(p_{2})}{E(T|D^{*})}\frac{N_{r_{2}}}{1-N_{r_{1}}+N_{r_{2}}}}=X_{u}
\]

Evaluating this at $N_{r_{2}}=0$ confirms that $X_{u}<X_{l}$ for
$N_{r_{2}}$ at and near 0. 

Rearranging the inequality and reducing leaves,%

\[
\frac{1+N_{r_{2}}}{1-N_{r_{1}}+N_{r_{2}}}\frac{1-N_{r_{1}}+\frac{T(p_{2})}{E(T|D^{*})}N_{r_{2}}}{1+\frac{T(p_{2})}{E(T|D^{*})}N_{r_{2}}}>1-N_{r_{1}}+\frac{T(p_{1})}{E(T|D_{a})}N_{r_{1}}
\]

Further rearranging yields,

\[
\frac{-N_{r_{1}}N_{r_{2}}+\left(1+N_{r_{2}}\right)\frac{T(p_{2})}{E(T|D^{*})}N_{r_{2}}-\left(1-N_{r_{1}}+N_{r_{2}}\right)\frac{T(p_{2})}{E(T|D^{*})}N_{r_{2}}}{\left(1-N_{r_{1}}+N_{r_{2}}\right)\left(1+\frac{T(p_{2})}{E(T|D^{*})}N_{r_{2}}\right)}>N_{r_{1}}\left(\frac{T(p_{1})}{E(T|D_{a})}-1\right)
\]

and finally, 

\begin{equation}
\tau=\frac{\left(1-N_{r_{1}}\right)+\left(2-N_{r_{1}}+N_{r_{2}}\right)\frac{T(p_{2})}{E(T|D^{*})}N_{r_{2}}}{1-N_{r_{1}}+N_{r_{2}}+\left(1-N_{r_{1}}+N_{r_{2}}\right)\frac{T(p_{2})}{E(T|D^{*})}N_{r_{2}}}>\frac{T(p_{1})}{E(T|D_{a})}\label{eq:Tp1-Threshold}
\end{equation}

Thus the threshold $\tau$ provides the upper bounds on the range
of $\frac{T(p_{1})}{E(T|D_{a})}$ where $X_{l}>X_{u}$

It can be further shown that provided $\frac{T(p_{2})}{E(T|D^{*})}<1$, 

\[
\tau=\frac{\left(1-N_{r_{1}}\right)+\left(2+N_{r_{2}}-N_{r_{1}}\right)\frac{T(p_{2})}{E(T|D^{*})}N_{r_{2}}}{\left(1-N_{r_{1}}+N_{r_{2}}\right)\left(1+\frac{T(p_{2})}{E(T|D^{*})}N_{r_{2}}\right)}<1
\]

Expanding,

\[
\left(1-N_{r_{1}}\right)+\left(2+N_{r_{2}}-N_{r_{1}}\right)\frac{T(p_{2})}{E(T|D^{*})}N_{r_{2}}<1-N_{r_{1}}+N_{r_{2}}+\left(1-N_{r_{1}}+N_{r_{2}}\right)\frac{T(p_{2})}{E(T|D^{*})}N_{r_{2}}
\]

Reducing, 

\[
\frac{T(p_{2})}{E(T|D^{*})}N_{r_{2}}<N_{r_{2}}
\]

Yields,

\[
\frac{T(p_{2})}{E(T|D^{*})}<1
\]

Completing the proof.
\end{proof}
Fact \ref{fact:UnderB3-Xl-lt-Xu} shows that under the assumptions
of B3 the threshold for ordering $r_{2}$ in the sequence prior to
$D^{*}$ is above the threshold for $\Delta_{S}>0$. Although small,
there is always a range of values where $r_{2}$ is viable and has
a $\Delta_{S}>0$.

On the other hand Fact \ref{fact:Xl-gt-Xu-B4} shows that there is
possibility that $X_{u}<X_{l}$ for all $N_{r_{2}},$meaning that
there are no viable points where $\Delta_{S}>0$, and $X_{u}(N_{r_{2}})<X_{l}(N_{r_{2}})$
always holds for relatively small values of $N_{r_{2}}$. Fact \ref{fact:Xl-gt-Xu-B4}
provides a $S.C.$ for non-viability when $\kappa<1$.

---

The rest of appendix assumes $N^{*}=N_{a}+N_{r_{1}}=1$ unless otherwise
indicated. 

The conditions for $\kappa<1$ are analyzed using the following 2
facts. 
\begin{fact}
\label{fact:Xu(0)=00003D1 FD-lt-1}Under the assumptions of Sequence
Scenario B3 where $\frac{T(p_{1})}{E(T|D_{a})}=1$ and with $0<\frac{T(p_{2})}{E(T|D^{*})}<1$,
$X_{u}$ will have these properties: 

(i) $1\leq X_{u}(N_{r_{2}})=\frac{1-N_{r_{1}}+N_{r_{2}}}{1-N_{r_{1}}+\frac{T(p_{2})}{E(T|D^{*})}N_{r_{2}}}$
and $X_{u}(0)=1$, and 

(ii) $\text{if }N_{r_{1}}\leq\frac{T(p_{2})}{E(T|D^{*})},$ $\text{ }\frac{dX_{u}(N_{r_{2}})}{dN_{r_{2}}}<1\text{ for all }N_{r_{2}}>0$,
and

(iii) $\frac{d^{2}X_{u}(N_{r_{2}})}{dN_{r_{2}}^{2}}<0$.
\end{fact}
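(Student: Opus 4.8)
The plan is to treat all three parts as direct computations starting from the closed form of $X_u$. First I would simplify the defining expression \ref{eq:Xu-condition} under the scenario-B3 normalization $N^{*}=N_{a}+N_{r_{1}}=1$. The two summands in the denominator of \ref{eq:Xu-condition} share the common factor $(1-N_{r_1}+N_{r_2})^{-1}$, so the denominator collapses to $\bigl[(1-N_{r_1})+\tfrac{T(p_2)}{E(T|D^*)}N_{r_2}\bigr]/(1-N_{r_1}+N_{r_2})$; inverting gives the stated formula $X_u(N_{r_2})=\frac{1-N_{r_1}+N_{r_2}}{1-N_{r_1}+\frac{T(p_2)}{E(T|D^*)}N_{r_2}}$, which is part (i). Evaluating at $N_{r_2}=0$ yields $X_u(0)=1$ immediately. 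For the bound $X_u\ge 1$, I would cross-multiply (legitimate since $1-N_{r_1}>0$ makes both numerator and denominator positive), reducing the claim to $N_{r_2}\ge \tfrac{T(p_2)}{E(T|D^*)}N_{r_2}$, i.e. $N_{r_2}\bigl(1-\tfrac{T(p_2)}{E(T|D^*)}\bigr)\ge 0$, which holds because $\tfrac{T(p_2)}{E(T|D^*)}<1$ by hypothesis.

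For part (ii) I would introduce the shorthand $a:=1-N_{r_1}>0$ and $\beta:=\tfrac{T(p_2)}{E(T|D^*)}\in(0,1)$ and differentiate $X_u=(a+N_{r_2})/(a+\beta N_{r_2})$ by the quotient rule. The terms proportional to $\beta N_{r_2}$ cancel in the numerator of the derivative, leaving the compact expression $X_u'(N_{r_2})=\dfrac{a(1-\beta)}{(a+\beta N_{r_2})^2}$. The key structural observation is that $X_u'$ is strictly decreasing in $N_{r_2}$, since the denominator grows with $N_{r_2}$ while the numerator is a positive constant. Hence $X_u'$ attains its largest value at $N_{r_2}=0$, namely $X_u'(0)=(1-\beta)/a=(1-\beta)/(1-N_{r_1})$. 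This endpoint value is $\le 1$ exactly when $1-\beta\le 1-N_{r_1}$, i.e. when $N_{r_1}\le\beta=\tfrac{T(p_2)}{E(T|D^*)}$, which is the hypothesis of (ii). For any $N_{r_2}>0$ the denominator strictly exceeds $a^2$, so $X_u'(N_{r_2})<X_u'(0)\le 1$, establishing the strict inequality.

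For part (iii) I would differentiate $X_u'$ once more to obtain $X_u''(N_{r_2})=\dfrac{-2a\beta(1-\beta)}{(a+\beta N_{r_2})^3}$, which is negative because $a>0$, $\beta>0$, $1-\beta>0$, and the cubed denominator is positive. I do not expect a genuine obstacle here, since every part reduces to elementary algebra and differentiation of a single rational function. The only point requiring care is the boundary handling in (ii): when $N_{r_1}=\beta$ the derivative equals exactly $1$ at $N_{r_2}=0$, so the strict bound $X_u'<1$ can hold only on the open region $N_{r_2}>0$, and it is precisely the strict concavity from (iii) (equivalently the strict monotone decrease of $X_u'$) that upgrades the weak endpoint bound into a strict interior bound. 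I would therefore prove (iii) before, or alongside, invoking the monotonicity step in (ii), so that the two parts reinforce one another rather than relying on a separate argument for the decrease of $X_u'$.
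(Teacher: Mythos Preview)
Your proposal is correct and follows essentially the same approach as the paper: derive the closed form for $X_u$, compute the first and second derivatives via the quotient rule, show $X_u''<0$, evaluate $X_u'(0)=(1-\beta)/(1-N_{r_1})$, and use the concavity to propagate the bound on $X_u'$ from the endpoint to all $N_{r_2}>0$. Your treatment is in fact slightly more careful than the paper's: you explicitly verify $X_u\ge 1$ by cross-multiplication, and you handle the boundary case $N_{r_1}=\beta$ by noting that $X_u'(0)=1$ there and invoking the \emph{strict} decrease of $X_u'$ to obtain the strict inequality on $N_{r_2}>0$, whereas the paper's wording only addresses the strict-inequality case $N_{r_1}<\beta$.
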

\begin{proof}
Show $X_{u}(0)=1$ by substitution.

\[
X_{u}(0)=\frac{1-N_{r_{1}}}{1-N_{r_{1}}}=1
\]

Next, derive and evaluate the first derivative, 

\[
\frac{dX_{u}(N_{r_{2}})}{dN_{r_{2}}}=\frac{1}{\left(1-N_{r_{1}}+\frac{T(p_{2})}{E(T|D^{*})}N_{r_{2}}\right)}-\frac{1}{\left(1-N_{r_{1}}+\frac{T(p_{2})}{E(T|D^{*})}N_{r_{2}}\right)^{2}}\left(1-N_{r_{1}}+N_{r_{2}}\right)\frac{T(p_{2})}{E(T|D^{*})}
\]

Rearranging and reducing yields the first derivative as,

\[
\frac{dX_{u}(N_{r_{2}})}{dN_{r_{2}}}=\frac{\left(1-N_{r_{1}}\right)\left(1-\frac{T(p_{2})}{E(T|D^{*})}\right)}{\left(1-N_{r_{1}}+\frac{T(p_{2})}{E(T|D^{*})}N_{r_{2}}\right)^{2}}>0
\]

And the second derivative as,

\begin{alignat*}{1}
\frac{d^{2}X_{u}(N_{r_{2}})}{dN_{r_{2}}^{2}} & =-2\frac{\left(1-N_{r_{1}}\right)\left(1-\frac{T(p_{2})}{E(T|D^{*})}\right)\frac{T(p_{2})}{E(T|D^{*})}}{\left(1-N_{r_{1}}+\frac{T(p_{2})}{E(T|D^{*})}N_{r_{2}}\right)^{3}}<0
\end{alignat*}

Thus $X_{u}$ is increasing at a decreasing rate. 

Evaluating the first derivative at $N_{r_{2}}=0$,%

\[
\frac{dX_{u}(0)}{dN_{r_{2}}}=\frac{\left(1-\frac{T(p_{2})}{E(T|D^{*})}\right)}{\left(1-N_{r_{1}}\right)}
\]

Evaluating when $\frac{dX_{u}(0)}{dN_{r_{2}}}<1$ yields the condition,%

\begin{equation}
N_{r_{1}}<\frac{T(p_{2})}{E(T|D^{*})}\label{eq:Cond-Nr1-Tp2}
\end{equation}

Thus since the first derivative of $X_{u}$ at 0 is $<1$ and $X_{u}(N_{r_{2}})$
is increasing at a decreasing rate, it has been shown that $\frac{dX_{u}(N_{r_{2}})}{dN_{r_{2}}}<1$
for all $N_{r_{2}}>0$ when $N_{r_{1}}<\frac{T(p_{2})}{E(T|D^{*})}$
.

\end{proof}
Thus Fact \ref{fact:Xu(0)=00003D1 FD-lt-1} (ii) provides a $S.C.$
to determine if $\frac{dX_{u}(N_{r_{2}})}{dN_{r_{2}}}<1$ at $N_{r_{2}}$.
Also, because $X_{u}(N_{r_{2}})$ is increasing at a decreasing rate,
even if condition \ref{eq:Cond-Nr1-Tp2} does not hold it is still
possible that there is a upper range of values for $N_{r_{2}}$ where
$\frac{dX_{u}(N_{r_{2}})}{dN_{r_{2}}}<1$ and $X_{u}(N_{r_{2}})<N_{r_{2}}<1$.
\begin{fact}
\label{fact:B4-Xu-FD-lt-1}Under the assumptions of Sequence Scenario
B4 where $\frac{T(p_{1})}{E(T|D_{a})}\leq1$ and with $0<\frac{T(p_{2})}{E(T|D^{*})}<1$,
$X_{u}$ will have these properties: 

(i) $1\leq X_{u}(N_{r_{2}})=\frac{1-N_{r_{1}}+\frac{T(p_{1})}{E(T|D_{a})}N_{r_{1}}}{\frac{1-N_{r_{1}}}{1-N_{r_{1}}+N_{r_{2}}}+\frac{T(p_{2})}{E(T|D^{*})}\frac{N_{r_{2}}}{1-N_{r_{1}}+N_{r_{2}}}}$
and $X_{u}(0)=1+N_{r_{1}}\left(\frac{T(p_{1})}{E(T|D_{a})}-1\right)$, 

(ii) $\text{\ensuremath{\frac{dX_{u}(N_{r_{2}})}{dN_{r_{2}}}>0} and \ensuremath{\frac{d^{2}X_{u}(N_{r_{2}})}{dN_{r_{2}}^{2}}<0}, and if }\frac{N_{r_{1}}\frac{T(p_{1})}{E(T|D_{a})}}{\left(1-N_{r_{1}}+N_{r_{1}}\frac{T(p_{1})}{E(T|D_{a})}\right)}<\frac{T(p_{2})}{E(T|D^{*})},$
$\text{ }\frac{dX_{u}(N_{r_{2}})}{dN_{r_{2}}}<1\text{ for all }N_{r_{2}}>0$%
, and 

(iii) $\frac{dX_{u}(N_{r_{2}})}{dN_{r_{2}}}$ in the B4 case is less
than $\frac{dX_{u}(N_{r_{2}})}{dN_{r_{2}}}$ in the B3 case by a factor
of $1+N_{r_{1}}\left(\frac{T(p_{1})}{E(T|D_{a})}-1\right)$, and the
limitations on $N_{r_{1}}$ and $\frac{T(p_{2})}{E(T|D^{*})}$ are
less restrictive in the B4 case. 
\end{fact}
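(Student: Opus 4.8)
The plan is to prove parts (i)--(iii) by direct computation, treating the $X_u$ formula itself as already established in Fact~\ref{fact:Xl-gt-Xu-B4}, where it is obtained by specializing the order condition~\eqref{eq:OrderCond-r1-r2} to Sequence Scenario B4 with the normalization $N^{*}=N_{a}+N_{r_{1}}=1$ and $M_{ar_{1}}=M(D^{*})=N^{*}$ (this last equality holds because $D^{*}=D_{a}+I_{r_{1}}$ is the equal point, so it is valid in B4 just as in B3). For part (i) I would substitute $N_{r_{2}}=0$ into the given expression; the denominator collapses to $(1-N_{r_{1}})/(1-N_{r_{1}})=1$, leaving $X_{u}(0)=1-N_{r_{1}}+\tfrac{T(p_{1})}{E(T|D_{a})}N_{r_{1}}=1+N_{r_{1}}\!\left(\tfrac{T(p_{1})}{E(T|D_{a})}-1\right)$, as claimed. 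I would flag here that, since $\tfrac{T(p_{1})}{E(T|D_{a})}\le 1$, this gives $X_{u}(0)\le 1$, so the stated lower bound $1\le X_{u}$ should be read as holding over the viability-relevant range of $N_{r_{2}}$ rather than at $N_{r_{2}}=0$; the monotonicity proved next shows $X_{u}$ rises above $1$ once $N_{r_{2}}$ is large enough, namely when the limiting value $K/\tfrac{T(p_{2})}{E(T|D^{*})}$ exceeds $1$.

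The central idea for (ii) and (iii) is to factor the expression. Writing $a=\tfrac{T(p_{1})}{E(T|D_{a})}$, $b=\tfrac{T(p_{2})}{E(T|D^{*})}$, $N_{1}=N_{r_{1}}$, and $K=1-N_{1}+aN_{1}=X_{u}(0)$, the formula becomes
\[
X_{u}(N_{r_{2}})=K\,\frac{(1-N_{1})+N_{r_{2}}}{(1-N_{1})+bN_{r_{2}}},
\]
so $X_{u}$ is exactly $K$ times the B3 expression of Fact~\ref{fact:Xu(0)=00003D1 FD-lt-1}. Since $0\le a\le 1$ and $N_{1}<1$, the factor $K=1-N_{1}(1-a)$ is positive. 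Differentiating the quotient gives
\[
X_{u}'(N_{r_{2}})=K\,\frac{(1-N_{1})(1-b)}{[(1-N_{1})+bN_{r_{2}}]^{2}}>0,\qquad
X_{u}''(N_{r_{2}})=-\,K\,\frac{2b(1-N_{1})(1-b)}{[(1-N_{1})+bN_{r_{2}}]^{3}}<0,
\]
using $0<b<1$ and $N_{1}<1$, which establishes the monotonicity and concavity in (ii). Because $X_{u}'$ is therefore decreasing in $N_{r_{2}}$, the condition $X_{u}'<1$ for all $N_{r_{2}}>0$ reduces to checking $X_{u}'(0)\le 1$. Evaluating $X_{u}'(0)=K(1-b)/(1-N_{1})$ and rearranging $K(1-b)<1-N_{1}$ yields $aN_{1}<(1-N_{1}+aN_{1})b$, i.e. $\tfrac{aN_{1}}{1-N_{1}+aN_{1}}<b$, which is precisely the sufficient condition stated in (ii).

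Part (iii) is then immediate from the factorization: since the B3 derivative is the special case $a=1$ (where $K=1$), the B4 derivative equals the B3 derivative multiplied by $K=1+N_{1}(a-1)$, establishing the stated factor. For the comparison of thresholds I would show $\tfrac{aN_{1}}{1-N_{1}+aN_{1}}\le N_{1}$, which is equivalent to $a(1-N_{1})\le 1-N_{1}$ and hence to $a\le 1$, with equality only when $a=1$; thus the B4 requirement $\tfrac{aN_{1}}{1-N_{1}+aN_{1}}<b$ is weaker than the B3 requirement $N_{1}<b$ of~\eqref{eq:Cond-Nr1-Tp2}, confirming that the limitations are less restrictive. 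The computations themselves are routine; the only real subtleties are the factorization that lets all of Fact~\ref{fact:Xu(0)=00003D1 FD-lt-1} be reused, and being careful about the lower-bound claim $1\le X_{u}$ noted above, since $X_{u}(0)=1+N_{r_{1}}\!\left(\tfrac{T(p_{1})}{E(T|D_{a})}-1\right)$ can dip below $1$ when $\tfrac{T(p_{1})}{E(T|D_{a})}<1$.
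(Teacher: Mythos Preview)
Your proposal is correct and follows essentially the same approach as the paper: both compute the derivative directly, evaluate it at $N_{r_{2}}=0$, and rearrange to obtain the threshold condition, with the paper's derivative formula already displaying the multiplicative factor $K=1+N_{r_{1}}\bigl(\tfrac{T(p_{1})}{E(T|D_{a})}-1\bigr)$ that you isolate up front. Your explicit factorization $X_{u}^{\mathrm{B4}}=K\cdot X_{u}^{\mathrm{B3}}$ is a clean organizing device that makes (iii) immediate and lets you reuse Fact~\ref{fact:Xu(0)=00003D1 FD-lt-1} wholesale; the paper arrives at the same conclusions but presents the derivative and the threshold rearrangements separately. Your flag that $X_{u}(0)=1+N_{r_{1}}\bigl(\tfrac{T(p_{1})}{E(T|D_{a})}-1\bigr)\le 1$ under B4, so the inequality $1\le X_{u}$ can only hold for $N_{r_{2}}$ beyond a positive threshold, is a genuine observation the paper does not address.
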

\begin{proof}
The formula for $X_{u}(N_{r_{2}})$ under B4 is derived from equation
\ref{eq:OrderCond-r1-r2} and $X_{u}(0)$ can be determined by substitution.
Thus proving (i). For $N_{r_{2}}>0$ it is clear from the formula
that $X_{u}$ decreases as $\frac{T(p_{1})}{E(T|D_{a})}$ decreases
below $1$.

The first derivative is:%

\[
\frac{dX_{u}(N_{r_{2}})}{dN_{r_{2}}}=\frac{\left(1-N_{r_{1}}\right)\left(1-\frac{T(p_{2})}{E(T|D^{*})}\right)}{\left(1-N_{r_{1}}+\frac{T(p_{2})}{E(T|D^{*})}N_{r_{2}}\right)^{2}}\left[1+N_{r_{1}}\left(\frac{T(p_{1})}{E(T|D_{a})}-1\right)\right]>0
\]

The sign is evident given the restrictions on $N_{r_{1}}$ and $N_{r_{2}}.$
Evaluating $\frac{dX_{u}(N_{r_{2}})}{dN_{r_{2}}}$ for the B4 case
at at $N_{r_{2}}=0$ and checking when $<1$ yields, 

\[
\frac{\left(1-\frac{T(p_{2})}{E(T|D^{*})}\right)}{\left(1-N_{r_{1}}\right)}\left[1-N_{r_{1}}+\frac{T(p_{1})}{E(T|D_{a})}N_{r_{1}}\right]<1
\]

Rearranging yields,

\[
N_{r_{1}}<\frac{\frac{T(p_{2})}{E(T|D^{*})}}{\frac{T(p_{2})}{E(T|D^{*})}+\frac{T(p_{1})}{E(T|D_{a})}\left(1-\frac{T(p_{2})}{E(T|D^{*})}\right)}
\]

Proving (ii) and confirming that the condition for B4 is more restrictive
than the condition for B3. 

\[
\frac{\frac{T(p_{2})}{E(T|D^{*})}}{\frac{T(p_{2})}{E(T|D^{*})}+\frac{T(p_{1})}{E(T|D_{a})}\left(1-\frac{T(p_{2})}{E(T|D^{*})}\right)}>\frac{T(p_{2})}{E(T|D^{*})}
\]

Rearranging,%

\[
1>\frac{T(p_{2})}{E(T|D^{*})}
\]
'

The full restrictions on $N_{r_{1}}$ 

\[
N_{r_{1}}<\frac{\frac{T(p_{2})}{E(T|D^{*})}}{\frac{T(p_{2})}{E(T|D^{*})}+\frac{T(p_{1})}{E(T|D_{a})}-\frac{T(p_{2})}{E(T|D^{*})}\frac{T(p_{1})}{E(T|D_{a})}}<1
\]

or alternatively on $\frac{T(p_{2})}{E(T|D^{*})}$,%

\[
\frac{N_{r_{1}}\frac{T(p_{1})}{E(T|D_{a})}}{\left(1-N_{r_{1}}+N_{r_{1}}\frac{T(p_{1})}{E(T|D_{a})}\right)}<\frac{T(p_{2})}{E(T|D^{*})}
\]

Thus (iii) is obvious from comparing $\frac{dX_{u}(N_{r_{2}})}{dN_{r_{2}}}$
in the B3 and B4 cases.
\end{proof}
Fact \ref{fact:B4-Xu-FD-lt-1}, like Fact \ref{fact:Xu(0)=00003D1 FD-lt-1},
provides a $S.C.$ to determine if $\frac{dX_{u}(N_{r_{2}})}{dN_{r_{2}}}<1$
at $N_{r_{2}}$. However, Fact \ref{fact:B4-Xu-FD-lt-1} also shows
that under B4 there is a less restrictive range of values for $N_{r_{2}}$
where $\frac{dX_{u}(N_{r_{2}})}{dN_{r_{2}}}<1$ compared to the B3
conditions. See the figure below. 

\label{fig:Xu-Xl-comparison}\includegraphics[scale=0.75]{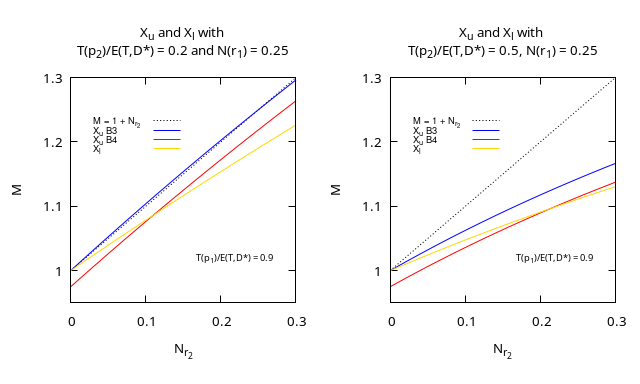}

\subsubsection*{Expressing in Terms of $\kappa$ }

The condition for $\Delta_{S}\leq0$ when $\kappa_{r_{2}}<1$, implying
$M_{r_{2}}<X_{l}$ , can be expressed as, \ref{eq:Xl-condition} and
\ref{eq:Delta-S-Inequality-with-kappa}

\begin{equation}
M_{r_{2}}=1+\kappa_{r_{2}}N_{r_{2}}<\frac{1+N_{r_{2}}}{1+\frac{T(p_{2})}{E(T|D^{*})}N_{r_{2}}}=X_{l}\label{eq:Delta-S-Inequality-with-kappa}
\end{equation}

where $N^{*}=1$ as above. Rearranging yields, 

\begin{equation}
\kappa_{r_{2}}<\frac{1-\frac{T(p_{2})}{E(T|D^{*})}}{1+\frac{T(p_{2})}{E(T|D^{*})}N_{r_{2}}}=X_{l\kappa}<1\label{eq:limit-on-kappa-r2}
\end{equation}

Similarly the order condition for non viability, $X_{u}<M_{ar_{2}}$,
can be expressed as.

\begin{equation}
X_{u}=\frac{1-N_{r_{1}}+N_{r_{2}}}{1-N_{r_{1}}+\frac{T(p_{2})}{E(T|D^{*})}N_{r_{2}}}<1+\kappa_{ar_{2}}N_{r_{2}}=M_{ar_{2}}\label{eq:Non-Viable-Inquality-with-kappa}
\end{equation}

Rearranging,%

\begin{equation}
X_{u\kappa}=\frac{1-\frac{T(p_{2})}{E(T|D^{*})}}{1-N_{r_{1}}+\frac{T(p_{2})}{E(T|D^{*})}N_{r_{2}}}<\kappa_{ar_{2}}\label{eq:limit-on-kappa-ar2}
\end{equation}

Note that $X_{u\kappa}<1$ %
when $N_{r_{1}}<\frac{T(p_{2})}{E(T|D^{*})}\left(N_{r_{2}}+1\right)$
thus $\kappa_{ar_{2}}>1$ assures non viability in that case. Intuitively
this is saying that the points with a $\kappa_{ar_{2}}>1$ are too
good to not have been in the sequence prior to $D^{*}$.

Likewise, $X_{u\kappa}>1$ when $N_{r_{1}}>\frac{T(p_{2})}{E(T|D^{*})}\left(N_{r_{2}}+1\right)$
thus assuring viability when $\kappa_{ar_{2}}<1$ .

Using alternative $X_{u}$ under B4 assumptions, 

\[
X_{u}=\frac{\left(1-N_{r_{1}}+N_{r_{2}}\right)\left(1-N_{r_{1}}+\frac{T(p_{1})}{E(T|D_{a})}N_{r_{1}}\right)}{1-N_{r_{1}}+\frac{T(p_{2})}{E(T|D^{*})}N_{r_{2}}}<1+\kappa_{ar_{2}}N_{r_{2}}=M_{ar_{2}}
\]

Rearranging yields an alternative threshold viability,%

\begin{equation}
X_{u\kappa}=\frac{\left(1-N_{r_{1}}+N_{r_{2}}\right)\left(\frac{T(p_{1})}{E(T|D_{a})}-1\right)\frac{N_{r_{1}}}{N_{r_{2}}}+\left(1-\frac{T(p_{2})}{E(T|D^{*})}\right)}{1-N_{r_{1}}+\frac{T(p_{2})}{E(T|D^{*})}N_{r_{2}}}<\kappa_{ar_{2}}
\end{equation}

The $L.H.S$. is $<1$ when this holds

\[
\left(1-N_{r_{1}}+N_{r_{2}}\right)\left(\frac{T(p_{1})}{E(T|D_{a})}-1\right)\frac{N_{r_{1}}}{N_{r_{2}}}+\left(1-\frac{T(p_{2})}{E(T|D^{*})}\right)<1-N_{r_{1}}+\frac{T(p_{2})}{E(T|D^{*})}N_{r_{2}}
\]

Rearranging,%

\[
\frac{T(p_{1})}{E(T|D_{a})}<\frac{-N_{r_{1}}+\frac{T(p_{2})}{E(T|D^{*})}\left(1+N_{r_{2}}\right)}{\left(1-N_{r_{1}}+N_{r_{2}}\right)}\left(\frac{N_{r_{2}}}{N_{r_{1}}}\right)+1
\]

This is always true given $\frac{T(p_{1})}{E(T|D_{a})}<1$ when $\frac{T(p_{2})}{E(T|D^{*})}\left(1+N_{r_{2}}\right)>N_{r_{1}}$.
Thus another confirmation of that relation. 

The functions $X_{l\kappa}(N_{r_{2}})$ and $X_{u\kappa}(N_{r_{2}})$
have been defined in equations \ref{eq:limit-on-kappa-r2} and \ref{eq:limit-on-kappa-ar2}
and will be used in further analysis below.
\begin{fact}
\label{fact:kappa-lt-1-only}Under the assumptions of B3 and when
$N_{r_{1}}<\frac{T(p_{2})}{E(T|D^{*})}\leq1$, or under B4 when $\frac{N_{r_{1}}\frac{T(p_{1})}{E(T|D_{a})}}{\left(1-N_{r_{1}}+N_{r_{1}}\frac{T(p_{1})}{E(T|D_{a})}\right)}<\frac{T(p_{2})}{E(T|D^{*})}$
, the inequality $\kappa_{r_{2}}<\kappa_{ar_{2}}<1$ will hold and
when $N_{r_{2}}>0$ it is a $N.C.$ for $\kappa_{ar_{2}}$ to be viable.
Further, $N_{r_{1}}<\frac{T(p_{2})}{E(T|D^{*})}\left(N_{r_{2}}+1\right)$
and $\kappa_{ar_{2}}>1$ implies non viability, while $N_{r_{1}}>\frac{T(p_{2})}{E(T|D^{*})}\left(N_{r_{2}}+1\right)$
and $\kappa_{ar_{2}}<1$ implies viability.
\end{fact}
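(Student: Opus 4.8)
The plan is to assemble this fact from three ingredients already in hand, treating the two required inequalities ($\kappa_{r_2}<\kappa_{ar_2}$ and $\kappa_{ar_2}<1$) separately and then reading off the ``Further'' implications directly from the $\kappa$-form of the viability threshold.

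First I would establish $\kappa_{r_2}<\kappa_{ar_2}$. Using the parameterizations $M_{ar_2}=N^{*}+\kappa_{ar_2}N_{r_2}$ and $M_{r_2}=N^{*}+\kappa_{r_2}N_{r_2}$ from \ref{eq:Mar2-parameterized}--\ref{eq:Mr2-Parameterized} with $N^{*}=1$ and $N_{r_2}>0$, this inequality is equivalent to $M_{ar_2}>M_{r_2}$. In the present regime $M(D')>M(D^{*})$, i.e.\ $\kappa_{r_2}>0$, forces $Q(D')>Q(D^{*})$, hence $c_{r_2}>Q(D^{*})$; under B3 we have $Q(D^{*})=Q(D_a)$, so $c_{r_2}>Q(D_a)$ and Proposition \ref{prop:The-functions-Mr2-Mar2}(i) delivers $M_{ar_2}>M_{r_2}$ immediately. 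Under B4 I would invoke Proposition \ref{prop:The-functions-Mr2-Mar2}(ii), whose threshold on $c_{r_2}/Q(D_a)$ holds over the relevant range (the note after that proposition shows it is satisfied for the practical range $N_{r_2}\in[0,N_a]$).

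Second, and this is where the actual work lies, I would show $X_{u\kappa}<1$ under both stated hypotheses by a single concavity argument rather than re-deriving the two cases. The threshold $X_{u\kappa}$ is the secant slope $\frac{X_u-1}{N_{r_2}}$ implicit in \ref{eq:Non-Viable-Inquality-with-kappa}--\ref{eq:limit-on-kappa-ar2}. Facts \ref{fact:Xu(0)=00003D1 FD-lt-1}(ii) and \ref{fact:B4-Xu-FD-lt-1}(ii) give $\frac{dX_u}{dN_{r_2}}<1$ for all $N_{r_2}>0$ precisely under the B3 hypothesis $N_{r_1}<\tfrac{T(p_2)}{E(T|D^{*})}$ and the B4 hypothesis, respectively, while $X_u(0)=1+N_{r_1}\bigl(\tfrac{T(p_1)}{E(T|D_a)}-1\bigr)\leq 1$ in both cases (with equality under B3). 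Integrating the derivative bound then gives $X_u(N_{r_2})<X_u(0)+N_{r_2}\leq 1+N_{r_2}$, i.e.\ $X_{u\kappa}<1$, uniformly in both scenarios; this avoids the pitfall of comparing the B4 form of $X_{u\kappa}$ directly against the B3 threshold condition.

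Third I would close the argument: viability is exactly $\kappa_{ar_2}<X_{u\kappa}$, the negation of the non-viability condition \ref{eq:Non-Viable-Inquality-with-kappa}, so any viable $r_2$ satisfies $\kappa_{ar_2}<X_{u\kappa}<1$, which combined with the first step yields the chain $\kappa_{r_2}<\kappa_{ar_2}<1$ and shows it is necessary for viability when $N_{r_2}>0$. The two ``Further'' implications then follow mechanically from \ref{eq:limit-on-kappa-ar2}, where $X_{u\kappa}<1\iff N_{r_1}<\tfrac{T(p_2)}{E(T|D^{*})}(N_{r_2}+1)$: under that inequality $\kappa_{ar_2}>1>X_{u\kappa}$ breaks viability, while under the reverse inequality $\kappa_{ar_2}<1<X_{u\kappa}$ guarantees it. The main obstacle will be the B4 instance of the first step, since Proposition \ref{prop:The-functions-Mr2-Mar2}(ii) only guarantees $M_{ar_2}>M_{r_2}$ under an explicit threshold on $c_{r_2}/Q(D_a)$ (equivalently on $c_{r_1}/Q(D_a)$); I would need to verify this threshold is implied by the stated B4 hypothesis together with $\kappa_{r_2}>0$, or else restrict the claim to the parameter range where it holds.
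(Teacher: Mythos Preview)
Your proposal is correct and follows essentially the same route as the paper: both use Facts \ref{fact:Xu(0)=00003D1 FD-lt-1} and \ref{fact:B4-Xu-FD-lt-1} together with $X_u(0)\leq 1$ and concavity to obtain $X_u(N_{r_2})<1+N_{r_2}$ (equivalently $X_{u\kappa}<1$), invoke Proposition \ref{prop:The-functions-Mr2-Mar2} for $M_{r_2}<M_{ar_2}$, and read off the ``Further'' clauses from \ref{eq:limit-on-kappa-ar2}. The only difference is cosmetic ordering; your flagged concern about the B4 instance of Proposition \ref{prop:The-functions-Mr2-Mar2}(ii) is a subtlety the paper simply cites without discussion.
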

\begin{proof}
Facts \ref{fact:Xu(0)=00003D1 FD-lt-1} and \ref{fact:B4-Xu-FD-lt-1}
provide conditions where $\frac{dX_{u}(0)}{dN_{r_{2}}}<1$. Under
those conditions since $\frac{d^{2}X_{u}(N_{r_{2}})}{dN_{r_{2}}^{2}}<0$
it is not possible for $1<\frac{dX_{u}(N_{r_{2}})}{dN_{r_{2}}}.$
Thus $X_{u}<1+N_{r_{2}}$ because $X_{u}(0)\leq1$ and$\frac{dX_{u}(N_{r_{2}})}{dN_{r_{2}}}<1$.
Thus, from Equation \ref{eq:Non-Viable-Inquality-with-kappa}, $1<\kappa_{ar_{2}}$
implies that $X_{u}(N_{r_{2}})<1+N_{r_{2}}<1+\kappa_{ar_{2}}N_{r_{2}}=M_{ar_{2}}$
making $\kappa_{ar_{2}}$ not viable. Further from Proposition \ref{prop:The-functions-Mr2-Mar2}
$M_{r_{2}}<M_{ar_{2}}$ in this case and this implies $\kappa_{r_{2}}<\kappa_{ar_{2}}$.
The further statements are shown to hold by inequality \ref{eq:limit-on-kappa-ar2}.
\end{proof}

Consistent with Fact \ref{fact:kappa-lt-1-only} if $X_{u\kappa}<\kappa_{ar_{2}}$
the point is not viable. So only need to consider $\kappa_{ar_{2}}\leq X_{u\kappa}$.

Further, if $\kappa_{r_{2}}<1$ when $\kappa_{ar_{2}}\leq X_{u\kappa}$,
the fact that $X_{u\kappa}>1$ is not important, as $\kappa_{r_{2}}<1$
leads away from a possible second equilibrium. Also when $\kappa_{r_{2}}<1$
the value of $x_{l\kappa}$ is relevant in determining $\Delta_{S}$.
So need to check if $\kappa_{r_{2}}<X_{l\kappa}$ as in inequality
\ref{eq:limit-on-kappa-r2}. %
So it will be useful to evaluate $\kappa_{ar_{2}}-\kappa_{r_{2}}$. 

Combining equations \ref{eq:Delta-S-Inequality-with-kappa} and \ref{eq:Non-Viable-Inquality-with-kappa}
yields the difference in kappas as,

\[
\kappa_{ar_{2}}-\kappa_{r_{2}}=\frac{1}{N_{r2}}\left(M_{ar_{2}}-M_{r_{2}}\right)
\]

Define $m$ as,

\[
\frac{M_{ar_{2}}}{M_{r_{2}}}=m=\left(\left(\frac{1+N_{r_{2}}}{1-N_{r_{1}}+N_{r_{2}}}\right)\frac{1-N_{r_{1}}+N_{r_{2}}\frac{c_{r_{2}}}{Q(D_{a})}}{\left(1-N_{r_{1}}+\frac{c_{r_{1}}}{Q(D_{a})}N_{r_{1}}\right)+N_{r_{2}}\frac{c_{r_{2}}}{Q(D_{a})}}\right)^{\alpha}>1
\]
Substituting $m$ in the first formula provides for an alternative
formulation for the difference in kappas,

\[
N_{r_{2}}\left(\kappa_{ar_{2}}-\kappa_{r_{2}}\right)=M_{ar_{2}}\left(1-\frac{1}{m}\right)
\]

\begin{equation}
N_{r_{2}}\left(\kappa_{ar_{2}}-\kappa_{r_{2}}\right)=\left(1+\kappa_{ar_{2}}N_{r_{2}}\right)\left(1-\frac{1}{m}\right)\label{eq:InterimResult-DiffKappas}
\end{equation}

Thus the difference in the kappas is a function of $\kappa_{ar_{2}}\text{, }N_{r_{2}}\text{, and }m$.

\begin{equation}
\kappa_{ar_{2}}-\kappa_{r_{2}}=\left(1-\frac{1}{m}\right)\left(\kappa_{ar_{2}}+\frac{1}{N_{r_{2}}}\right)\label{eq:diff-between-kappas}
\end{equation}

substituting for $\kappa_{ar_{2}}=X_{u\kappa}$ the comparison needed
for viable $r_{2}$ with $\kappa_{r_{2}}<1$ is,%

\[
X_{u\kappa}-1<\kappa_{ar_{2}}-\kappa_{r_{2}}=\left(1-\frac{1}{m}\right)\left(X_{u\kappa}+\frac{1}{N_{r_{2}}}\right)
\]

Rearranging,%

\[
X_{u\kappa}\frac{1}{m}<\left(1-\frac{1}{m}\right)\frac{1}{N_{r_{2}}}+1
\]

\[
X_{u\kappa}<\left(m-1\right)\frac{1}{N_{r_{2}}}+m
\]

\[
N_{r_{2}}X_{u\kappa}<m\left(1+N_{r_{2}}\right)-1
\]

Substituting for $X_{u\kappa}$ yields,

\[
N_{r_{2}}\frac{1-\frac{T(p_{2})}{E(T|D^{*})}}{1-N_{r_{1}}+\frac{T(p_{2})}{E(T|D^{*})}N_{r_{2}}}<m\left(1+N_{r_{2}}\right)-1
\]

Rearranging results in,

\begin{equation}
\frac{1-N_{r_{1}}+N_{r_{2}}}{\left(1+N_{r_{2}}\right)\left(1-N_{r_{1}}+\frac{T(p_{2})}{E(T|D^{*})}N_{r_{2}}\right)}<m\label{eq:diff-kappas-lt-Xu-1}
\end{equation}

\begin{fact}
\label{fact:SC-4-kappa-r2-lt-1}Inequality \ref{eq:diff-kappas-lt-Xu-1}
provides a $S.C.$ for $\kappa_{r_{2}}(N_{r_{2}})<1$.
\end{fact}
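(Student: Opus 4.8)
The plan is to treat this Fact as the formal conclusion of the algebraic chain that immediately precedes it, making the logical direction explicit. The starting point is the relation \ref{eq:diff-between-kappas}, namely $\kappa_{ar_{2}}-\kappa_{r_{2}}=\left(1-\frac{1}{m}\right)\left(\kappa_{ar_{2}}+\frac{1}{N_{r_{2}}}\right)$, which I would solve for $\kappa_{r_{2}}$ to obtain $\kappa_{r_{2}}=\frac{\kappa_{ar_{2}}}{m}-\left(1-\frac{1}{m}\right)\frac{1}{N_{r_{2}}}$. Because $m>1$ (as recorded where $m$ is defined, consistent with Proposition \ref{prop:The-functions-Mr2-Mar2}) and $N_{r_{2}}>0$, this displays $\kappa_{r_{2}}$ as a function that is strictly increasing in $\kappa_{ar_{2}}$ when $m$ and $N_{r_{2}}$ are held at the point's own values.

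Next I would invoke the viability restriction: from Fact \ref{fact:kappa-lt-1-only} and the accompanying discussion, any point $r_{2}$ that is added after $D^{*}$ must satisfy $\kappa_{ar_{2}}\leq X_{u\kappa}$. Monotonicity in $\kappa_{ar_{2}}$ then yields the upper bound $\kappa_{r_{2}}\leq\frac{X_{u\kappa}}{m}-\left(1-\frac{1}{m}\right)\frac{1}{N_{r_{2}}}$, evaluated with the point's actual $m$. Requiring this bound to lie below $1$ is exactly the inequality $X_{u\kappa}\frac{1}{m}<\left(1-\frac{1}{m}\right)\frac{1}{N_{r_{2}}}+1$, which is the first line of the displayed chain. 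I would then push this through the reversible rearrangements already recorded in the text: multiply by $m$ to get $X_{u\kappa}<\left(m-1\right)\frac{1}{N_{r_{2}}}+m$, multiply by $N_{r_{2}}$ to get $N_{r_{2}}X_{u\kappa}<m\left(1+N_{r_{2}}\right)-1$, substitute the explicit form of $X_{u\kappa}$ from \ref{eq:limit-on-kappa-ar2}, and clear the denominator $1-N_{r_{1}}+\frac{T(p_{2})}{E(T|D^{*})}N_{r_{2}}$ to arrive at precisely inequality \ref{eq:diff-kappas-lt-Xu-1}.

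The conclusion I would draw is that whenever \ref{eq:diff-kappas-lt-Xu-1} holds, the upper bound on $\kappa_{r_{2}}$ is strictly below $1$, so $\kappa_{r_{2}}(N_{r_{2}})<1$, which establishes \ref{eq:diff-kappas-lt-Xu-1} as a sufficient condition ($S.C.$). The main obstacle, and the part that needs care, is justifying that every step of the chain preserves the direction of the inequality: this rests on the positivity of the two denominators $1+N_{r_{2}}$ and $1-N_{r_{1}}+\frac{T(p_{2})}{E(T|D^{*})}N_{r_{2}}$ (both positive under the normalization $N^{*}=1$ with $0<N_{r_{1}}<1$, $0<N_{r_{2}}<1$, and $0<\frac{T(p_{2})}{E(T|D^{*})}<1$ from Proposition \ref{prop:For-a-point-added-after-D*}), together with $m>1$. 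The second delicate point is the replacement of $\kappa_{ar_{2}}$ by its maximal viable value $X_{u\kappa}$ while keeping the point's own $m$ fixed; this is what converts an equivalence at the viability boundary into a one-directional sufficient condition valid for every viable point, and I would state the monotonicity argument explicitly so that the \emph{sufficient} (rather than necessary-and-sufficient) character is transparent.
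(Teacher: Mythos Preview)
Your proposal is correct and takes essentially the same approach as the paper, whose proof is simply the pointer ``See derivation of equation \ref{eq:diff-kappas-lt-Xu-1} above.'' You have faithfully unpacked that derivation and added the explicit monotonicity-in-$\kappa_{ar_{2}}$ argument to justify the substitution $\kappa_{ar_{2}}=X_{u\kappa}$ as the worst viable case; the paper leaves that step implicit, so your version is a mild clarification rather than a different route.
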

\begin{proof}
See derivation of equation \ref{eq:diff-kappas-lt-Xu-1} above.
\end{proof}
Note the values for $N_{r_{2}}$, $N_{r_{1}}$, and $\frac{T(p_{2})}{E(T|D^{*})}$
must be consistent with $X_{u\kappa}>1$. %
Thus when $N_{r_{2}}<\frac{N_{r_{1}}}{\frac{T(p_{2})}{E(T|D^{*})}}-1$
and combining with $0<N_{r_{2}}$ it will be the case that $X_{u\kappa}>1$
is only possible when $\frac{N_{r_{1}}}{\frac{T(p_{2})}{E(T|D^{*})}}>1$.
Consistent with Fact \ref{fact:Xu(0)=00003D1 FD-lt-1}.

Inequality \ref{eq:diff-kappas-lt-Xu-1} implies $\kappa_{r_{2}}<1$.
The $R.H.S.>1$ by Proposition \ref{prop:The-functions-Mr2-Mar2}
and it can be shown by algebraic inspection that the $L.H.S.<1$ when
$\frac{N_{r_{1}}}{\frac{T(p_{2})}{E(T|D^{*})}}-1<N_{r_{2}}$. That
would be for all $N_{r_{2}}\geq0$ when $N_{r_{1}}\leq\frac{T(p_{2})}{E(T|D^{*})}$.
However that is not the case here where $N_{r_{1}}>\frac{T(p_{2})}{E(T|D^{*})}$.
Nevertheless for $\frac{N_{r_{1}}}{\frac{T(p_{2})}{E(T|D^{*})}}>1$
but still close to $1$ there are realistic parameters that make $m$
sufficiently large enough to satisfy inequality \ref{eq:diff-kappas-lt-Xu-1}.
Also note that the $L.H.S.=1$ at $N_{r_{2}}=0$. Thus likely that
the inequality holds when $N_{r_{2}}$ is close to zero.%

See graph below for different $m(N_{r_{2}})$ and threshold functions
($L.H.S.$). The graph shows $m$ when $c_{2}$ is equal to $1.25$
and $2.0$. For $N_{r_{1}}=0.2$ , when $T(p_{2})$ is less but closer
to $N_{r_{1}}$ the threshold is below both lines for $m$, and when
$T(p_{2})$ is much less than $N_{r_{1}}$, the threshold is more
likely to be above $m$.
\noindent \begin{center}
\includegraphics[scale=0.75]{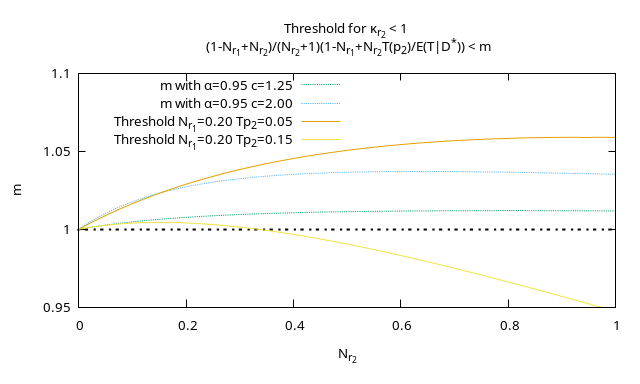}
\par\end{center}
\begin{fact}
\label{fact:KappaDif-gt-Xuk-Xlk}For $\kappa_{r_{2}}\leq1$ and $0<N_{r2}$,
if $\kappa_{ar_{2}}-\kappa_{r_{2}}>X_{u\kappa}(N_{r_{2}})-X_{l\kappa}(N_{r_{2}})$
or equivalently $M_{ar_{2}}-M_{r_{2}}>X_{u}(N_{r_{2}})-X_{l}(N_{r_{2}})$
a $\Delta_{S}>0$ is not viable. 
\end{fact}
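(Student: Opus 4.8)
The plan is to read the hypothesis as the assertion that the interval $[M_{r_{2}},M_{ar_{2}}]$ is too wide to be nested inside $[X_{l},X_{u}]$, and then to invoke the fact that exactly such a nesting is what a viable, producer-improving $r_{2}$ requires. The whole argument is an interval-containment contradiction once the $\kappa$-form and the $M$-form of the thresholds are placed side by side.

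First I would record the affine dictionary that converts the $\kappa$-thresholds into the $M$-thresholds. Normalizing $N^{*}=1$, equations \ref{eq:Delta-S-Inequality-with-kappa} and \ref{eq:limit-on-kappa-r2} give $M_{r_{2}}=1+\kappa_{r_{2}}N_{r_{2}}$ and $X_{l}=1+X_{l\kappa}N_{r_{2}}$, while \ref{eq:Non-Viable-Inquality-with-kappa} and \ref{eq:limit-on-kappa-ar2} give $M_{ar_{2}}=1+\kappa_{ar_{2}}N_{r_{2}}$ and $X_{u}=1+X_{u\kappa}N_{r_{2}}$. Subtracting in pairs yields $M_{ar_{2}}-M_{r_{2}}=(\kappa_{ar_{2}}-\kappa_{r_{2}})N_{r_{2}}$ and $X_{u}-X_{l}=(X_{u\kappa}-X_{l\kappa})N_{r_{2}}$, so for $N_{r_{2}}>0$ the two forms of the hypothesis are literally the same inequality scaled by $N_{r_{2}}$. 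This settles the ``equivalently'' clause and lets me work entirely in the $M$-form.

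Next I would assemble the three inequalities that a viable, producer-improving $r_{2}$ must satisfy when $\kappa_{r_{2}}\le1$. By Proposition \ref{prop:A-SC-For-Gain}, specialized in equation \ref{eq:Xl-condition}, $\Delta_{S}>0$ forces $M_{r_{2}}>X_{l}$; by the order/viability condition \ref{eq:Xu-condition}, the requirement that $r_{2}$ be optimally addable \emph{after} $D^{*}$ (rather than earlier in the sequence) forces $M_{ar_{2}}<X_{u}$; and by Proposition \ref{prop:The-functions-Mr2-Mar2} one has $M_{r_{2}}<M_{ar_{2}}$ in this regime. Chaining these three is precisely the viability chain \ref{eq:viable-inequality-kappa-lt-1}, namely $X_{l}<M_{r_{2}}<M_{ar_{2}}<X_{u}$, which places $[M_{r_{2}},M_{ar_{2}}]$ strictly inside $[X_{l},X_{u}]$. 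Adding the two outer inequalities $X_{l}<M_{r_{2}}$ and $M_{ar_{2}}<X_{u}$ term by term then gives $M_{ar_{2}}-M_{r_{2}}<X_{u}-X_{l}$ as a \emph{necessary} condition for $\Delta_{S}>0$ and viability to hold together.

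Finally I would take the contrapositive: the hypothesis $M_{ar_{2}}-M_{r_{2}}>X_{u}-X_{l}$ directly contradicts this necessary condition, so the nested chain \ref{eq:viable-inequality-kappa-lt-1} cannot hold for any $N_{r_{2}}>0$. Hence there is no point $r_{2}$ that is simultaneously viable and yields $\Delta_{S}>0$; that is, a positive $\Delta_{S}$ is not viable. I do not expect a genuine obstacle, since once the affine dictionary is in place the result is an immediate length comparison of nested intervals. The only points demanding care are the clean invocation of $M_{r_{2}}<M_{ar_{2}}$ from Proposition \ref{prop:The-functions-Mr2-Mar2} in exactly the $\kappa_{r_{2}}\le1$ regime (which corresponds to $c_{r_{2}}>Q(D_{a})$, consistent with $\kappa_{r_{2}}>0$ meaning $M(D')>M(D^{*})$), and the correct reading of ``not viable'' as the nonexistence of an $r_{2}$ meeting both requirements rather than a claim about one fixed $r_{2}$.
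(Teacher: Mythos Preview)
Your proposal is correct and follows essentially the same approach as the paper: both arguments hinge on the viability chain $X_{l}<M_{r_{2}}<M_{ar_{2}}<X_{u}$ (inequality \ref{eq:viable-inequality-kappa-lt-1}) and show the hypothesis contradicts it. The paper's proof is terser---it simply rearranges $M_{ar_{2}}-M_{r_{2}}>X_{u}-X_{l}$ to $M_{ar_{2}}-X_{u}>M_{r_{2}}-X_{l}$ and observes this forces either $M_{ar_{2}}>X_{u}$ or $M_{r_{2}}<X_{l}$---whereas you frame the same logic as an interval-containment contrapositive and additionally spell out the affine dictionary establishing the ``equivalently'' clause, which the paper leaves implicit.
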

\begin{proof}
The proof follows from rearranging $M_{ar_{2}}-M_{r_{2}}>X_{u}(N_{r_{2}})-X_{l}(N_{r_{2}})$
to get $M_{ar_{2}}-X_{u}(N_{r_{2}})>M_{r_{2}}-X_{l}(N_{r_{2}})$ which
indicates that either $M_{ar_{2}}>X_{u}(N_{r_{2}})$ or $M_{r_{2}}<X_{l}(N_{r_{2}})$
implying that one of the conditions \ref{eq:Xl-condition} or \ref{eq:Xu-condition}
are violated.
\end{proof}
Applying Fact \ref{fact:KappaDif-gt-Xuk-Xlk} and utilizing equation
\ref{eq:diff-between-kappas} yields,

\[
X_{u\kappa}(N_{r_{2}})-X_{l\kappa}(N_{r_{2}})<\kappa_{ar_{2}}-\kappa_{r_{2}}=\left(1-\frac{1}{m}\right)\left(\kappa_{ar_{2}}+\frac{1}{N_{r_{2}}}\right)
\]

Letting $X_{u\kappa}=\kappa_{ar_{2}}$ substituting and rearranging%

\label{eq:Threshold-Xu=0003BA---Xl=0003BA-lt-=0003BAar---=0003BAr-1}
\[
X_{u\kappa}(N_{r_{2}})-X_{l\kappa}(N_{r_{2}})m<\left(m-1\right)\left(\frac{1}{N_{r_{2}}}\right)
\]

Further rearranging yields,%

\[
\frac{N_{r_{2}}X_{u\kappa}(N_{r_{2}})+1}{N_{r_{2}}X_{l\kappa}(N_{r_{2}})+1}<m
\]

Substituting for $X_{u\kappa}(N_{r_{2}})$ and $X_{l\kappa}(N_{r_{2}})$,%

\begin{equation}
\frac{\left(1-N_{r_{1}}+N_{r_{2}}\right)\left(1+\frac{T(p_{2})}{E(T|D^{*})}N_{r_{2}}\right)}{\left(1+N_{r_{2}}\right)\left(1-N_{r_{1}}+\frac{T(p_{2})}{E(T|D^{*})}N_{r_{2}}\right)}<m\label{eq:=0003BAar-=0003BAr-gt-Xu=0003BA-Xl=0003BA-1}
\end{equation}

The $L.H.S$. represents the threshold on $m$ and it can be shown
{} that the $L.H.S.>1$. This is similar to \ref{eq:diff-kappas-lt-Xu-1}
but more restrictive as the inequality indicates a condition for a
non viable $r_{2}$.

An alternative use of the difference in kappas is express $\kappa_{ar_{2}}-\kappa_{r_{2}}$
as a fraction of $X_{u\kappa}(N_{r_{2}})-X_{l\kappa}(N_{r_{2}})$
to show the reduction in viable values of $\kappa_{r_{2}}$ where
$\Delta_{S}>0$ and the point $r_{2}$ is viable. Let $RVV$ be the
Reduction in Viable $\kappa_{r_{2}}$ Values where $\Delta_{S}>0$.
$RVV$ is relative to $X_{u\kappa}-X_{l\kappa}$. In other terms $RVV$
is the upper fraction of interval between $X_{u\kappa}(N_{r_{2}})-X_{l\kappa}(N_{r_{2}})$
that cannot be a value for $\kappa_{r_{2}}$ or the upper fraction
of that interval that cannot be a value for $\kappa_{ar_{2}}$.

\[
RVV=\frac{\kappa_{ar_{2}}-\kappa_{r_{2}}}{X_{u\kappa}(N_{r_{2}})-X_{l\kappa}(N_{r_{2}})}=\frac{m-1}{m}\frac{\left(X_{u\kappa}(N_{r_{2}})+\frac{1}{N_{r_{2}}}\right)}{X_{u\kappa}(N_{r_{2}})-X_{l\kappa}(N_{r_{2}})}
\]

Simplifying and rearranging yields, 

\begin{equation}
RVV=\left(\frac{m-1}{m}\right)\frac{\left(1+\frac{T(p_{2})}{E(T|D^{*})}N_{r_{2}}\right)\left(1-N_{r_{1}}+N_{r_{2}}\right)}{N_{r_{2}}N_{r_{1}}\left(1-\frac{T(p_{2})}{E(T|D^{*})}\right)}\label{eq:RVV}
\end{equation}

As previously noted, by Proposition \ref{prop:The-functions-Mr2-Mar2},
$m>1$ thus $\frac{m-1}{m}<1$. It can be shown%
{} that for all $N_{r_{2}}>0$,

\[
\frac{\left(1+\frac{T(p_{2})}{E(T|D^{*})}N_{r_{2}}\right)\left(1-N_{r_{1}}+N_{r_{2}}\right)}{N_{r_{2}}N_{r_{1}}\left(1-\frac{T(p_{2})}{E(T|D^{*})}\right)}>1
\]

Intuitively, it appears that for sufficiently large values of $m$
or relatively high values of $\frac{T(p_{2})}{E(T|D^{*})}$ and low
values of $N_{r_{1}}$, $RVV$ will be close to 1, and possibly $>1$
implying that there is no viable point $r_{2}$ that can be added
after $D^{*}$ as would be consistent with Fact \ref{fact:KappaDif-gt-Xuk-Xlk}. 

Explicitly checking when $RVV>1$ by combining the equation for $RVV$
and the inequality $RVV>1$ yields, 

\[
\left(1-N_{r_{1}}+N_{r_{2}}\right)+\frac{T(p_{2})}{E(T|D^{*})}N_{r_{2}}\left(1-N_{r_{1}}+N_{r_{2}}\right)>\frac{m}{m-1}\left(N_{r_{2}}N_{r_{1}}-N_{r_{2}}N_{r_{1}}\frac{T(p_{2})}{E(T|D^{*})}\right)
\]

Simplifying yields,

\[
\frac{1}{N_{r_{1}}}\left(1+N_{r_{2}}\right)\left(1+\frac{T(p_{2})}{E(T|D^{*})}N_{r_{2}}\right)>1+N_{r_{2}}\left(\frac{m-\frac{T(p_{2})}{E(T|D^{*})}}{m-1}\right)
\]

and alternatively,%

\[
1+\frac{T(p_{2})}{E(T|D^{*})}N_{r_{2}}+N_{r_{2}}+\frac{T(p_{2})}{E(T|D^{*})}\left(N_{r_{2}}\right)^{2}-N_{r_{1}}>N_{r_{1}}N_{r_{2}}\left(\frac{m-\frac{T(p_{2})}{E(T|D^{*})}}{m-1}\right)
\]

Solving for $m$ results in,%

\begin{equation}
m>\frac{1+\frac{T(p_{2})}{E(T|D^{*})}N_{r_{2}}(1-N_{r_{1}})+N_{r_{2}}+\frac{T(p_{2})}{E(T|D^{*})}\left(N_{r_{2}}\right)^{2}-N_{r_{1}}}{1+\frac{T(p_{2})}{E(T|D^{*})}N_{r_{2}}+N_{r_{2}}+\frac{T(p_{2})}{E(T|D^{*})}\left(N_{r_{2}}\right)^{2}-N_{r_{1}}\left(1+N_{r_{2}}\right)}=\frac{\left(1-N_{r_{1}}+N_{r_{2}}\right)\left(1+\frac{T(p_{2})}{E(T|D^{*})}N_{r_{2}}\right)}{\left(1+N_{r_{2}}\right)\left(1-N_{r_{1}}+\frac{T(p_{2})}{E(T|D^{*})}N_{r_{2}}\right)}\label{eq:m-threshold-for-RVV-gt-0}
\end{equation}

Comparing \ref{eq:m-threshold-for-RVV-gt-0} with the check of $\kappa_{r_{2}}<1$
condition \ref{eq:diff-kappas-lt-Xu-1} %
the threshold here is $\left(1+\frac{T(p_{2})}{E(T|D^{*})}N_{r_{2}}\right)$
times more restrictive. This will be lower, for lower values of $N_{r_{2}}$
and $\frac{T(p_{2})}{E(T|D^{*})}$.

\begin{fact}
\label{fact:Xuk-Xlk-properties}$\frac{\partial(X_{u\kappa}-X_{l\kappa})}{\partial T(p_{2})}<0$
and $\frac{\partial^{2}(X_{u\kappa}-X_{l\kappa})}{\partial T(p_{2})^{2}}>0$
\end{fact}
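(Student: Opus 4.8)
The plan is to reduce both derivative claims to a single elementary monotonicity fact by exploiting a partial-fraction decomposition. First I would strip away the constant $E(T\mid D^{*})$ by setting $t=\tfrac{T(p_{2})}{E(T\mid D^{*})}$; since $\partial t/\partial T(p_{2})=1/E(T\mid D^{*})>0$, the signs of $\partial/\partial T(p_{2})$ and $\partial^{2}/\partial T(p_{2})^{2}$ agree with those of $\partial/\partial t$ and $\partial^{2}/\partial t^{2}$. Writing $a=N_{r_{2}}>0$ and $b=1-N_{r_{1}}\in(0,1)$, the definitions in \ref{eq:limit-on-kappa-r2} and \ref{eq:limit-on-kappa-ar2} become $X_{l\kappa}=\tfrac{1-t}{1+at}$ and $X_{u\kappa}=\tfrac{1-t}{b+at}$, and taking a common denominator gives
\[
X_{u\kappa}-X_{l\kappa}=(1-t)\frac{1-b}{(b+at)(1+at)}.
\]
The key move is partial fractions: one checks that this equals $\tfrac{1}{a}\bigl[\tfrac{a+b}{b+at}-\tfrac{a+1}{1+at}\bigr]$ (the factor $1-b=N_{r_{1}}$ cancels cleanly). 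I would verify the coefficients by evaluating $1-t=A(1+at)+B(b+at)$ at $t=-b/a$ and $t=-1/a$, yielding $A=\tfrac{a+b}{a(1-b)}$ and $B=-\tfrac{a+1}{a(1-b)}$, and confirm the form at $t=0$ where both sides give $\tfrac{1-b}{b}$.

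Once in partial-fraction form, differentiation is termwise and trivial. I would obtain
\[
f'(t)=-\Bigl[\tfrac{a+b}{(b+at)^{2}}-\tfrac{a+1}{(1+at)^{2}}\Bigr],\qquad
f''(t)=2a\Bigl[\tfrac{a+b}{(b+at)^{3}}-\tfrac{a+1}{(1+at)^{3}}\Bigr],
\]
where $f=X_{u\kappa}-X_{l\kappa}$. Thus $f'<0$ is equivalent to $\tfrac{a+b}{(b+at)^{2}}>\tfrac{a+1}{(1+at)^{2}}$, and $f''>0$ is equivalent to $\tfrac{a+b}{(b+at)^{3}}>\tfrac{a+1}{(1+at)^{3}}$. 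Both inequalities compare the auxiliary function $h_{k}(s)=\tfrac{a+s}{(s+at)^{k}}$ at $s=b$ versus $s=1$, for $k=2$ and $k=3$ respectively. I would prove the single lemma that $h_{k}$ is strictly decreasing in $s$ on the relevant range: $h_{k}'(s)=(s+at)^{-(k+1)}\bigl[a(t-k)-(k-1)s\bigr]$, which is negative whenever $t<k$ and $s>0$. Since in this case $t=\tfrac{T(p_{2})}{E(T\mid D^{*})}<1$ by Proposition \ref{prop:For-a-point-added-after-D*} (the branch $M(D')>M(D^{*})$), both $t<2$ and $t<3$ hold, so $h_{2}$ and $h_{3}$ are decreasing; because $b=1-N_{r_{1}}<1$ we get $h_{k}(b)>h_{k}(1)$, which is exactly what is needed.

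The one genuinely nonobvious point — the main obstacle — is that in each bracketed comparison both effects push in opposite directions: the factor with $s=b$ has the \emph{smaller} numerator ($a+b<a+1$) but also the \emph{smaller} denominator ($b+at<1+at$), so the sign of the difference is not visible by inspection. The monotonicity lemma is precisely the statement that the denominator effect (raised to the power $k\ge2$) dominates the linear numerator effect, and the sign of $h_{k}'$ makes this rigorous in one stroke for both derivatives at once. I would therefore present the lemma first, then read off $f'<0$ and $f''>0$ as the cases $k=2,3$. A final remark worth including is that the argument in fact establishes the first claim for all $t<2$ and the second for all $t<3$, so the conclusion is robust well beyond the feasible range $0<t<1$ used here.
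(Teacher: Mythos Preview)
Your proof is correct and takes a genuinely different route from the paper. The paper writes $X_{u\kappa}-X_{l\kappa}=(1-t)\,\frac{N_{r_{1}}}{(1+at)(b+at)}$ (in your notation) and then differentiates this product directly, producing long multi-term expressions for the first and second derivatives whose signs are asserted by inspection; for the first derivative this is clear since both summands are negative, but for the second derivative one of the displayed numerator contributions, $-N_{r_{2}}^{2}(1-t)$, is negative, so the positivity claim is not self-evident from the paper's display and requires further checking.

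Your partial-fraction decomposition $\tfrac{1}{a}\bigl[\tfrac{a+b}{b+at}-\tfrac{a+1}{1+at}\bigr]$ sidesteps this entirely: termwise differentiation gives clean differences $h_{k}(b)-h_{k}(1)$ with $h_{k}(s)=(a+s)/(s+at)^{k}$, and the single monotonicity lemma $h_{k}'(s)=\bigl(a(t-k)-(k-1)s\bigr)/(s+at)^{k+1}<0$ for $t<k$ dispatches both $k=2$ and $k=3$ at once. This is cleaner, more rigorous for the second-derivative sign, and, as you note, actually proves more than is needed (the first claim for all $t<2$, the second for all $t<3$). The paper's approach has the minor advantage of not requiring the partial-fraction identity to be verified, but that verification is routine and you handle it correctly.
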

\begin{proof}
$X_{u\kappa}-X_{l\kappa}=\frac{1-\frac{T(p_{2})}{E(T|D^{*})}}{1-N_{r_{1}}+\frac{T(p_{2})}{E(T|D^{*})}N_{r_{2}}}-\frac{1-\frac{T(p_{2})}{E(T|D^{*})}}{1+\frac{T(p_{2})}{E(T|D^{*})}N_{r_{2}}}=\left(1-\frac{T(p_{2})}{E(T|D^{*})}\right)\frac{1+\frac{T(p_{2})}{E(T|D^{*})}N_{r_{2}}-\left(1-N_{r_{1}}+\frac{T(p_{2})}{E(T|D^{*})}N_{r_{2}}\right)}{\left(1+\frac{T(p_{2})}{E(T|D^{*})}N_{r_{2}}\right)\left(1-N_{r_{1}}+\frac{T(p_{2})}{E(T|D^{*})}N_{r_{2}}\right)}=\left(1-\frac{T(p_{2})}{E(T|D^{*})}\right)\frac{N_{r_{1}}}{\left(1+\frac{T(p_{2})}{E(T|D^{*})}N_{r_{2}}\right)\left(1-N_{r_{1}}+\frac{T(p_{2})}{E(T|D^{*})}N_{r_{2}}\right)}$

$\frac{\partial(X_{u\kappa}-X_{l\kappa})}{\partial T(p_{2})}=\frac{-N_{r_{1}}}{\left(1+\frac{T(p_{2})}{E(T|D^{*})}N_{r_{2}}\right)\left(1-N_{r_{1}}+\frac{T(p_{2})}{E(T|D^{*})}N_{r_{2}}\right)}-\frac{N_{r_{2}}\left(1-\frac{T(p_{2})}{E(T|D^{*})}\right)\left[1+\frac{T(p_{2})}{E(T|D^{*})}N_{r_{2}}+1-N_{r_{1}}+\frac{T(p_{2})}{E(T|D^{*})}N_{r_{2}}\right]}{\left(1+\frac{T(p_{2})}{E(T|D^{*})}N_{r_{2}}\right)^{2}\left(1-N_{r_{1}}+\frac{T(p_{2})}{E(T|D^{*})}N_{r_{2}}\right)^{2}}<0$

$\frac{\partial^{2}(X_{u\kappa}-X_{l\kappa})}{\partial T(p_{2})^{2}}=\frac{N_{r_{2}}\left(1+\frac{T(p_{2})}{E(T|D^{*})}N_{r_{2}}\right)+N_{r_{2}}\left(1-N_{r_{1}}+\frac{T(p_{2})}{E(T|D^{*})}N_{r_{2}}\right)}{\left(1+\frac{T(p_{2})}{E(T|D^{*})}N_{r_{2}}\right)^{2}\left(1-N_{r_{1}}+\frac{T(p_{2})}{E(T|D^{*})}N_{r_{2}}\right)^{2}}+\frac{N_{r_{2}}\left[1+\frac{T(p_{2})}{E(T|D^{*})}N_{r_{2}}+1-N_{r_{1}}+\frac{T(p_{2})}{E(T|D^{*})}N_{r_{2}}\right]-N_{r_{2}}^{2}\left(1-\frac{T(p_{2})}{E(T|D^{*})}\right)}{\left(1+\frac{T(p_{2})}{E(T|D^{*})}N_{r_{2}}\right)^{2}\left(1-N_{r_{1}}+\frac{T(p_{2})}{E(T|D^{*})}N_{r_{2}}\right)^{2}}+$

$+\frac{2N_{r_{2}}\left(1-\frac{T(p_{2})}{E(T|D^{*})}\right)\left[1+\frac{T(p_{2})}{E(T|D^{*})}N_{r_{2}}+1-N_{r_{1}}+\frac{T(p_{2})}{E(T|D^{*})}N_{r_{2}}\right]^{2}}{\left(1+\frac{T(p_{2})}{E(T|D^{*})}N_{r_{2}}\right)^{4}\left(1-N_{r_{1}}+\frac{T(p_{2})}{E(T|D^{*})}N_{r_{2}}\right)^{4}}>0$

$\frac{\partial^{2}(X_{u\kappa}-X_{l\kappa})}{\partial T(p_{2})^{2}}>0$
\end{proof}
\textbf{Summary of the the case where $M(D')>M(D^{*})$}

Facts \ref{fact:UnderB3-Xl-lt-Xu} and \ref{fact:Xl-gt-Xu-B4} provide
a $S.C.$ for a range of $N_{r_{2}}$ where a $\Delta_{S}>0$ is not
possible because $X_{u}<X_{l}$ as in inequality chain \ref{eq:viable-inequality-kappa-lt-1}
and \ref{eq:eq:viable-inequality-kappa-gt-1}. With assumptions under
B4 as long as $T(p_{1})<E(T|D^{*})$ there is at least a small range
starting at $N_{r_{2}}=0$ where $\Delta_{S}>0$ is not possible.
The inequalities \ref{eq:viable-inequality-kappa-lt-1} and \ref{eq:eq:viable-inequality-kappa-gt-1}
also provide restrictions on $M_{ar_{2}}$ for point $r_{2}$ to be
viable, and restrictions on $M_{r_{2}}$ for $\Delta_{S}>0$. 

Facts \ref{fact:Xu(0)=00003D1 FD-lt-1}, \ref{fact:B4-Xu-FD-lt-1},
and \ref{fact:kappa-lt-1-only} provide conditions where $\kappa_{ar_{2}}<1$
for all $N_{r_{2}}$, and thus conditions where points with $\kappa_{ar_{2}}>1$
are not viable. 

When $\kappa_{r_{2}}<1$ it indicates that 1) an extended equilibrium
is not possible and 2) that the alternative function for $\Delta_{S}$
(when $\kappa_{r_{2}}>1$) will not apply. 

Fact \ref{fact:KappaDif-gt-Xuk-Xlk} shows that the narrow range between
$X_{u}$ and $X_{l}$ or between $X_{u\kappa}$ and $X_{l\kappa}$
is even narrower owing to the relation between $\kappa_{ar_{2}}$
and $\kappa_{r_{2}}$ as determined by equation \ref{eq:RVV}. 

\par\end{flushleft}
\typeout{get arXiv to do 4 passes: Label(s) may have changed. Rerun}
\end{document}